\title{\textbf{Parallel Derandomization for Coloring}}
\author{\textbf{Sam Coy}\thanks{E-mail: S.Coy@warwick.ac.uk. Department of Computer Science, University of Warwick, UK.
Research supported in part by the Centre for Discrete Mathematics and its Applications (DIMAP),
by an EPSRC studentship, and
by the Simons Foundation Award No. 663281 granted to the Institute of Mathematics of the Polish Academy of Sciences for the years 2021--2023.}
\\
University of Warwick
\and
\textbf{Artur Czumaj}\thanks{E-mail: A.Czumaj@warwick.ac.uk. Department of Computer Science and Centre for Discrete Mathematics and its Applications (DIMAP), University of Warwick, UK. Research supported in part by the Centre for Discrete Mathematics and its Applications, by EPSRC award EP/V01305X/1, by a Weizmann-UK Making Connections Grant, by an IBM Award, and by the Simons Foundation Award No. 663281 granted to the Institute of Mathematics of the Polish Academy of Sciences for the years 2021--2023.}
\\
University of Warwick
\and
\textbf{Peter Davies-Peck}\thanks{\red{TODO}}
\\
Durham University
\and
\textbf{Gopinath Mishra}\thanks{E-mail: Gopinath@nus.edu.sg. Department of Computer Science, National University of Singapore. The work was done when Gopinath Mishra was a Post Doctoral Fellow at the University of Warwick. The research was supported in part by the Centre for Discrete Mathematics and its Applications (DIMAP), by EPSRC award EP/V01305X/1, and by the Simons Foundation Award No. 663281 granted to the Institute of Mathematics of the Polish Academy of Sciences for the years 2021--2023.}
\\
National University of Singapore
}
\author{
\begin{tabular}[t]{c c c c}
\textbf{Sam Coy}\thanks{E-mail: S.Coy@warwick.ac.uk. Department of Computer Science, University of Warwick, UK.
Research supported in part by the Centre for Discrete Mathematics and its Applications (DIMAP),
by an EPSRC studentship, and
by the Simons Foundation Award No 663281 granted to the Institute of Mathematics of the Polish Academy of Sciences for the years 2021-2023.} &
\textbf{Artur Czumaj}\thanks{E-mail: A.Czumaj@warwick.ac.uk. Department of Computer Science and Centre for Discrete Mathematics and its Applications (DIMAP), University of Warwick, UK. Research supported in part by the Centre for Discrete Mathematics and its Applications (DIMAP), by EPSRC award EP/V01305X/1, by a Weizmann-UK Making Connections Grant, by an IBM Award, and by the Simons Foundation Award No. 663281 granted to the Institute of Mathematics of the Polish Academy of Sciences for the years 2021--2023.} &
\textbf{Peter Davies-Peck}\thanks{E-mail: Peter.W.Davies@durham.ac.uk. Department of Computer Science, Durham University, UK.} &
\textbf{Gopinath Mishra}\thanks{E-mail: Gopinath@nus.edu.sg. Department of Computer Science, National University of Singapore. The work was done when Gopinath Mishra was a Post Doctoral Fellow at the University of Warwick. The research was supported in part by the Centre for Discrete Mathematics and its Applications (DIMAP), by EPSRC award EP/V01305X/1, and by the Simons Foundation Award No. 663281 granted to the Institute of Mathematics of the Polish Academy of Sciences for the years 2021--2023.} \\
\small University of Warwick &
\small University of Warwick &
\small Durham University &
\small National University of Singapore
\end{tabular}
}
\date{}
\begin{document}

\maketitle

\begin{abstract}
Graph coloring problems are among the most fundamental problems in parallel and distributed computing, and have been studied extensively in both settings. In this context, designing efficient \emph{deterministic} algorithms for these problems has been found particularly challenging.

In this work we consider this challenge, and design a novel framework for derandomizing algorithms for coloring-type problems in the \emph{Massively Parallel Computation} (MPC) model with sublinear space. We give an application of this framework by showing that a recent $(degree+1)$-list coloring algorithm by Halldorsson et al. (STOC'22) in the LOCAL model of distributed computation can be translated to the MPC model and efficiently derandomized. Our algorithm runs in $O(\log \log \log n)$ rounds, which matches the complexity of the state of the art algorithm for the $(\Delta + 1)$-coloring problem.
\end{abstract}

\newpage
\section{Introduction}
\label{sec:intro}

The \emph{Massively Parallel Computation} (\MPC) model, introduced over a decade ago by Karloff \etal \cite{KSV10}, is \comments{a contemporary standard theoretical model} for parallel algorithms. \comments{The model has evolved through the successful emulation of parallel and distributed frameworks, including but not limited to MapReduce \cite{mapreduce}, Hadoop \cite{hadoop}, Dryad \cite{dryad}, and Spark \cite{spark}. Drawing inspiration from classical models of parallel computation (e.g., PRAM) and distributed models (e.g., \CONGESTEDC), MPC incorporates numerous shared attributes with these established paradigms}. We focus on the \emph{sublinear local space} MPC regime, where machines have local space $\lspace = O(n^{\phi})$ for any arbitrary \comments{fixed} constant $\phi \in (0,1)$, where $n$ is the number of nodes in the graph. \comments{Note that $\phi$ is a parameter of the model}. This model has attracted a lot of attention recently, see, e.g., \cite{ANOY14,ASSWZ18,BKM20,BKS17,BBDFHKU19,BHH19,CFGUZ19,CC22,CDPsparse,CDP21,CDP20,CLMMOS18,GGKMR18,GKU19,GU19,GSZ11,LMOS20}. 
Recent works have provided algorithms for fundamental graph problems such as connectivity, approximate matching, maximal matching, maximal independent set, and $(\Delta+1)$ coloring.

While the main focus on \MPC algorithms has been typically concentrating on the design of randomized algorithms, an area which has seen a recent surge in interest in the \MPC model is the design of \emph{deterministic algorithms} which are as efficient as their randomized counterparts. In related models of distributed computation this task is \comments{sometimes} impossible: there are lower bounds in the \LOCAL model which separate the randomized complexity from the deterministic complexity of several fundamental algorithmic problems (see, e.g., \cite{CKP19}). However, a combination of the global co-ordination and of the computational power of individual machines within the \MPC model allows for a broader selection of tools for derandomizing \MPC algorithms. For example, \comments{a particularly effective tool in this context} is the \emph{method of conditional expectations}, which has been recently used to derandomize several algorithms for fundamental problems in the \MPC model\cite{BKM20,CC22,CDP21,FGG22}.

In this paper we develop a general derandomization framework, providing a useful tool for translating some class of randomized \LOCAL algorithms to deterministic \MPC in a black-box manner. Previous works have either been heavily tailored to specific algorithms \cite{BKM20,CC22,CDP21,FGG22} or have had severe limitations on graph degree \cite{CDPcompstab}. Our approach allows generic derandomization of a large class of algorithms over a much larger degree range.

As an application of our derandomization framework, we consider the task of designing fast deterministic \MPC algorithms for graph coloring --- one of the most fundamental algorithmic primitives, extensively studied in various settings for several decades. Graph coloring problems have been playing \comments{a prominent} role in distributed and parallel computing, not only because of their numerous applications, but also since some variants of coloring problems naturally model typical symmetry breaking problems, as frequently encountered in decentralized systems (see, e.g., \cite{BE13} for an overview of early advances).
Parallel graph coloring has been studied since the 1980s \cite{BE13,Karloff85}, and nowadays $(\Delta+1)$-coloring and $(2\Delta-1)$-edge-coloring\footnote{\emph{$(\Delta+1)$-coloring} problem is to color a graph of maximum degree $\Delta$ using $\Delta+1$ colors; \emph{$(2\Delta-1)$-edge-coloring} problem is to color the edges of a graph of maximum degree $\Delta$ using $2\Delta-1$ colors.} are considered among the most fundamental graph problems --- benchmark problems in the area (throughout the paper, $\Delta$ refers to the maximum degree of the input graph).

In particular, we study the parallel complexity of a natural generalization of the $(\Delta+1)$-coloring problem, the problem of \textbf{\emph{(degree+1)-list coloring} (\DILC)}. In the \DILC problem, for a given graph $G = (V,E)$, each node has an input palette of acceptable colors of size one more than its degree, and the objective is to find a proper coloring using these palettes. The recent increasing interest in \DILC has been largely caused by the generality and applicability of \DILC, since, for example, given a partial solution to a $(\Delta+1)$-coloring problem, the remaining coloring problem on the uncolored nodes is an instance of \DILC. It also appears as a subproblem in more constrained coloring problems, e.g., as a subroutine in distributed $\Delta$-coloring algorithms (see, e.g., \cite{FHM23}) and in edge-coloring algorithms (see, e.g., \cite{Kuhn20}). 

\hide{
While it is easy to design a simple linear-time (sequential) greedy algorithm for \DILC, the parallel and distributed complexity of \DILC is less well understood. Clearly, the problem is not easier than the $(\Delta+1)$-coloring problem and its variant the $(\Delta + 1)$-list coloring problem
\footnote{In $(\Delta + 1)$-list coloring each node has a palette of $\Delta+1$ many colors for its disposal (rather than $\deg+1$ many, as in~\DILC).}, but the challenge of dealing with nodes having color palettes of greatly different sizes seems to make the problem significantly more difficult. Still, it has been observed that by using techniques developed in \cite{FHK16,Kuhn20}, one can deterministically reduce \DILC to $(\Delta + 1)$-list coloring with only an $O(\log\Delta)$ multiplicative and $O(\log^*n)$ additive overhead in the number of rounds. However the logarithmic complexity gap is still significant and until very recently, this gap has been elusive for the most efficient distributed and parallel algorithms for \DILC. The first advance (in the distributed setting) has come only very recently, when 
Halld{\'o}rsson, Kuhn, Nolin, and Tonoyan \cite{hknt_local_d1lc} presented a randomized $O(\log^3\log n)$-rounds distributed algorithm for \DILC in the \LOCAL distributed model. 
Following this work, Halld{\'o}rsson, Nolin, and Tonoyan \cite{HNT22} extended the framework and showed that \DILC can be solved (w.h.p.) in $O(\log^5\log n)$-rounds in the distributed \CONGEST model. 
In the most recent advance, Coy \etal \cite{CCDM23} obtained a deterministic constant-time rounds algorithm for \DILC in the \CONGESTEDC distributed model.
All these bounds for \DILC match the state-of-the-art complexity for the simpler $(\Delta+1)$-coloring problem in \LOCAL \cite{CLP20}, \CONGEST \cite{HKMT21}, and \CONGESTEDC \cite{CDP20} models, respectively.}


\subsection{Our contribution}
\label{subsec:our-contribution}

In this paper, we develop a framework for generic derandomization of \LOCAL algorithms in the deterministic low-space \MPC model. This framework uses pseudorandom generators and the method of conditional expectations, which are known techniques, but significantly extends prior work in generality and applicability.

Our main derandomization theorem is \Cref{lem:fullderand}. Since this theorem involves some technical definitions, we leave the statement to \Cref{sec:normalderand}. However, informally, the result shows that any randomized \LOCAL algorithm that can be decomposed into short subprocedures with some natural properties can be efficiently derandomized in \MPC.

As an application of our framework, we show that \DILC can be solved efficiently deterministically in the \emph{Massively Parallel Computation} (\MPC) model with sublinear local space, matching the complexity of the state-of-the-art \MPC algorithms for the simpler $(\Delta+1)$-coloring and $(\Delta+1)$-list coloring problems. We first show how to combine the \DILC framework for the \LOCAL model due to Halld{\'o}rsson \etal \cite{hknt_local_d1lc} with the techniques developed in earlier works on the \MPC model, to obtain a randomized \MPC algorithm for \DILC working in $O(\log\log\log n)$ rounds, w.h.p. Then we apply our novel derandomization framework to derandomize this algorithm.

\hide{
\paragraph*{MPC model}
The \emph{Massively Parallel Computation} (\MPC) model, introduced over a decade ago by Karloff \etal \cite{KSV10}, is a nowadays standard theoretical model for parallel algorithms. The model has been developed on the basis of its successful modeling of parallel and distributed frameworks such as MapReduce \cite{mapreduce}, Hadoop \cite{hadoop}, Dryad \cite{dryad}, and Spark \cite{spark}, and it shares many similarities with classical models of parallel computation (e.g., PRAM) and distributed models (e.g., \CONGESTEDC). In this paper, we focus on the \emph{sublinear local spacer} MPC regime, in which machines have local space $\lspace = O(n^{\phi})$ for any arbitrary constant $\phi \in (0,1)$, where $n$ is the number of nodes in the graph. This model has attracted a lot of attention recently, see, e.g., \cite{ANOY14,ASSWZ18,BKS17,BHH19,BKM20,BBDFHKU19,CC22,CFGUZ19,CDP20,CDP21,CDPsparse,CLMMOS18,GGKMR18,GKU19,GU19,GSZ11,LMOS20}. Recent works have provided many algorithms for fundamental graph problems such as connectivity, approximate matching, maximal matching, maximal independent set, and $(\Delta+1)$ coloring.

\sam{Insert here: introduction to derandomization in \MPC?}

\orange{An area which has seen a recent surge in interest in the \MPC model is the design of deterministic algorithms which are as efficient as their randomized counterparts. In related models of distributed computation this is often impossible: there are lower bounds in the \LOCAL model which separate the randomized complexity from the deterministic complexity of several fundamental algorithmic problems\dots
The global co-ordination possible within the \MPC model provides some tools for derandomizing \MPC algorithms. A parallel implementation of the use of the \emph{method of conditional expectations} has been used to derandomize several algorithms for fundamental problems in the \MPC model \cite{CC22,FGG22}\samaside{I'm aware there are several other results which should be cited here\dots}.}

\sam{Insert text about the use of PRGs in MPC derandomization}

\sam{Introduce framework, give result}
}

\begin{theorem}[\textbf{\DILC}]
\label{thm:main:deter}
Let $\spacexp \in (0,1)$ be an arbitrary constant. There exists a deterministic algorithm that, for every $n$-node graph $G=(V,E)$, solves the \DILC problem using $O(\log\log\log n)$ rounds, in the sublinear local space \MPC model with local space $\lspace = O(n^{\spacexp})$ and global space $O(m+n^{1+\spacexp})$.
\end{theorem}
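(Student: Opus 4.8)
The plan is the two-stage strategy announced in \Cref{subsec:our-contribution}: first obtain a \emph{randomized} low-space \MPC algorithm for \DILC running in $O(\log\log\log n)$ rounds w.h.p., and then invoke our derandomization framework (\Cref{lem:fullderand}) to make it deterministic without asymptotic loss.

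For the randomized algorithm, I would take the $O(\log^3\log n)$-round \LOCAL algorithm for \DILC of Halld{\'o}rsson \etal~\cite{hknt_local_d1lc} and port it to low-space \MPC using two standard ingredients. First, a \emph{sparsification / shattering} preprocessing: a small number of rounds of random color trials (slack generation followed by random-color attempts per vertex) so that, w.h.p., the still-uncolored vertices induce components whose relevant $O(1)$- or $\log^{O(1)}\log n$-hop neighborhoods have size $n^{o(1)}$, while every remaining palette is still strictly larger than the residual degree, so the leftover instance is again an instance of \DILC. Second, \emph{graph exponentiation}: once neighborhoods are this small, $T$ rounds of a \LOCAL algorithm can be simulated in $O(\log T)$ \MPC rounds by having each machine collect its $T$-hop ball; applying this to the $O(\log^3\log n)$-round bound of \cite{hknt_local_d1lc} gives $O(\log\log\log n)$ \MPC rounds, and the residual per-component instances fit on one machine and are finished locally. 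The thresholds in the shattering step are chosen so that all balls collected have $O(n^{\spacexp})$ vertices and the total space stays $O(m + n^{1+\spacexp})$.

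For the derandomization, I would check that the randomized procedure above meets the hypotheses of \Cref{lem:fullderand}: it decomposes into a bounded number of \emph{short} subprocedures --- the slack-generation and random-color trials of the shattering phase, the colorful-matching and put-aside subroutines and random-color steps inside \cite{hknt_local_d1lc}, and the final small-component coloring --- each of which uses only per-vertex randomness with $O(1)$-hop dependency and comes with a progress guarantee (expected number of vertices newly colored, expected drop in uncolored degree, expected slack created, or expected success of the exponentiation-based simulation) expressible as a sum over vertices/edges of functions of $O(1)$-hop random choices. For such subprocedures \Cref{lem:fullderand} yields a deterministic \MPC simulation with only $O(1)$ round overhead: the relevant expectation is preserved (up to tolerable error) by a distribution from a short-seed pseudorandom generator / limited-independence family whose $\log^{O(1)} n$-bit seed is then fixed by the method of conditional expectations in chunks of $\Theta(\spacexp\log n)$ bits, each chunk fixed in $O(1)$ \MPC rounds with the conditional expectation evaluated in parallel and aggregated. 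Composing over the subprocedures gives $O(\log\log\log n)$ deterministic rounds, and the space bounds are those of the framework together with the first-stage construction.

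The step I expect to be the main obstacle is the verification underlying the derandomization: \cite{hknt_local_d1lc} is fairly intricate, and one must show that \emph{every} randomized component --- including those introduced by the \MPC adaptation, whose guarantees are themselves high-probability events over the color choices --- fits the framework's template, i.e., its progress/failure functional is simultaneously (1) a low-sensitivity sum of terms each computable on a single machine and (2) guaranteed w.h.p. under a distribution that can be generated from a seed short enough to be derandomized within the $O(n^{\spacexp})$ space budget. A secondary difficulty is the interaction between the shattering thresholds in stage one and the ball sizes in the exponentiation step: the degree must be reduced aggressively enough that all collected neighborhoods fit in $O(n^{\spacexp})$ local space, yet this must be achieved while keeping the round count at $O(\log\log\log n)$ and the global space at $O(m + n^{1+\spacexp})$.
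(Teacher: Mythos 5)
Your high-level plan (randomized \MPC port of \cite{hknt_local_d1lc}, then derandomize via \Cref{lem:fullderand}) matches the paper's announced strategy, but two of your concrete mechanisms are not the ones the paper uses, and one of them would fail.

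First, the high-degree regime. Your only degree-control mechanism is ``shattering'' via random color trials, with thresholds ``chosen so that all balls collected have $O(n^{\spacexp})$ vertices.'' Random color trials reduce the number of \emph{uncolored} vertices, but they do not reduce the \emph{maximum degree} of the instance from, say, $n^{0.9}$ down to $\sqrt{\lspace}=n^{\phi/2}$ --- and the algorithm of \cite{hknt_local_d1lc} cannot even begin until $\Delta^2\le \lspace$, since computing sparsity and the almost-clique decomposition requires a node's $2$-hop neighborhood on one machine. The paper closes this gap with a separate, fully deterministic recursive partitioning (\textsc{LowSpacePartition}): nodes and colors are hashed into $n^{\delta}$ bins so that each sub-instance is again a valid \DILC instance of degree $\Delta n^{-\delta}$, the recursion has $O(1)$ depth, and only $O(1)$ sets of base cases must be solved sequentially. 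Your proposal has no substitute for this step.

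Second, the round count and the derandomization. You obtain $O(\log\log\log n)$ rounds by graph-exponentiating the full $O(\log^3\log n)$-round \LOCAL algorithm, and you propose to derandomize the result. This is exactly the approach the framework is designed to avoid: derandomizing a normal $(\nr,\Delta)$-round procedure costs $\Delta^{\Theta(\nr)}$ local space (the PRG must fool tests that simulate $\nr$-hop balls), so it is only feasible for constant $\nr$; collecting $\log^{O(1)}\log n$-hop balls is out of reach for the degree range $[\log^7 n, n^{7\delta}]$. The paper instead uses only the $O(\log^* n)$-round pre-shattering part of \cite{hknt_local_d1lc}, decomposed into constant-round normal procedures with strong/weak success properties; nodes failed by the PRG are \emph{deferred} (not shattered into components), the instance restricted to deferred nodes is again \DILC by self-reducibility, and $O(1)$ recursive passes leave $n^{o(1)}$ nodes to color greedily. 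The $O(\log\log\log n)$ bound then comes not from exponentiation but from the deterministic polylogarithmic-degree algorithm of \cite{CDP21} applied to the residual low-degree graph. Relatedly, your hedge toward limited-independence families does not work here (these algorithms effectively need $\Delta$-wise independence, which is why the paper uses PRGs), and a $\log^{O(1)}n$-bit seed fixed in $\Theta(\log n)$-bit chunks would exceed the round budget; the paper's PRG seed is only $\Theta(\log\Delta)$ bits and is fixed in $O(1)$ rounds.
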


Observe that the bound in \Cref{thm:main:deter} matches the state-of-the-art bound for the complexity of the simpler $(\Delta+1)$-coloring problem in the (sublinear local space) \MPC model (see \cite{CFGUZ19} for the randomized bound and \cite{CDP21} for the deterministic bound). Furthermore, the recently developed framework connecting the complexity of \LOCAL and sublinear local space \MPC algorithms (see \cite{CDPcompstab,GKU19}), provides some evidence that our upper bound cannot be asymptotically improved, unless the complexity of the $(\Delta+1)$-coloring problem is 
$(\log\log n)^{o(1)}$
in the \LOCAL model. This is because \cite{CDPcompstab,GKU19} show that for a class of component stable algorithms and conditioned on the so-called 1-vs-2-cycles conjecture, no sublinear local space \MPC algorithm can run faster than the logarithm of the complexity of \LOCAL algorithms. (Still, even conditioned on the 1-vs-2-cycles conjecture, it might be conceivable that a non-component stable randomized \MPC sublinear local space algorithm can solve $(\Delta+1)$-coloring in $o(\log\log\log n)$ rounds, and further, we do not have any good enough \LOCAL lower bounds for coloring, and so maybe an 
$(\log\log n)^{o(1)}$-rounds
\LOCAL algorithm is possible.)
%
Finally, notice that Roughgarden \etal \cite{RVW18} showed that proving any super-constant lower bound in the sublinear local space \MPC for any problem in \model{P} would separate $\model{NC}^1$ from \model{P}, making any \emph{unconditional} super-constant (strongly sublinear local space) \MPC lower bound unlikely.

\subsection{Technical overview}
\label{subsubsec:technical-contribution}


A natural approach to design efficient \MPC algorithms is to simulate a \LOCAL algorithm 
using the so-called graph exponentiation approach (see, e.g., \cite{ASSWZ18,Ghaffari16,GU19,LW10}): in the first round each node learns its 2-hop neighborhood, then its 4-hop neighborhood, and \comments{so on}. In some cases we can even obtain deterministic algorithms in low-space \MPC which are exponentially faster than the corresponding \emph{randomized} complexity in \LOCAL~\cite{CDPcompstab}. However, doing so is often not easy, and faces two main challenges: the first is how to efficiently derandomize algorithms, and the second is that large neighborhoods may not fit onto machines for high-degree instances, which renders challenging many common subroutines in \LOCAL algorithms (e.g., in the coloring setting, computing an almost-clique decomposition).

We address the first problem, of derandomization, by developing a general derandomization framework for transferring some class of randomized \LOCAL algorithms to deterministic low-space \MPC ones. Our framework relies on the use of pseudorandom generators (PRGs) combined with the method of conditional expectations. The approach of using PRGs in derandomization has been used in the past (see, e.g., \cite{CDPcompstab,CDP21}), but faces a major difficulty in general application. This difficulty is that the stringent constraint of the local space requirement of sublinear local space \MPC causes PRGs to fail on a non-trivial proportion of nodes, even if the base randomized procedure succeeds with high probability. Furthermore, attempts to defer or change the outputs of these failed nodes can cause a chain reaction of nodes failing to meet the success requirements of the randomized procedure. To derandomize $(\Delta+1)$-coloring, \cite{CDP21} addressed this issue by painstakingly analyzing the effects of PRG failures throughout the course of the (highly complex) base randomized algorithm, but this was highly \comments{tailored} to the specific algorithm and not easily generalizable. \cite{CDPcompstab} did provide a form of general framework, but this required collecting the neighborhoods needed for entire \LOCAL algorithms onto machines, and therefore only worked for low values of $\Delta$. These low-$\Delta$ examples sufficed for \cite{CDPcompstab} since the aim was to show that component-stable lower bounds could be surpassed in some cases, but did not aid derandomization of general~algorithms.

We overcome these difficulties by providing a suitable framework that formalizes (see \Cref{prng-derandomizable}) a collection of properties that allow us to fully derandomize randomized algorithm using the PRGs. The main technical difference from \cite{CDPcompstab} is that rather than derandomizing full algorithms in one go, we instead decompose them into short subprocedures with certain useful properties. We then have the space necessary to derandomize these subprocedures on individual machines, allowing us to tackle much \comments{higher-degree instances}.

The framework and its analysis summarized in \Cref{lem:fullderand} form the main technical contribution of our work. We hope that our framework, and in particular, the key lemma (\Cref{lem:fullderand}), will prove useful as a powerful black-box derandomization technique in \MPC.

We apply this framework to \DILC as our main application. However, for instances with very high degree, we must still reduce the degree before we can fit even $1$-hop neighborhoods onto machines. To do so, we employ a \emph{deterministic recursive sparsification} approach similar to \cite{CDP21,CDP20}, where we repeatedly partition an instance of \DILC with maximum degree $\Delta$ into $n^\delta$ \DILC instances, each with maximum degree $\Delta / n^\delta$. Here $\delta$ with $0 \le \delta \le \phi$ is to be fixed later, and $\phi$ is our local space parameter, i.e., $\lspace = O(n^\phi)$. All but one of these instances are \emph{valid} \DILC instances and so can be solved \comments{using this recursive sparsification if the degree is still too high}, and the final instance can only be solved when it is determined which colors are unused in the other instances. In this way, we can reduce the maximum degree of the \DILC instances which we have to solve to an arbitrarily small polynomial in $n$, which is small enough to apply the derandomization framework.

\subsection{Related work}
\label{sec:related-work}

\subsubsection*{Parallel derandomization.}
Our work relies on some sparsification and derandomization techniques developed for parallel and distributed coloring algorithms. Derandomization techniques such as the method of conditional expectations are long-studied and well-understood in classical sequential computation, and were first employed in linear-space \MPC (actually, in the mostly equivalent \CONGESTEDC model) by \cite{CPS20}. This was followed by a work extending the techniques to low-space \MPC \cite{CDPsparse}. These works both applied the method of conditional expectations to families of bounded-independence hash functions, a technique for reducing the length of random seeds required for randomized algorithms. However, they are only applicable to algorithms that do not heavily exploit the independence of random choices.

Many algorithms (including all known sublogarithmic \LOCAL coloring algorithms) do not appear to have this property, and their analyses effectively use $\Delta$-wise independence or higher, which is too high to efficiently apply families of bounded-independence hash functions. So, some subsequent works on parallel derandomization instead employ pseudorandom generators (PRGs) to reduce the seed space. PRGs are again a well-studied topic in classical computing (see, e.g., \cite{Vadhan12} for an introduction), and do not have the restriction on independence. The drawback is that PRGs achieving optimal parameters are only known existentially, and computing them requires exponential-time computation --- but this is generally permitted in \MPC, and even if not, the PRG can be pre-computed sequentially and hard-coded onto \MPC machines. \cite{CDP21} used PRG-based derandomization to give an $O(\log\log\log n)$-round deterministic low-space \MPC algorithm for $\Delta+1$-coloring, and \cite{CDPcompstab} used it for general derandomization, in order to demonstrate that the method of conditional expectations could be exploited to surpass the component-stable lower bounds of \cite{GKU19}. However, as mentioned, the derandomization results therein were only for low-degree instances.

Implementations of the method of conditional expectations have also recently been used for derandomization in related distributed models, 
see e.g., \cite{BKM20, GHK18, Kuhn20}. 

\subsubsection*{Parallel and distributed coloring.}
Our application to \DILC continues a long line of research studying the parallel and distributed \comments{complexity} of graph coloring problems. For the references to earlier work on distributed coloring algorithms we refer to the monograph by Barenboim and Elkin \cite{BE13} (see also the influential papers by Linial \cite{Linial87,Linial92}). We will discuss here only more recent advances (and final results) for the four most relevant coloring problems, $(\Delta+1)$-coloring, $(\Delta+1)$-list-coloring, \DILC, and $\Delta$-coloring. \comments{After extensive research in the field of distributed computing concerning the $(\Delta+1)$-coloring problem, we understand its complexity for the \LOCAL, \CONGESTEDC, and also for the \MPC model}. For \CONGESTEDC (and also for \MPC with linear memory, $\lspace = O(n)$), we \comments{now know} how to solve $(\Delta+1)$-coloring in a constant number of rounds, see \cite{chang2018optimal,CDP20}. For the \LOCAL model, after a very long line of research, the current state of the art upper bound for randomized algorithm is $\widetilde{O}(\log^2\log n)$ \cite{chang2018optimal, RG20}.~\footnote{$\widetilde{O}(f)$ hides a  polynomial term in $\log f$.}~\footnote{\comments{Through out this paper, $\log^k x$ denotes $(\log x)^k$.}} The best know deterministic algorithm is of $\widetilde{O}(\log ^2 n)$ rounds\cite{GG23} in the \LOCAL model.

For the sublinear local space \MPC, it is known that the $(\Delta+1)$-coloring algorithm due to Chang et al. \cite{CFGUZ19} can be combined with the network decomposition result of \cite{RG20} to obtain a randomized $O(\log\log\log n)$-round \MPC algorithm, which is currently the state-of-the-art for the problem; this result was derandomized by Czumaj \etal \cite{CDP21}. Furthermore, all algorithms mentioned above for $(\Delta+1)$-coloring can be extended to solve also $(\Delta+1)$-list-coloring.

For the \DILC problem, which is a generalization of $(\Delta+1)$-coloring and $(\Delta+1)$-list-coloring, there have not been many comparable bounds until the very recent work of Halld{\'o}rsson \etal \cite{hknt_local_d1lc}.  In particular, \DILC  admits a randomized $\widetilde{O}(\log^2\log n)$-round distributed algorithm in the \LOCAL model \cite{hknt_local_d1lc,GG23} matching the state-of-the-art complexity for the $(\Delta+1)$-coloring problem \cite{chang2018optimal,GG23}. In \cite{hknt_local_d1lc}, the authors significantly extended the \comments{approaches} for $(\Delta+1)$-coloring (in particular, \comments{to allow efficient management of nodes} of various degrees).
As a byproduct, the framework of Halld{\'o}rsson \etal \cite{hknt_local_d1lc} can be incorporated into a constant-round \MPC algorithm assuming the local \MPC space is slightly \emph{superlinear}, i.e., $O(n \log^4n)$ \cite[Corollary 2]{hknt_local_d1lc}.
We make extensive use of the framework laid out by Halld{\'o}rsson \etal \cite{hknt_local_d1lc} in their algorithm for \LOCAL in the design of our \DILC algorithm.
A similar approach has been applied recently for the \CONGEST model in \cite{HNT22, GK21}, solving \DILC in $\widetilde{O}(\log^3\log n)$ \CONGEST rounds, w.h.p. Very recently, a deterministic constant-round algorithm for \DILC has been obtained for the \CONGESTEDC model \cite{CCDM23}, settling the complexity of \DILC for \CONGESTEDC.



\hide{
\Artur{This and next paragraph are to be skipped.}In comparison to the perhaps more famous coloring problems listed above, relatively little has been written about the problem of $\Delta$-coloring in the parallel and distributed setting. In the \LOCAL model, the problem was first discussed in a paper by Panconesi and Srinivasan \cite{local_nature_delta_coloring}, and that algorithm was recently revisited and improved by Ghaffari \etal \cite{DBLP:journals/dc/GhaffariHKM21}.
In distributed models with communication bandwidth restrictions (and so, the setting closer to the model we consider here), the first breakthrough came only recently, in a paper by Fischer, Halld{\'o}rsson, and Maus \cite{FHM23}. They gave a \CONGEST algorithm which gives a $\Delta$-coloring of the graph in $O(\poly \log \log n)$ rounds if $\Delta = \omega(\log^3 n)$, which decreases to $O(\log^* n)$ rounds if $\Delta = \omega(\log^{21} n)$. Their result for large values of $\Delta$ was a substantial improvement over the state-of-the-art in any distributed model ($O(\log n)$ rounds previously being the best known), and combined with the result given by \cite{DBLP:journals/dc/GhaffariHKM21} for sub-polylogarithmic values of $\Delta$, they also achieved a $O(\poly\log\log n)$ round algorithm for $\Delta$-coloring in the \LOCAL model.

We are not aware of any direct prior work on $\Delta$-coloring in any model of parallel computing. An $O(\log n)$ round algorithm for \PRAM follows straightforwardly from the $O(\log n)$-round MIS algorithm of Luby \cite{Luby86}: one can find an MIS and defer coloring those nodes, the remaining nodes all gain at least $1$ slack, and therefore form a $(\Delta+1)$-coloring instance. This algorithm can be transformed into a $O(\log n)$ round \MPC algorithm, using well-known results relating to the simulation of \PRAM algorithms in \MPC. We note that the complexity of our algorithm represents a double-exponential improvement over this.}

\hide{\subsection*{Notations}

\begin{itemize}
\item $G(V,E)$-- Graph with $n$ nodes.

\item $d_{G}(v)$--- The degree of $v$ in $G$.

\item $N_{G}(v)$--- The neighborhood of $v$ in $G$.

\item $G[W]$--- The subgraph of $G$ induced by $W \subseteq V$.

\item $O(n^\spacexp)$--- The space bound of each machine \peter{I changed this because the parts from \cite{CDP20} already use $\delta$ for something else, and we'll need to compare $\delta$ and $\eps$ later.}
\end{itemize}
} 

\hide{
\subsection{Paper overview}
\label{sec:overview}

\Anote{To be updated.}%
In \Cref{sec:preliminaries} we give some notation and preliminaries on distributed and parallel coloring.
In \Cref{sec:randomized_d1lc_intuition} we \Artur{TBD}
In \Cref{sec:derandomization-framework} we present our framework for efficient derandomization of coloring problems, and give an example application.
In \Cref{sec:deterministic_d1lc} we show how our framework can be used to transform the randomized \LOCAL algorithm for \DILC given by Halld\'orsson \etal \cite{hknt_local_d1lc} into a deterministic \MPC algorithm which runs in exponentially fewer rounds, within a certain degree range.
In \Cref{sec:d1lc_for_all_degree_ranges} we show that we can extend this algorithm to a deterministic \DILC algorithm for all degree ranges.
\Artur{We should mention that some arguments and proofs are deferred to Appendix.}
}

\section{Preliminaries}
\label{sec:preliminaries}

\label{sec:setting}



\subsection{Notations, the problem, and the model}
For $k \in \mathbb{N}$, $[k]$ denotes the set $\{1,\ldots,k\}$. For $a,b \in \mathbb{N}$, $[a,b]$ denotes the set of integers in $\{a,a+1,\ldots,b\}$. We consider a graph
$G=(V,E)$ with $V$ as the node set and $E$ as the edge set with $|V|=n$ and $|E|=m$.
The set of neighbors of a node $v$ is denoted by $N(v)$ and the degree of a node $v$ is denoted by $d(v)$. For a node $v$, $\Psi(v)$ denotes the list of colors in the color palette of node $v$ and $p(v)$ denotes the size of $\Psi(v)$. The maximum degree of any node in $G$ is denoted by $\Delta$. As we go on coloring the nodes of the graph $G$, the graph will change and the color palettes of the nodes will also change. Often, we denote the current (rather than the input) graph by $G$. For all graphs we consider, we have $p(v)\geq d(v)+1$. For a subset $X \subseteq V$, $G[X]$ denotes the subgraph induced by $X$ and $m(X)$ denotes the number of edges in $G[X]$.

\subsubsection*{Degree+1 list coloring (D1LC).}
%

The \emph{degree+1 list coloring (D1LC) problem} is for a given graph $G = (V,E)$ and given color palettes $\Psi(u)$ assigned to each node $u \in V$, such that $\size{\Psi(u)} \geq d(u)+1$, the objective to find a proper coloring of nodes in $G$ such that each node is assigned to a color from its color palette and no edge in $G$ is monochromatic.

\subsubsection*{Massively Parallel Computation model.}
We consider the \emph{Massively Parallel Computation} (\MPC) model, which is a parallel system with some number of machines, each of them having some local space~\lspace. 
At the beginning of computation, each machine receives some part of the input, with the constraint that it must fit within its local space. In our case, for the \DILC problem, the input is a set of $n$ nodes, $m$ edges, and $n$ color palettes of total size $O(n+m)$. Hence we will require that the number of machines is $\Omega(\tfrac{n+m}{\lspace})$, for otherwise the input would not fit the system. The computation on an \MPC proceeds in synchronous rounds. In each round, each machine processes its local data and performs an arbitrary local computation on its data without communicating with other machines. At the end of each round, machines can exchange messages. Each message is sent only to a single machine specified by the machine that is sending the message. All messages sent and received by each machine in each round have to fit into the machine's local space. Hence, the total number of messages sent by any machine and received by any machine is bounded by \lspace, and the total amount of communication across the whole \MPC is bounded by \lspace times the number of machines. At the beginning of the next round, each machine can process all messages received in the previous round. When the algorithm terminates, machines collectively output the solution.


Observe that if a single machine can store the entire input, then any problem (like, e.g., \DILC) can be solved in a single round, since no communication is required. In order for our algorithms to be as scalable as possible, normally one wants to consider graph problems in the \emph{sublinear local space regime}, where local space $\lspace = n^{\spacexp}$ for any given constant $\spacexp \in (0,1)$. (There has been some research considering also the case when $\lspace = \Theta(n)$, or even when $\lspace = n^{1+\spacexp}$ (in which case one wants to study the case that $\lspace \ll m$) but we will not consider such setting in the current paper.) We will require that the number of machines is not significantly more than required, 
specifically that it is $\widetilde{O}(n+\frac{m}{\lspace})$ (note that the optimal amount would be $\widetilde{O}(\frac{n+m}{\lspace})$, but our algorithm requires the ability to assign a machine to each node).
A major challenge in the design of \MPC algorithms in the sublinear local space regime is that the local space of each machine is (possibly) not sufficient to store all the edges incident to a single node. This constraint naturally requires an \MPC algorithm to rely on extensive communication between machines, and most of the techniques known are based on some graph sparsification. It is important to note here that even in the sublinear local space regime, the \MPC model is known \cite{GSZ11} to be stronger than the PRAM model, e.g., it is known that sorting\footnote{Here we consider sorting of $N$ objects on an \MPC with local space $N^{\gamma}$ and on $N^{1-\gamma}$ machines, for any constant $\gamma>0$.} (and in fact, many related tasks, like prefix sum computation) can be performed in a constant number of rounds, even deterministically, see \cite{GSZ11}. Observe that with this tool, we can gather nodes' neighborhoods to contiguous blocks of machines, and learn their degrees, in $O(1)$ rounds, and that we can assume, without loss of generality, that the input can be distributed arbitrarily on the first $\Theta(\tfrac{n+m}{\lspace})$ \MPC machines.


\subsection{{An overview of \LOCAL \DILC algorithm of  Halld\'orsson \etal \cite{hknt_local_d1lc}}}
\label{sec:overview_of_hknt}

In this section, we give a detailed overview of the  work of \cite{hknt_local_d1lc}. Among several other contributions, \cite{hknt_local_d1lc} give an algorithm which, in the \LOCAL model of distributed computing, given an input instance of $(\deg + 1)$-list coloring {for a graph $G$}, colors all nodes of $G$ having degrees at least $\log ^7 n$  in $O(\log^* n)$ rounds. We give an overview of their algorithm here, because our algorithm in \MPC to color all nodes in the degree range $\left[\log^7 n, n^{7\delta}\right]$ uses steps of the algorithm in~\cite{hknt_local_d1lc}, where $\delta$ is set to be sufficiently smaller than $\phi$ (recall that $\phi$ is our local space parameter, i.e., $\lspace=O(n^\phi)$).

The \LOCAL algorithm for \DILC due to Halld\'orsson \etal \cite{hknt_local_d1lc} handles the input graph in ``ranges'' of degree. It begins by coloring vertices with degrees in the range $[\log^7 n, n]$, followed by vertices with degrees in the range $[\log^7\log n, \log^7 n]$ vertices in the range $[\log^7\log\log n, \log^7 \log n]$, and so on, giving $O(\log^* n)$ ranges overall. The coloring algorithm runs in $O(\log^* n)$ rounds for each range.
However, for ranges after the first the algorithm does not color all nodes with high probability, leaving a set of ``bad'' nodes. Some standard ``shattering'' arguments are used (showing that these bad vertices form small components), and then these bad vertices can be colored using a deterministic \DILC algorithm. The post-shattering coloring step on the second degree range ($[\log^7\log n, \log^7 n]$) is the bottleneck in the algorithm, and gives the overall complexity of $\widetilde{O}(\log^2 \log n)$ \LOCAL rounds.~\footnote{The actual bound obtained by \cite{hknt_local_d1lc} is $O(\log ^3 \log n)$, and when this is combined with the result in\cite{GG23}, one gets the round complexity to be $\widetilde{O}(\log ^2 \log n)$.}


The algorithm of \cite{hknt_local_d1lc} (for the degree range $[\log^7 n, n]$) requires several careful arguments which rely on nodes being able to calculate certain parameters, in particular the following:



\begin{definition}[Parameters from \cite{hknt_local_d1lc}]
\label{def:parameters}
The following node-parameters are used in the coloring algorithm of \cite{hknt_local_d1lc}:
\begin{itemize}
\item \emph{Slack} $\slack{v} = p(v) - d(v)$: The difference between a the size of the palette of a node and its degree. 
\item \emph{Sparsity} $\sparsity{v} = \frac{1}{d(v)} \cdot \left[ \binom{d(v)}{2} - m(N(v)) \right]$~\footnote{{$m(N(v))$ denotes the number of edges in the subgraph induced by $N(v)$}.}: The proportion of non-edges between neighbors of $v$. 
\item \emph{Disparity} $\disparity{u}{v} = |\Psi(u) \setminus \Psi(v)| / |\Psi(u)|$: The (proportional) difference between the palettes of $u$ and $v$.
\item \emph{Discrepancy} $\discrepancy{v} = \sum_{u \in N(v)} \discrepancy{u, v}$: The sum of disparities between $v$ and its neighbors.
\item \emph{Unevenness} $\unevenness{v} = \sum_{u \in N(v)} \frac{\max(0, d(u) - d(v))}{d(u) + 1)}$: A quantity that relates to how many of the neighbors of $v$ have much higher degree than $v$.
\item \emph{Slackability} $\slackability{v} = \discrepancy{v} + \sparsity{v}$, \emph{Strong Slackability} $\sslackability{v} = \unevenness{v} + \sparsity{v}$: How easy it is to create slack for a node.
\end{itemize}
\end{definition}

The coloring algorithm \cite{hknt_local_d1lc} (for a single degree range) is outlined in  \Cref{alg:color_middle_degrees}. As described, the algorithm first computes \emph{almost-clique decomposition} (ACD). After computing ACD, each vertex of the graph will be classified as either as a sparse vertex or a uneven vertex or a dense vertex. Then the we call \textsc{ColorSparse} to color the vertices that are either sparse or uneven, and then we call  \textsc{ColorDense} to color the dense vertices. In \Cref{sec:acd}, we discuss how to compute ACD and classify the vertices into sparse and dense vertices. In \Cref{sec:sparse} and \Cref{sec:dense}, we discuss how to color the sparse an dense vertices respectively. There are additional subroutines, called by \textsc{ColorSparse} and \textsc{ColorDence}, that we will discuss from \Cref{sec:acd}--\Cref{sec:dense}.


\begin{algorithm}[H]
\caption{${\text{\sc ColorMiddle}}$: Colors nodes with degrees in $[\log^7 n, n^{7\delta}]$ {\small (extracted from \cite{hknt_local_d1lc})}}
\label{alg:color_middle_degrees}

Compute an almost-clique decomposition (ACD). 

{\sc ColorSparse} (\Cref{alg:color_sparse}).

{\sc ColorDense} (\Cref{alg:color_dense}).
\end{algorithm}


\subsubsection{Computing an almost-clique decomposition}
\label{sec:acd}

First, as already mentioned, the authors of \cite{hknt_local_d1lc} compute an \emph{almost-clique decomposition} of the input graph. The almost-clique decomposition is as follows:

\begin{definition}[{$(\mbox{deg}+1)$-ACD}~\cite{AlonA20, hknt_local_d1lc}]
\label{def:acd}
Let $G=(V,E)$ be a graph and $\veps_{ac}, \veps_{sp} \in (0,1)$ be constants. A partition of $V$ into $V_{\text{sparse}} \sqcup V_{\text{uneven}} \sqcup V_{\text{dense}}$, such that $V_{\text{dense}}$ can be partitioned into $C_1, \ldots, C_t$ for some $t$, is said to be an almost-clique decomposition for $G$ if


\begin{enumerate}
\item[(i)] Every $v \in V_{\text{sparse}}$ is $\veps_{sp}  d(v)$-sparse;
\item[(ii)] Every $v \in V_{\text{uneven}}$ is $\veps_{sp} d(v)$-uneven;
\item[(iii)] For every $i \in [t]$ and $v \in C_i$, $d(v)\leq (1+\veps_{ac})\size{C_i}$;
\item[(iv)] For every $i \in [t]$ and $v \in C_i$, $\size{C_i} \leq (1+\veps_{ac})\size{N(v) \cap C_i}.$
\end{enumerate}
\end{definition}

Intuitively:

\begin{itemize}
\item $V_{\text{sparse}}$: A set of (sufficiently) \emph{sparse} nodes. A node is $\varepsilon$-sparse if there are $\varepsilon \cdot d(v)$ many non-edges between neighbors of $v$.
\item $V_{\text{uneven}}$: A set of \emph{uneven} nodes. A node is uneven if many of its neighbors have a much higher degree than $v$. Formally, a node $v$ is $\eta$-uneven if $\eta_v \geq \sum_{u \in N(v)} \frac{\max(0, d(u) - d(v))}{d(u) + 1}$.

\item $V_{\text{dense}}=C_1\sqcup \dots \sqcup C_t$: Each $C_i$ is an \emph{almost-clique}. Intuitively, almost-cliques are very well connected parts of the graph. Formally, for some $\varepsilon_{sc}$ and for all almost-cliques $C_i$ and all $v \in C_i$, $d(v) \leq (1 + \varepsilon_{ac}) \size{C_i}$, and $\size{C_i} \leq (1 + \varepsilon_{ac}) \size{N(v) \cap C_i}$.
\end{itemize}

After computing an ACD and partitioning the nodes into $V_{\text{sparse}}$, $V_{\text{uneven}}$ and $V_{\text{dense}}$,  nodes in $V_{\text{sparse}} \sqcup V_{\text{uneven}}$ are colored first by \textsc{ColorMiddle} (\Cref{alg:color_middle_degrees}) by subroutine \textsc{ColorSparse} and then the nodes in $V_{\text{dense}}=C_1 \sqcup \dots \sqcup C_t$ are colored hy subroutine \textsc{ColorDense}. 

\subsubsection{Coloring sparse nodes}
\label{sec:sparse}

Before discussing about coloring the sparse nodes, we first discuss about a subroutine  {\sc SlackColor} (\Cref{alg:slack_color}) which is used by Step 2 and 3 of \Cref{alg:color_middle_degrees} by the subroutines \textsc{ColorSparse} and \textsc{ColorDense}, respectively. 

\subsubsection*{SlackColor}
\label{subsubsec:alg-SlackColor}
\textsc{SlackColor} colors all nodes which have slack linear in their degree (and the degree is large enough) in $O(\log^*n)$ rounds. The authors of \cite{hknt_local_d1lc}  point out that similar results were shown in \cite{sw10-distributed-symmetry-breaking}: in this paper we focus on simulating and derandomizing the algorithm {\sc SlackColor} given in \cite{hknt_local_d1lc}.
Note that {\sc SlackColor} takes two parameter $s_{min}$ and $\kappa$ as input where $s_{min}$ is an integer that lies between $1$ and the minimum amount of slack of any node (that we want to color) and $\kappa$ lies between $1/s_{min}$ and $1$. \footnote{The success probability of \textsc{SlackColor} depends on $s_{min}$ and $\kappa$. Also, \cite{hknt_local_d1lc} uses \textsc{SlackColor} only when $s_{min}=\Omega(\ell)$ and $\kappa$ is a suitable constant, where \mbox{$\ell=\log^{2.1}\Delta$ .}}.

\begin{algorithm}[H]

\caption{\textsc{SlackColor}$(s_{min})$, for node $v$, from \cite{hknt_local_d1lc}}
\label{alg:slack_color}

Do ${\text{\sc TryRandomColor}}(v)$ for $O(1)$ rounds.

If $\slack{v} < 2d(v)$ then terminate.


Let $\rho\gets s_{\min}^{1/(1+\kappa)}$. \textit{\small \hfill //$s_{min}$ is globally known such that $1<s_{min}\leq\min_{v}s(v)$ and $\kappa$ is an input parameter such that $1/s_{min}<\kappa \leq 1$.}

\For{$i$ from $0$ to $\log^* \rho$}
{
$x_i \gets 2 \uparrow \uparrow i$ \textit{\small \hfill // {$2 \uparrow \uparrow i$ refers to iterated exponentiation, i.e., $2 \uparrow \uparrow 0 = 1$ and $2 \uparrow \uparrow (i + 1) = 2^{2 \uparrow \uparrow i}$}.}

$\text{{\sc MultiTrial}}(x_i)$ $2$ times

If $d(v) > \slack{v}/\min(2^{x_i},\rho^\kappa)$ then terminate.
}

\For{$i$ from $1$ to $\lceil 1/\kappa \rceil$}
{
$x_i \gets \rho^{i \cdot \kappa}$.

$\text{{\sc MultiTrial}}(x_i)$ $3$ times.

If $d(v) > \slack{v}/\min(\rho^{(i+1) \cdot \kappa},\rho)$ then terminate.
}

$\text{{\sc MultiTrial}}(\rho)$.
\end{algorithm}

{\sc SlackColor} (\Cref{alg:slack_color}) requires two auxiliary subroutines \textsc{TryRandomColor} (\Cref{alg:try_random_color}) and \textsc{MultiTrial} (\Cref{alg:multi_trial}) which we describe below.

\paragraph*{Auxiliary subroutine TryRandomColor}
\label{subsubsec:alg-TryRandomColor}
In {\sc TryRandomColor}, each participating node $v$ selects a color $\psi_v$ from its current color palette uniformly at random. $v$ colors itself with $\psi_v$ if none of the neighbors of $v$, whose colors conflict with $v$, selects $\psi_v$.

\begin{algorithm}[H]
\caption{{\sc TryRandomColor}(node $v$) from \cite{hknt_local_d1lc}}
\label{alg:try_random_color}
Pick $\psi_v$ u.a.r. from $\Psi(v)$.

Send $\psi_v$ to each $u \in N(v)$, receive the set $T = \{\psi_u : u \in N^+(v)\}$, where $N^+(v)$ is the set of neighbours whose colors ``conflict'' with $v$.

If $\psi_v \not\in T$ then permanently color $v$ with $\psi_v$.

Send and receive permanent colors, and remove the received one from $\Psi(v)$
\end{algorithm}

\paragraph*{Auxiliary subroutine MultiTrial}
\label{subsubsec:alg-MultiTrial}
\textsc{MultiTrial} is, informally speaking, a generalized version of \textsc{TryRandomColor} where each participating node selects $x$ number of colors from its palette uniformly at random.

\begin{algorithm}[H]
\caption{$\text{\sc{MultiTrial}}(x)$ from \cite{hknt_local_d1lc}}
\label{alg:multi_trial}
$v$ picks a set $X_v$ of $x$ random colors in its palette $\Psi(v)$, sends them to neighbours.

\If{$\exists \psi \in X_v$ such that $\forall u \in N(v), \psi \not \in X_v$}
{
Permanently color $v$ with some such $\psi$ and broadcast to $N(v)$.
}
\end{algorithm}

\subsubsection*{Coloring sparse or even nodes}
\label{subsubsec:ColorSparse}

The first coloring task is to color all nodes which are \emph{sparse} or \emph{uneven} (that are in $V_{\text{sparse}}$ or $V_{\text{uneven}}$). The outline of the subroutine which does this {\sc ColorSparse} (\Cref{alg:color_sparse}). This is a recursively subroutine that uses an auxiliary subroutine {\sc GenerateSlack} (\Cref{alg:generate_slack}).

\begin{algorithm}[H]
\caption{{\sc ColorSparse} from \cite{hknt_local_d1lc}}
\label{alg:color_sparse}
Identify set of nodes $V_{\text{start}} \subset V_{\text{sparse}}$.

{\sc GenerateSlack} in $G\left[(V_{\text{sparse}} \cup V_{\text{uneven}}) \setminus V_{\text{start}}\right]$.

{\sc SlackColor} $V_{\text{start}}$.

{\sc SlackColor} $V_{\text{sparse}} \setminus V_{\text{start}}$ and $V_{\text{uneven}}$.
\end{algorithm}

\paragraph*{Auxiliary subroutine GenerateSlack}
\label{subsubsec:alg-GenerateSlack}

The challenge in {\sc ColorSparse} (\Cref{alg:color_sparse}) is that all nodes at the start of the algorithm are only guaranteed to have constant slack: in the (easier) $(\Delta + 1)$-coloring problem, nodes of low-degree are guaranteed to have slack linear in their degree. The authors of \cite{hknt_local_d1lc} overcome this by identifying some subset of the nodes in $V_{\text{sparse}}$ for which it is hard to generate slack: they call this set $V_{\text{start}}$. They then generate some slack using a simple primitive ({\sc GenerateSlack}, \Cref{alg:generate_slack}) for nodes in $(V_{\text{sparse}} \cup V_{\text{uneven}}) \setminus V_{\text{start}}$.

\begin{algorithm}[H]
\caption{{\sc GenerateSlack} from \cite{hknt_local_d1lc}}
\label{alg:generate_slack}
$S \gets$ each node $v$ is sampled into $S$ independently with probability $\frac{1}{10}$.

For all $v \in S$ in parallel $\text{{\sc TryRandomColor}}(v)$.
\end{algorithm}

The authors of \cite{hknt_local_d1lc} show that nodes in $V_{\text{start}}$ have many neighbours in $(V_{\text{sparse}} \cup V_{\text{uneven}}) \setminus V_{\text{start}}$, and so nodes in $V_{\text{start}}$ have considerable \emph{temporary slack} (slack obtained from neighbors being colored later). Since $(V_{\text{sparse}} \cup V_{\text{uneven}}) \setminus V_{\text{start}}$ are exactly the nodes which do not struggle to obtain slack, they have that all sparse/uneven nodes have slack linear in their degree, and this is enough to color them in $O(\log^* n)$ rounds (by the algorithm of \cite{sw10-distributed-symmetry-breaking} or by {\sc SlackColor}).

\subsubsection{Coloring dense nodes}
\label{sec:dense}

Finally, it remains to color nodes in the almost-cliques $C_1 \sqcup \dots \sqcup C_t$. This task is solved using algorithm {\sc ColorDense} (\Cref{alg:color_dense}) \cite{hknt_local_d1lc}.

\begin{algorithm}[H]
\caption{{\sc ColorDense} from \cite{hknt_local_d1lc}}
\label{alg:color_dense}

Compute the leader $x_C$ and outliers $O_C$ for each almost-clique $C$. Let $O = \bigcup_{C \in \cC} O_C$, where $\cC$ denotes the set of all almost-cliques.

{\sc GenerateSlack}.

$P_C \leftarrow${\sc PutAside}$(C)$ for each low slack almost-clique $C$. Let $P$ denote the set of union of all such $P_C$'s.

{\sc SlackColor} $O$.

{\sc SynhColorTrial} $V _{\text{dense}} \setminus P$.

{\sc SlackColor} $V_{\text{dense}} \setminus P$.

For each $C \in C$ with low slackability, $x_C$ collects the palettes in $P_C$ and colors the nodes locally.
\end{algorithm}

First, each almost-clique $C$ selects a \emph{leader} $x_C$, i.e., the node with minimum slackability (breaking ties arbitrarily).~\footnote{See Definition~\ref{def:parameters} for the definition of slackability of a node.} The slackability of an almost-clique $C$ is the slackability of the leader $x_C$. $C$ is said to an almost-clique with low slackability if its slackability is at most $\ell=\log^{2.1} \Delta$; otherwise $C$ has high slackability. For each each almost-clique with low slackability, we need to to remove some of the nodes to help the coloring process. The details will be discussed later in this section.

The intuition is that the leader has similar palettes to a large fraction of other nodes in the almost-clique, a property which will be useful later. Then the almost-clique is split roughly in half, into \emph{inliers} and \emph{outliers}: inliers are the nodes that are similar to the leader (by some criteria that we will discuss later), and outliers are nodes which are dissimilar to the leader. In particular, since outliers are (slightly less than) half of the almost-clique, they have slack linear in their degree and can be colored in $O(\log^* n)$ rounds by {\sc SlackColor} (if we defer the nodes in the inliers to be colored later).

Next, inliers need to be colored. Since the leader is very similar to the inliers (in particular: it is connected to all inliers and has a similar palette to them), a surprisingly simple process, where the leader proposes one of its colors to each inlier (such that it proposes each color to at most one inlier), suffices. The pseudocode for this process is given in {\sc SynchColorTrial} (\Cref{alg:synch}). After {\sc SynchColorTrial} has completed, there is linear slack at each remaining inlier to apply {\sc SlackColor} and color the remaining inliers in $O(\log^* n)$ rounds.

\begin{algorithm}[H]
\caption{{\sc SynhColorTrial} for almost-clique $C$ \cite{hknt_local_d1lc}}
\label{alg:synch}
{

$x_C$ randomly permutes its pallete $\Psi(x_C)$, sends each neighbor $v \in I_C$ a distinct color $c(v)$.
\textit{\small \hfill // $I_C=C\setminus O_C$ denotes the set of inliers of the almost-clique $C$.}

Send $c(u)$ to $N(v)$, and receive the set $T=\{c(u):u \in N^{+}(v)\}$. \textit{\small \hfill // $N^+(v)$ is the set of neighbours whose colors ``conflict'' with $v$.}

If $c(v) \notin T$, then permanently color $v$ with $c(v)$.

Send and receive permanent colors, and remove the received one from $\Psi(v)$.
}
\end{algorithm}

Finally, there is an additional challenge to overcome for some almost-cliques. If the ``slackability'' (intuition: a measure of easy it is to generate slack) is low, {\sc SynchColorTrial} might not create enough slack for the remaining inliers. The authors of \cite{hknt_local_d1lc} get around this by computing a ``put-aside'' set: a set of inliers which we set aside and color at the end of the algorithm. The put-aside set is computed using {\sc PutAside} (\Cref{alg:put-aside}), and has three important properties: it is of polylogarithmic size (in each almost-clique that needs it); put-aside sets in different almost-cliques have no edges between them; and the inliers not assigned to the put-aside set each have {sufficient neighbours in the put-aside sets} which are to create sufficient slack to be colored ultrafast by {\sc SlackColor}.

\begin{algorithm}[H]
\caption{{\sc PutAside}($C$) from~\cite{hknt_local_d1lc}}
\label{alg:put-aside}
{
Sample each node in $v \in I_C$ independently with probability $p_s=\frac{\ell^2}{48 \Delta_C}$, and include in the set $S_C$, where $\ell=\log ^{2.1} \Delta$.

Let $S$ denote the union of such $S_C$'s.

return $P_C$ as the set of nodes in $S_C$ that does not have a neighbor in $S$.
}
\end{algorithm}

\section{Randomized \DILC Algorithm in \MPC}
\label{sec:randomized_d1lc_intuition}

In this section, we  argue that if the maximum degree of the input instance is not too large, then, when combined with a result of Czumaj \etal \cite{CDP21}, the \LOCAL algorithm, for \DILC due to Halld\'orsson \etal \cite{hknt_local_d1lc}, can be efficiently implemented in exponentially fewer rounds in the \MPC model. Since this \emph{randomized} implementation is rather straightforward, we will only sketch it here. However, it is not at all clear that the algorithm of  Halld\'orsson \etal \cite{hknt_local_d1lc} can be efficiently derandomized in \MPC. The main contribution of this paper is a derandomization framework (presented in \Cref{sec:derandomization-framework}) and its use for a deterministic \MPC algorithm for \DILC.

\hide{
In this section, we give an overview of the \LOCAL algorithm for \DILC of Halld\'orsson \etal \cite{hknt_local_d1lc}. We argue that it can be efficiently implemented in exponentially fewer rounds in the \MPC model, when combined with a result of Czumaj \etal \cite{CDP21}, provided that the maximum degree of the input instance is not too large. We will give a more detailed overview of the work in \Cref{sec:overview_of_hknt}.

\sam{The description of \cite{hknt_local_d1lc} omits a lot of details, but perhaps it's enough for the reader to be getting on with?}
}

\label{subsec:randomized-MPC}
Recall that the \LOCAL algorithm for \DILC due to Halld\'orsson \etal \cite{hknt_local_d1lc} handles the input graph in ``ranges'' of degree. It begins by coloring vertices with degrees in the range $[\log^7 n, n]$, followed by vertices with degrees in the range $[\log^7\log n, \log^7 n]$ vertices in the range $[\log^7\log\log n, \log^7 \log n]$, and so on, giving $O(\log^* n)$ ranges overall. Here, we consider the implementation of the (randomized) algorithm of \cite{hknt_local_d1lc} on a single range of degrees in the low-space \MPC model. We will only sketch the approach here (since this is not our main result): we give this partial result to build intuition for which aspects of \Cref{thm:main:deter} are challenging to obtain. 


\comments{Inspecting the algorithm presented in \cite{hknt_local_d1lc} unveils that, in the worst case,} each of its steps either require a node $v$ to send messages to all its neighbors (of size $O(\deg(v))$), or to calculate some value or send some messages based on the \comments{full content} of the $2$-hop neighborhood of $v$.
This requires $\lspace \geq \Delta^2$ (or equivalently, for a fixed \lspace we need to have $\Delta \leq \sqrt{\lspace}$), since a single machine has to be able to store $\deg(v)$ messages of size $O(\deg(v))$, and the $2$-hop neighborhood of any vertex. Note also that the global space required is $O(m+n^{1+\phi})$. 

If we have this amount of space relative to our maximum degree, then we can implement $1$ round of the \LOCAL algorithm of \cite{hknt_local_d1lc} in $O(1)$ rounds of low-space \MPC. Like the \LOCAL algorithm, our \MPC implementation would succeed with high probability for all nodes with degrees in $[\log^7 n, \sqrt{\lspace}]$.

Also, as mentioned earlier, there is already an algorithm in low-space \MPC (see, e.g., Czumaj \etal \cite{CDP21}) which colors \DILC instances with polylogarithmic maximum degree in $O(\log\log\log n)$ rounds deterministically.
Combining these two observations yields the following:

\begin{lemma}
\label{obs:partial_d1lc}
    Let $\phi \in (0, 1)$ be an arbitrary constant. There exists an algorithm that, for every $n$-node graph $G=(V, E)$ with maximum degree $\Delta = O(\sqrt{n^\phi})$, solves the \DILC problem using $O(\log\log\log n)$ rounds with high probability, in the sublinear local space \MPC model with local space $O(n^\phi)$ and global space $O(m+n^{1+\phi})$.
\end{lemma}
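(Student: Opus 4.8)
The plan is to separate the graph into a high-degree part and a low-degree part. For the high-degree part I would faithfully simulate the \LOCAL algorithm of Halld\'orsson \etal \cite{hknt_local_d1lc}, which colours every node of degree at least $\log^7 n$ in $O(\log^* n)$ rounds with high probability; for the low-degree remainder I would invoke the deterministic low-space \MPC routine for \DILC on polylogarithmic-degree instances due to Czumaj \etal \cite{CDP21}. A key point is that the set of nodes of degree at least $\log^7 n$ is precisely the ``first degree range'' of \cite{hknt_local_d1lc}, which is coloured \emph{completely} (with no leftover ``bad'' nodes) w.h.p.; this is exactly why the first range is cheap, in contrast to the $\widetilde{O}(\log^2\log n)$ bottleneck that \cite{hknt_local_d1lc} incur on later ranges --- which we never reach, since $\Delta = O(\sqrt{n^\phi}) \le n$.

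The crux is the simulation. I would give each node its own machine. Since $\Delta = O(\sqrt{n^\phi})$, the $2$-hop neighbourhood of any node $v$, together with the adjacency lists and palettes of $v$'s (at most $\Delta$) neighbours, has total size $O(\Delta^2) = O(n^\phi) = O(\lspace)$ and therefore fits on $v$'s machine; over all nodes this is $O(n^{1+\phi})$ global space, matching the bound in the statement. Gathering this information, and delivering for each $v$ one message of size $O(\deg(v))$ to every neighbour, moves $O(m\Delta) = O(n^{1+\phi})$ words in total and costs only $O(1)$ \MPC rounds using the sorting and prefix-sum primitives of \cite{GSZ11}. As noted just before the lemma (and checkable by inspecting \cite{hknt_local_d1lc}), every step of their single-range algorithm is either a node sending an $O(\deg(v))$-size message to each neighbour, or a node computing a value/message as a function of its $2$-hop ball; so each \LOCAL round is implementable in $O(1)$ low-space \MPC rounds, and running the entire first-range procedure takes $O(\log^* n)$ \MPC rounds and colours all nodes of degree at least $\log^7 n$, w.h.p.

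To finish, I would observe that permanently colouring a node, deleting it, and removing its colour from the palettes of its neighbours preserves the \DILC invariant $p(u) \ge d(u)+1$, so after the high-degree phase the uncoloured nodes form a valid \DILC instance whose maximum degree is below $\log^7 n$, i.e.\ polylogarithmic in $n$. Applying the deterministic \MPC algorithm of \cite{CDP21} solves this instance in $O(\log\log\log n)$ rounds within $O(n^\phi)$ local and $O(m + n^{1+\phi})$ global space. Summing, the whole procedure runs in $O(\log^* n) + O(\log\log\log n) = O(\log\log\log n)$ \MPC rounds and succeeds with high probability.

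The main obstacle is confined to the simulation argument: one must verify that no step of \cite{hknt_local_d1lc} needs information outside a $2$-hop ball or a message larger than $O(\deg(v))$, and that the associated routing and storage stay within the $O(1)$-round cost and $O(m + n^{1+\phi})$ global-space budget. Granting that, the lemma follows by composing known results, which is why the paper only sketches it.
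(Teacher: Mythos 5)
Your proposal is correct and follows essentially the same route as the paper: observe that every step of the single-range algorithm of Halld\'orsson \etal\ either sends $O(\deg(v))$-size messages to neighbours or computes from the $2$-hop ball, so with $\Delta^2 = O(\lspace)$ each \LOCAL round simulates in $O(1)$ \MPC rounds, and the leftover sub-$\log^7 n$-degree instance is finished by the deterministic polylogarithmic-degree routine of Czumaj \etal. The extra details you supply (per-node machines, routing via the sorting primitives of \cite{GSZ11}, and the self-reducibility of \DILC\ under partial colourings) are exactly the points the paper leaves implicit in its sketch.
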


Two enhancements are required to get from \Cref{obs:partial_d1lc} to our main result, \Cref{thm:main:deter}.
In \Cref{sec:derandomization-framework}, we give our framework for derandomizing coloring algorithms in \MPC, and in \Cref{sec:deterministic_d1lc} we show that we can derandomize the implementation of the algorithm of \cite{hknt_local_d1lc} for the degree range $[\log^7 n, {\lspace}^c]$ for some suitable constant $c \in (0,1/2)$.
Finally, in \Cref{sec:d1lc_for_all_degree_ranges} we show that we can, \emph{deterministically}, reduce any input instance to a collection of instances with lower degree. We ensure that the palettes of these instances are mostly distinct, so that only a constant number of instances need to be colored sequentially.

\section{Framework to Derandomize \MPC Algorithms}
\label{sec:derandomization-framework}

In this section, we present our black-box derandomization technique for coloring problems in \MPC. Our framework relies on a collection of suitable properties (normal distributed procedures, see \Cref{prng-derandomizable}) that allow us to fully derandomize randomized algorithm using a combination of pseudorandom generators (PRGs) and the method of conditional expectations.


\subsection{Normal distributed procedures}
\label{subsec:normal-distributed-procedures}

Our framework relies on the notion of normal $(\nr,\Delta)$-round randomized distributed procedure. This notion is introduced to combat the issue that a PRG that fits on a machine in low-space \MPC causes more nodes to fail than the underlying random process it is applied to. We wish to defer these failed nodes to deal with them later, but in some procedures this could cause a chain reaction of failures and result in an unsolvable instance. So, \Cref{prng-derandomizable} captures those procedures for which we will show we can safely defer failed nodes.

\Cref{prng-derandomizable} is quite technical, but is conceptually fairly simple, so we first explain its meaning before giving the formal notation. In general, the randomized distributed procedures we are interested in are algorithms that run for some number of rounds in a distributed model (in our case, \LOCAL), and for which it is shown that upon termination, a particular desirable property (which we call the ``strong success property'') holds for all nodes with high probability. In coloring problems, this property could be, for example, that all nodes have sufficient \emph{slack} (difference between remaining palette size and number of uncolored neighbors). More generally, this could include properties such as having low remaining degree (i.e. few neighbors that have not yet terminated and given their final output for the graph problem of interest).

The defining property of a \emph{normal} distributed procedure is that we are able to define a ``weak success property'' that is still sufficient for the overall algorithm to proceed, and is implied by the strong success property \emph{even if an arbitrary subset of nodes are deferred, changing their outputs from the procedure to the special \textsc{Defer} marker}. That is, if we were to run the procedure, causing all nodes to satisfy the strong success property (which will happen with high probability), and then an adversary were able to defer any subset of nodes, effectively nullifying their outputs, we would still be sure to satisfy the weak success property at all nodes. In this case, deferring does not substantially disrupt the overall algorithm, since if we continue to repeat the procedure on the deferred nodes, we will eventually reach a state in which all nodes have the required property to proceed to the next step.

This may seem like a very strong requirement for a randomized distributed procedure: indeed, it can be seen as a special case of the distributed Lov\`asz Local Lemma, and much research has been devoted to the more general case where deferring nodes can cause their neighbors to no longer satisfy any success property (see e.g. \cite{GHK18, Davies23}). However, as we will see, it applies to many useful procedures, and particularly those for coloring problems, where deferring nodes is almost always helpful and provides slack to neighbors.

\begin{definition}
\label{prng-derandomizable}
A \textbf{normal $(\nr,\Delta)$-round distributed procedure} running on a graph $G$, of maximum degree \emph{at most} $\Delta$, is a procedure in the randomized \LOCAL model satisfying the following criteria:

\begin{itemize}
\item The procedure takes $\nr$ rounds of \LOCAL.
\item At the beginning of the procedure, nodes $v$ have $O(\Delta^{\nr})$-word sets of input information $\textsc{In}_v$ associated with them.
\item During the procedure, nodes only use information from their $\nr$-hop neighborhood (i.e., from $\textsc{In}_v \cup \bigcup_{u\in N^{\nr}(v)} \textsc{In}_u\}$ and $O(\Delta^{2\nr})$ random bits, and perform $O(\Delta^{8\nr})$ computation. \footnote{$N^{\nr}(v)$ denotes the set of vertices that are with in $\nr$-hop neighborhood of $v$.}
\item The output of the procedure is a new $O(\Delta^{\nr})$-word output information $\textsc{Out}_v$ for each node, from a set of possible outputs \textsc{Out}.
\item The procedure has a \textbf{``strong success property''} (computable with $O(\Delta^{8\nr})$ computation) that determines whether it has been successful for a particular node, based on the output information of that node's $\nr$-hop neighborhood (formally, $ \textsc{SSP}_v : \comments{\textsc{Out}^{N^{\nr}(v)}} \rightarrow \{T,F\}$).
\item At the end of the procedure, for any node $v$,
\begin{align*}
    \Prob{\textsc{SSP}_v\left(\prod_{u\in N^{\nr}(v)} \textsc{Out}_u \right)} &\ge 1-\frac{1}{2n}
    \enspace.
\end{align*}
\item The procedure also has a \textbf{``weak success property''} (also computable with $O(\Delta^{8\nr})$ computation) that extends the output domain of each node to include a special \textsc{Defer} marker (not to be conferred by the procedure itself) indicating that that node will be deferred until the end of the derandomization (formally, $\textsc{WSP}_v: \{\textsc{Out}\cup \textsc{Defer}\}^{N^{\nr}(v)} \rightarrow \{T,F\} $).
\item Denote the set of nodes that do not satisfy the strong success property as $\overline{\textsc{SSP}}$. The success properties are such that if a node $v$ satisfies $\textsc{SSP}_v$, and nodes in $\overline{\textsc{SSP}}$ are deferred, $v$ still satisfies the weak success property. Formally:
\begin{align*}
    & \textsc{SSP}_v\left(\prod_{u\in N^{\nr}(v)} \textsc{Out}_u\right) 
    \implies  
     \textsc{WSP}_v\left(\prod_{u\in N^{\nr}(v)\setminus \overline{\textsc{SSP}}} \textsc{Out}_u \times \prod_{u\in N^{\nr}(v)\cap \overline{\textsc{SSP}}} \textsc{Defer}\right)
    \enspace.
\end{align*}
\end{itemize}
\end{definition}

Let us now discuss how \Cref{prng-derandomizable} can be applied. The intuition behind \Cref{prng-derandomizable} is that it captures procedures whose output properties are not damaged too much by unsuccessful nodes deferring. As an example, consider Luby's randomized algorithm for maximal independent set \cite{Luby86}. Luby's algorithm, in $O(\log n)$ rounds of \LOCAL, produces an output set that is certainly independent, and is maximal with high probability. We can define both success properties (strong and weak) for a node $v$, to be that $v$ has a node within distance $1$ in the resulting output set. Note that this only captures maximality, not independence, but that is sufficient since independence is guaranteed by the process, and a ``failed'' run only violates maximality. A standard analysis of Luby's algorithm would give, as required, that $\Prob{\textsc{SSP}_v \left(\prod_{u\in N^{\nr}(v)} \textsc{Out}_u \right)} \ge 1 - \frac{1}{2n}$, i.e., that a node $v$ is successful with high probability. Notice also that only nodes that are not in the output independent set can fail to satisfy \textsc{SSP}, by definition. Therefore, deferring such nodes does not remove any nodes from the output independent set, and so all nodes satisfying $\textsc{SSP}_v$ also satisfy $\textsc{WSP}_v$. 

The arguments above imply that Luby's algorithm is a normal $(O(\log n),\Delta)$-round distributed procedure (the other necessary properties are trivial to check). In fact, we had something stronger: we used essentially the same condition for the strong and weak success property (with the only formal difference being that the weak success property has a domain extended to include \textsc{Defer} tags). This indicates that, in this case, deferring unsuccessful nodes does not hurt us \emph{at all}. This is something we will also see in our main application to \DILC. \Cref{prng-derandomizable} is more general, though, and allows $\textsc{WSP}$ to be weaker than $\textsc{SSP}$ in addition to incorporating \textsc{Defer} tags, which can provide some leeway in derandomization.

We will argue (in \Cref{sec:derand}) that we can only efficiently derandomize normal $(\nr,\Delta)$-round distributed procedures in \MPC when $\nr$ is low (ideally constant). This means that, in most applications, it will not suffice to show that entire algorithms for a problem are normal distributed procedures; we must instead consider those algorithms as series of short subroutines, and then show that each of those subroutines is a normal $(\nr,\Delta)$-round distributed procedure. This means defining success properties that capture an appropriate level of progress after each subroutine, not just correctness of the final output. Indeed, the ability to capture subroutines in this way is one of the main generalizations of our approach over that of \cite{CDPcompstab}.

Coloring algorithms are particularly amenable to this type of analysis, since they often consist mostly of subroutines to generate \emph{slack}. That is, some short subroutine is performed, after which analysis shows that nodes' remaining palette sizes will be \comments{larger than their remaining degrees} by some amount. This measure of slack is then the success property for the subroutine. Importantly, deferring nodes (until the end of the entire coloring process) can only ever increase slack, since those deferred nodes are removed from neighbors' neighborhoods, but do not block any colors from neighbors' palettes. So, we can again define both $\textsc{SSP}_v$ and $\textsc{WSP}_v$ to be essentially the same property: that node $v$ has at least some amount of slack at the end of a subroutine. As in the example of Luby's algorithm, deferring unsuccessful nodes (in fact, in this case, even successful nodes) cannot hurt $v$ at all, and $\textsc{SSP}_v\implies \textsc{WSP}_v$ under any subset of deferrals.

Next, we discuss some known results on pseudorandom generators that we use in our derandomization framework.

\subsection{PRGs and derandomization}
\label{sec:derand}

A \emph{Pseudorandom Generator (PRG)} is a function that takes a short \emph{random seed} and produces a longer string of \emph{pseudorandom} bits, which are computationally indistinguishable from truly random bits. We use the following definition from \cite{Vadhan12} for indistinguishability:

\begin{definition}
[Definition 7.1 in \cite{Vadhan12}]
\label{def:computational-indistinguishability}
Random variables $X$ and $Y$ taking values in $\{0,1\}^m$ are $(t,\eps)$ \textbf{indistinguishable} if for every non-uniform algorithm $T:\{0,1\}^m\rightarrow \{0,1\}$ running in time at most $t$, we have $|\Pr[T(X)=1]-\Pr[T(Y)=1]| \le \eps$.
\end{definition}

Let $U_k$ denote a random variable generated uniformly at random from $\{0,1\}^k$. Then pseudorandom generators are defined as follows:

\begin{definition}[PRG, Definition 7.3 in \cite{Vadhan12}]
\label{def:PRG}
A deterministic function $\mathcal{G}:\{0,1\}^d \to \{0,1\}^m$ is an \textbf{$(t,\eps)$ pseudorandom generator (PRG)} if (1) $d < m$, and (2) $\mathcal{G}(U_d)$ and $U_m$ are $(t,\eps)$ indistinguishable.
\end{definition}

A simple application of the probabilistic method can show the existence of PRGs with optimal parameters:

\begin{proposition}[Proposition 7.8 in \cite{Vadhan12}]
\label{prop:perfect-PRG}
For all $t \in \mathbb{N}$ and $\eps>0$, there exists a (non-explicit) $(t,\eps)$ pseudorandom generator $\mathcal{G}: \{0,1\}^d \to \{0,1\}^t$ with seed length $d = \Theta(\log t+\log(1/\eps))$.
\end{proposition}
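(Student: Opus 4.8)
The plan is a standard probabilistic-method argument, establishing existence of a PRG with the claimed parameters without constructing one. I would choose $\mathcal{G}\colon \{0,1\}^d \to \{0,1\}^t$ uniformly at random, meaning that for each of the $2^d$ seeds $x \in \{0,1\}^d$ the value $\mathcal{G}(x)$ is picked independently and uniformly from $\{0,1\}^t$, and fix the seed length to $d := \lceil C'(\log t + \log(1/\eps))\rceil$ for a sufficiently large absolute constant $C'$. This $d$ is $\Theta(\log t + \log(1/\eps))$, and (whenever the statement is non-vacuous, i.e.\ $d < t = m$) condition~(1) of \Cref{def:PRG} holds; it then remains to show that with positive probability the random $\mathcal{G}$ satisfies condition~(2), i.e.\ that $\mathcal{G}(U_d)$ and $U_t$ are $(t,\eps)$-indistinguishable.

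First I would fix a single non-uniform distinguisher $T\colon \{0,1\}^t \to \{0,1\}$ running in time at most $t$, and bound the probability over the random choice of $\mathcal{G}$ that $T$ distinguishes. Writing $p := \Pr[T(U_t)=1]$ and observing that $\Pr[T(\mathcal{G}(U_d))=1] = 2^{-d}\sum_{x\in\{0,1\}^d} T(\mathcal{G}(x))$, the $2^d$ summands $T(\mathcal{G}(x))$ are, over the choice of $\mathcal{G}$, independent Bernoulli$(p)$ random variables, so a Hoeffding bound gives
\[
\Pr_{\mathcal{G}}\!\left[\ \Bigl|\,2^{-d}\!\!\sum_{x\in\{0,1\}^d}\! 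T(\mathcal{G}(x)) - p\,\Bigr| > \eps\ \right] \ \le\ 2\exp\!\bigl(-2\eps^2\, 2^d\bigr)\enspace.
\]

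Next I would take a union bound over all non-uniform algorithms running in time at most $t$. The quantitative heart of the proof is that there are at most $2^{\mathrm{poly}(t)}$ such algorithms: fixing a concrete non-uniform model (say Boolean circuits, where a time-$t$ computation is simulated by a circuit of size $O(t\log t)$, or Turing machines with $O(t)$ bits of advice run for $t$ steps), the number of distinct distinguishers is $2^{O(t\log t)}$. Hence the probability that the random $\mathcal{G}$ is distinguished by \emph{some} time-$t$ algorithm is at most $2^{O(t\log t)}\cdot 2\exp(-2\eps^2 2^d)$, and choosing $C'$ large enough so that $2^d \ge C\,t\log t/\eps^2$ for a suitable constant $C$ makes this product strictly less than $1$. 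Therefore some fixed $\mathcal{G}$ fools every non-uniform time-$t$ distinguisher to within $\eps$, which is exactly the statement (per \Cref{def:computational-indistinguishability}) that $\mathcal{G}(U_d)$ and $U_t$ are $(t,\eps)$-indistinguishable; since $d = \Theta(\log t + \log(1/\eps))$, this $\mathcal{G}$ is the desired PRG.

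I expect the only genuinely delicate point to be the enumeration of non-uniform time-$t$ distinguishers, which requires committing to a precise computational model; any reasonable choice yields the $2^{\mathrm{poly}(t)}$ count, and the union bound needs nothing sharper than this crude estimate. Everything else is the routine Chernoff-then-union-bound computation sketched above. It is worth stressing that the argument is purely existential --- it produces no efficient algorithm to compute $\mathcal{G}$, and indeed finding one seems to require searching through exponentially many candidates --- which is exactly why in the \MPC application such a PRG is either computed by brute force on a single machine or precomputed and hard-coded.
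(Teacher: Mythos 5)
Your argument is correct and is precisely the standard probabilistic-method proof of this result, which the paper does not reprove but simply imports from Vadhan (Proposition 7.8), whose proof is the same Chernoff-plus-union-bound-over-circuits computation you give. The only point worth noting is that the count of time-$t$ non-uniform distinguishers depends on the exact model, but as you observe any reasonable choice gives $2^{\mathrm{poly}(t)}$, which is all the union bound needs to yield $d = \Theta(\log t + \log(1/\eps))$.
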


As shown in \cite{CDPcompstab}, such a PRG can be computed using relatively low space (but exponential computation).

\begin{lemma}[Lemma 35 of \cite{CDPcompstab}, arXiv version]
\label{lem:prg-alg}
For all $t \in \mathbb{N}$ and $\eps>0$, there exists an algorithm for computing the $(t,\eps)$ PRG of \Cref{prop:perfect-PRG} in time $\exp(poly(t/\eps))$ and space $poly(t/\eps)$.
\end{lemma}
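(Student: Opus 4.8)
The plan is to prove \Cref{lem:prg-alg} by exhibiting a brute-force search algorithm, using the fact that \Cref{prop:perfect-PRG} is proven by the probabilistic method: its proof shows that \emph{most} functions $\mathcal{G}:\{0,1\}^d\to\{0,1\}^t$ with the stated seed length $d=\Theta(\log t+\log(1/\eps))$ are $(t,\eps)$-PRGs, and in particular at least one is. So it suffices to enumerate all such functions, test each for the PRG property, and output the first that passes. We may assume $\eps<1$ (otherwise every function is trivially a $(t,\eps)$-PRG once $d<m$ is arranged). I would fix $d:=\lceil C(\log t+\log(1/\eps))\rceil$ for the constant $C$ extracted from the proof of \Cref{prop:perfect-PRG}; alternatively one can try $d=1,2,3,\dots$ in turn and stop at the first value admitting a valid PRG, which reaches a suitable $d$ since the probabilistic argument succeeds once $d$ is large enough.

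First I would set up the enumeration. A candidate PRG $\mathcal{G}$ is specified by its truth table, a string of $t\cdot 2^d$ bits; there are $2^{t\cdot 2^d}$ of them, and since $2^d=\mathrm{poly}(t/\eps)$ this is $\exp(\mathrm{poly}(t/\eps))$ candidates, each stored in $\mathrm{poly}(t/\eps)$ space. For each candidate I would check the defining property from \Cref{def:PRG} and \Cref{def:computational-indistinguishability}: for every non-uniform algorithm $T:\{0,1\}^t\to\{0,1\}$ running in time at most $t$ (equivalently, every Boolean circuit of size $\mathrm{poly}(t)$, of which there are $2^{\mathrm{poly}(t)}$, enumerable using $\mathrm{poly}(t)$ space to hold the current description), verify $|\Pr[T(\mathcal{G}(U_d))=1]-\Pr[T(U_t)=1]|\le\eps$. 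Both probabilities are computed exactly: $\Pr[T(\mathcal{G}(U_d))=1]=2^{-d}\sum_{x\in\{0,1\}^d}T(\mathcal{G}(x))$ by iterating over the $2^d$ seeds, looking up $\mathcal{G}(x)$, simulating $T$ (cost $\mathrm{poly}(t)$ time and space), and maintaining a counter; and $\Pr[T(U_t)=1]=2^{-t}\sum_{y\in\{0,1\}^t}T(y)$ by iterating over all $2^t$ inputs similarly. Comparing the two rationals over the common denominator $2^t$ against $\eps$ is elementary. If some $T$ witnesses a gap larger than $\eps$ I discard $\mathcal{G}$ and continue; if none does, I output $\mathcal{G}$.

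Correctness is immediate from \Cref{prop:perfect-PRG}: at least one enumerated $\mathcal{G}$ is a genuine $(t,\eps)$-PRG, so the search succeeds and returns such a function with seed length $d=\Theta(\log t+\log(1/\eps))$, exactly the object promised by \Cref{prop:perfect-PRG}. For the resource bounds, the running time is (number of candidates) $\times$ (number of distinguishers) $\times$ (cost of evaluating the two probabilities), i.e. $\exp(\mathrm{poly}(t/\eps))\cdot 2^{\mathrm{poly}(t)}\cdot 2^{O(t)}=\exp(\mathrm{poly}(t/\eps))$, using $\eps<1\Rightarrow t\le t/\eps$. The space bound needs only that all working storage is reused across the (astronomically many) iterations: at any instant we hold the current candidate truth table ($\mathrm{poly}(t/\eps)$ bits), the current distinguisher description and the $\mathrm{poly}(t)$ space to simulate it, the loop indices over $\{0,1\}^d$ and $\{0,1\}^t$, and counters bounded by $2^t$ — all $\mathrm{poly}(t/\eps)$.

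The point to be careful about is the modeling of ``non-uniform algorithm running in time $t$'' and the resulting count of distinguishers. It is precisely the fact that there are only $2^{\mathrm{poly}(t)}$ such algorithms — rather than the $2^{2^t}$ arbitrary Boolean functions on $t$ bits — that makes the seed length in \Cref{prop:perfect-PRG} $\Theta(\log t+\log(1/\eps))$ rather than $\Theta(t)$, and hence lets the candidate truth tables, and the whole computation, fit in $\mathrm{poly}(t/\eps)$ space. One must fix a concrete such model (e.g.\ circuits of size at most $t$, or Turing machines with $\le t$ bits of advice running in $\le t$ steps), check that it is faithfully captured by \Cref{def:computational-indistinguishability}, and verify that each distinguisher can be described, enumerated, and simulated within the claimed bounds; the remainder is bookkeeping.
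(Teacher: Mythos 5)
Your proposal is correct and matches the intended argument: the paper itself gives no proof of \Cref{lem:prg-alg}, importing it from \cite{CDPcompstab}, where it is established by exactly the brute-force search you describe — enumerate truth tables of candidate generators with seed length $d=\Theta(\log t+\log(1/\eps))$, exhaustively verify $(t,\eps)$-indistinguishability against all $2^{\mathrm{poly}(t)}$ time-$t$ non-uniform distinguishers by exact probability computation, and output the first survivor, reusing space across iterations. Your closing caveat about fixing a concrete non-uniform model so that the distinguisher count is $2^{\mathrm{poly}(t)}$ is exactly the right point to be careful about, and it is handled the same way in the cited source.
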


Next, we show how to use PRGs to derandomize normal distributed procedures.

\subsection{Derandomizing normal distributed procedures}
\label{sec:normalderand}

In this section, we will sometimes be working on graphs $G$ which are smaller induced subgraphs of the original graph. So, we use $n_G$ to denote the number of nodes in $G$, as opposed to $n$ which will denote the original number of nodes. This distinction is primarily because the space bounds of the \MPC machines are still in terms of $n$ even when working on a smaller graph $G$. Similarly, for a node $v$, $N_G(v)$ \comments{denotes} the set of neighbors of $v$ in $G$ and $d_G(v)$ denotes the degree of $v$ in $G$.

For any integer $\ell\in \mathbb{N}$, let $G^\ell$ be the $\ell$-power of $G$, that is, $G^\ell$ is the graph having the same node set as $G$ and any pair of nodes at a distance at most $\ell$ in $G$ form an edge in $G^\ell$.

We begin with the following lemma.

\begin{lemma}
\label{lem:phasederand}
There is a constant $C$ such that, given an $O(\Delta^{8\nr})$-coloring of $G^{4\nr}$, any normal $(\nr,\Delta)$-round distributed procedure on a graph $G$ can be derandomized in $O(\nr)$ rounds of \MPC, using $\lspace=O(\Delta^{\nr C})$ local space per machine and global space $O(n_G \Delta^{\nr C})$, with the following properties:
\begin{itemize}
\item At most $(\frac 12+ \Delta^{-11\nr})n_G$  nodes are deferred.
\item All non-deferred nodes $v$ satisfy the weak success property.
\end{itemize}
\end{lemma}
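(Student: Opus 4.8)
The plan is to replace the $O(\Delta^{2\nr})$ random bits that each node uses with pseudorandom bits derived from one short common seed, show by indistinguishability that under a uniform seed only a negligible fraction of nodes (in expectation) fail the strong success property, and then pick a good seed deterministically by exhaustive search --- which is affordable exactly because the seed is short. Concretely, using \Cref{prop:perfect-PRG} and \Cref{lem:prg-alg}, fix a $(t,\eps)$-PRG $\mathcal G\colon\{0,1\}^d\to\{0,1\}^t$ with $\eps=\Delta^{-11\nr}$ and $t=\Delta^{\Theta(\nr)}$ large enough for the two uses below, so that $d=\Theta(\log t+\log(1/\eps))=\Theta(\nr\log\Delta)$ and $\mathcal G$ is evaluable in $\Delta^{O(\nr)}$ space (its $\exp(\Delta^{O(\nr)})$-time computation is permitted in \MPC, or hard-coded). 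Feed each node $v$ a length-$O(\Delta^{2\nr})$ window of $\mathcal G(\sigma)$ as its ``random'' bits, indexing the windows by $v$'s color in the supplied $O(\Delta^{8\nr})$-coloring of $G^{4\nr}$. This keeps the required output length of $\mathcal G$ at only $O(\Delta^{8\nr})\cdot O(\Delta^{2\nr})=\Delta^{O(\nr)}$ --- a crucial point, since indexing windows by raw node IDs would force a length $\ge n_G$ and hence a spurious $\log n_G$ term in the seed length --- while still guaranteeing that within any ball $N^{2\nr}(v)$, whose nodes are pairwise at distance $\le 4\nr$ and therefore get distinct $G^{4\nr}$-colors, all nodes draw from distinct windows. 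The procedure's output, and in particular each $\textsc{SSP}_v$, is now a deterministic function of the single seed $\sigma\in\{0,1\}^d$.

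For the analysis, fix a node $v$. Since each $\textsc{Out}_u$ depends only on the inputs and pseudorandom bits inside $u$'s $\nr$-hop neighborhood, $\textsc{SSP}_v$ is determined by the $\Delta^{O(\nr)}$ pseudorandom bits read by $N^{2\nr}(v)$. Let $T_v$ be the non-uniform algorithm that reads those bits from its input, simulates the procedure on the $\nr$-hop neighborhood of each of the $O(\Delta^{\nr})$ nodes of $N^{\nr}(v)$ ($O(\Delta^{8\nr})$ computation each, by hypothesis), evaluates $\textsc{SSP}_v$ ($O(\Delta^{8\nr})$ more), and outputs $1$ iff $\textsc{SSP}_v$ fails; then $T_v$ runs in time $\Delta^{O(\nr)}$, which fixes our choice of $t$. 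By $(t,\eps)$-indistinguishability and the strong success guarantee of the procedure, the probability over a uniform seed that $\textsc{SSP}_v$ fails is at most $\tfrac1{2n}+\eps$. By linearity of expectation, the expected number of nodes failing the strong success property is at most $\tfrac{n_G}{2n}+\Delta^{-11\nr}n_G\le(\tfrac12+\Delta^{-11\nr})n_G$, so some seed $\sigma^\ast$ yields a failure set $\overline{\textsc{SSP}}(\sigma^\ast)$ of at most this size.

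It remains to find such a $\sigma^\ast$ in $O(\nr)$ \MPC rounds within the stated space. The key point is that there are only $2^d=\Delta^{\Theta(\nr)}$ seeds, so we can test them all simultaneously. First, by $O(\log\nr)$ rounds of graph exponentiation and $O(1)$ rounds of sorting and aggregation \cite{GSZ11}, with routing organized via the $G^{4\nr}$-coloring so that no machine is overloaded, gather onto a dedicated block of machines, for every node $v$, a description of $N^{2\nr}(v)$ together with all its $\textsc{In}$-sets ($\le\Delta^{2\nr}$ nodes, each carrying $O(\Delta^{\nr})$ words, hence $\Delta^{O(\nr)}$ words total). Then, for all $2^d$ seeds $\sigma$ in parallel, each block simulates the procedure on its gathered neighborhood, computes the $0/1$ indicator that $\textsc{SSP}_v(\sigma)$ fails, and we sum these indicators over $v$ for each fixed $\sigma$ (a constant number of sorting rounds); select $\sigma^\ast$ minimizing this count. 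This uses $\Delta^{O(\nr)}$ words per machine and $2^d\cdot n_G\cdot\Delta^{O(\nr)}=n_G\Delta^{O(\nr)}$ words globally, which are $O(\Delta^{\nr C})$ and $O(n_G\Delta^{\nr C})$ for a large enough constant $C$, and the round count is $O(\log\nr)+O(1)=O(\nr)$. Finally fix $\sigma^\ast$, set every node's randomness accordingly, read off the outputs, and defer exactly the nodes of $\overline{\textsc{SSP}}(\sigma^\ast)$: there are at most $(\tfrac12+\Delta^{-11\nr})n_G$ of them, which is the first property, and for every non-deferred $v$ (so $\textsc{SSP}_v$ is true) the last bullet of \Cref{prng-derandomizable}, applied with $\overline{\textsc{SSP}}=\overline{\textsc{SSP}}(\sigma^\ast)$, immediately gives $\textsc{WSP}_v$ after the deferral, which is the second property.

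The main obstacle is keeping the seed length down to $d=O(\nr\log\Delta)$. This needs both that the distinguisher $T_v$ inspects only $\Delta^{O(\nr)}$ pseudorandom bits --- which holds by locality, since $\textsc{SSP}_v$ depends only on $N^{2\nr}(v)$, and because windows are indexed by a bounded number of $G^{4\nr}$-colors rather than by node IDs --- and that $T_v$ runs in time $\Delta^{O(\nr)}$, which follows from the $O(\Delta^{8\nr})$ computation bound. Only under these bounds are there few enough seeds ($2^d=\Delta^{O(\nr)}$) to brute-force within the $O(\Delta^{\nr C})$ space budget and in $O(1)$ extra rounds, as opposed to the $\Theta(d)=\Theta(\nr\log\Delta)$ rounds that a bit-by-bit application of the method of conditional expectations would cost; the remaining work --- gathering $2\nr$-hop neighborhoods onto machines without overload using the supplied $G^{4\nr}$-coloring, and the sorting-based aggregation --- is routine in constant-round \MPC.
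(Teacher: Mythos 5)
Your proof is correct and follows essentially the same route as the paper's: a $(\Delta^{\Theta(\nr)},\Delta^{-11\nr})$ PRG whose output is split into per-node windows indexed by the supplied $O(\Delta^{8\nr})$-coloring of $G^{4\nr}$ (so that all nodes in $N^{2\nr}(v)$ get disjoint windows), an indistinguishability argument treating the simulation-plus-$\textsc{SSP}_v$ evaluation as a $\Delta^{O(\nr)}$-time non-uniform distinguisher, an averaging argument bounding the expected number of failures, and deferral of the failing nodes with the last bullet of \Cref{prng-derandomizable} giving the weak success property. The one place you genuinely diverge is the seed-selection step: the paper fixes a good seed via the method of conditional expectations (as implemented for low-space \MPC in \cite{CDPsparse,CDP20}) in $O(1)$ rounds, whereas you exhaustively test all $2^d=\Delta^{\Theta(\nr)}$ seeds in parallel and take the best. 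Both are valid here precisely because the seed is short; your version is conceptually simpler and avoids invoking the conditional-expectations machinery, at the cost of a $2^d$ multiplicative blow-up in global space, which you correctly observe is still $n_G\Delta^{O(\nr)}$ and hence absorbed into the constant $C$. Neither choice affects the round complexity or the two claimed properties.
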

\begin{proof}
First, for each node $v$ we collect the input information of its $8\nr$-hop neighborhood ($\textsc{In}_v \cup \bigcup_{u\in N^{\nr}_G(v)} \textsc{In}_u$) to a dedicated machine. This takes $\nr$ rounds, and requires $O(\Delta^{8\nr} \cdot \Delta^{\nr}) = O(\Delta^{11\nr})$ space per machine and $O(n_G \Delta^{11\nr})$ global space. Our aim is then to simulate the procedure using randomness produced by the $(\Delta^{11\nr}, \Delta^{-11\nr})$ PRG implied by \Cref{prop:perfect-PRG}. This PRG has seed length $d = \Theta(\log \Delta)$ and requires $poly(\Delta^{11\nr})$ space to construct and store. We choose $C$ so that this \comments{$poly(\Delta^{22\nr})$} term is $O(\Delta^{\nr C})$.

The PRG, when evaluated on a seed, produces a string of $\Delta^{11\nr}$ pseudorandom bits. We use the provided $O(\Delta^{8\nr})$-coloring of $G^{4\nr}$ to split this string into the input randomness for each node. By definition of a normal $(\nr,\Delta)$-round distributed procedure, each node requires $O(\Delta^{2\nr})$ random bits, and we provide a node colored $i$ in the $O(\Delta^{8\nr})$-coloring with the $i^{th}$ chunk of $O(\Delta^{2\nr})$ bits from the PRG's output. This means that any pair of nodes within distance $4\nr$ receive disjoint chunks of pseudorandom bits.

The output of the PRG under a random seed is $(\Delta^{11\nr},\Delta^{-11\nr})$ indistinguishable from a uniform distribution. Consider the process of simulating the procedure for all nodes within distance $\nr$ of a node $v$, and then evaluating the strong success property $\textsc{SSP}_v(\bigcup_{u\in N^{\nr}_G(v)} \textsc{Out}_u) $. 
By definition of a normal $(\nr,\Delta)$-round distributed procedure, this combined process requires $O(\Delta^{9\nr})$ computation, and depends on the input information and randomness of nodes up to distance $2\nr$ from $v$ (and note that all nodes within this radius receive different chunks of the PRG's output as their pseudorandom bits). This combined process can therefore be run on one \MPC machine. Furthermore, it can be considered a non-uniform algorithm using at most $\Delta^{9\nr}$ computation, and so is `fooled' by the PRG. This means that the output ($T$ or $F$, indicating whether the success property is satisfied) at $v$ differs with probability at most $\Delta^{-11\nr}$ from what it would be under full randomness. That is, the output will be $F$ with probability at most $\frac{1}{2n_G} + \Delta^{-11\nr}$. 

The expected number of nodes \comments{that do not} satisfy the strong success property, when simulating the procedure using the PRG with a random seed, is therefore at most $\frac 12+n_G \Delta^{-11\nr}$. As this value is an aggregate of functions computable by individual machines, using the method of conditional expectations (as implemented for low-space \MPC in \cite{CDPsparse,CDP20}) we can deterministically select a seed for the PRG for which the number of nodes which do not satisfy the success property is at most its expectation (i.e., $\frac 12+n_G \Delta^{-11\nr}$), in $O(1)$ rounds.

Then, we simply mark the nodes which do not satisfy the strong success property as deferred. By \Cref{prng-derandomizable}, all non-deferred nodes still satisfy the weak success property. We therefore meet the conditions of the lemma.
\end{proof}

We can now iterate \Cref{lem:phasederand} in such a way that we can derandomize full algorithms for problems. To do so, we need these problems to satisfy a \emph{self-reducibility property}, which informally means that partial solutions to the problem should be \comments{extendable} to full solutions. 
As discussed, \DILC has this property, which is the reason why it emerges naturally in algorithms for $(\Delta+1)$-coloring or $(\Delta+1)$-list coloring, which themselves are not \mbox{self-reducible}.

\begin{definition}\label{def:selfred}

Consider a graph problem $\mathcal P$ in which \comments{each graph node $v$ takes} some input information $\textsc{IN}_v$ and must produce some output labelling. We call $\mathcal P$ \textbf{self-reducible} if it has the following property:

\begin{itemize}
\item For any graph $G$, any subset $S\subseteq V(G)$, and any valid output labelling on $G$, nodes $v\in S$ can compute in $O(1)$ rounds of \LOCAL new input information $\widehat{\textsc{IN}}_v$ such that:
\begin{itemize}
\item the graph induced on $S$, with new inputs $\widehat{\textsc{IN}}$, forms a valid instance of $\mathcal P$, and
\item replacing the output labelling of nodes in $S$ with any valid output labelling for this induced problem still forms a valid output of the original problem on $G$.
\end{itemize}
\end{itemize}
\end{definition}
Now, we can now state our main derandomization theorem.

\begin{theorem}
\label{lem:fullderand}
Consider a randomized algorithm $\mathcal A$ for a self-reducible problem $\mathcal P$ which consists of a series of $k$ (\comments{$k=n^{o(1)}$}) normal $(\nr,\Delta)$-round distributed procedures (i.e., the final weak success property implies a valid output for $\mathcal P$). Then, there is a constant $C$ such that for any $\delta \in (0,1)$, on any graph $G$ with $\Delta \le n^{7\delta}$, $\mathcal A$ can be derandomized in $O(k\nr + \log^* n_G)$ rounds of \MPC, using $\lspace = O(n^{7\delta \nr C})$ space per machine and global space $O(n_G \cdot n^{7\delta \nr C})$, with all nodes giving a valid output for $\mathcal P$. $\mathcal A$ may also contain up to $O(k\nr + \log^* n_G)$ deterministic steps of \LOCAL or \MPC adhering to the same space bounds.
\end{theorem}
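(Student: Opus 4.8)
The plan is to iterate \Cref{lem:phasederand} across the $k$ subprocedures, interleaving a "cleanup" step that folds the deferred nodes into a self-reduced subinstance. First I would establish the coloring prerequisite: \Cref{lem:phasederand} needs an $O(\Delta^{8\nr})$-coloring of $G^{4\nr}$, and since $\Delta^{4\nr} \le n^{28\delta\nr}$ is a small polynomial in $n$, the graph $G^{4\nr}$ has maximum degree at most $n^{28\delta\nr}$ and can be computed (by graph exponentiation, gathering $4\nr$-hop neighborhoods onto machines, which fits since $\Delta^{4\nr}$ is polynomially bounded) and then colored with $O(\Delta^{8\nr})$ colors in $O(\log^* n_G)$ rounds using a standard \MPC implementation of Linial's algorithm / the deterministic coloring primitives available in low-space \MPC. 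This is the source of the additive $\log^* n_G$ term, and it is paid only once.

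Next I would run the main loop. Maintain an "active" set $A$ of nodes, initially $V(G)$, together with their current input information; all nodes outside $A$ are permanently deferred with no output yet assigned. For phase $j = 1,\dots,k$: apply \Cref{lem:phasederand} to the $j$-th normal $(\nr,\Delta)$-round procedure on $G[A]$ (note $G[A]$ still has maximum degree at most $\Delta \le n^{7\delta}$, and the supplied coloring of $G^{4\nr}$ restricts to a valid $O(\Delta^{8\nr})$-coloring of $G[A]^{4\nr}$). This takes $O(\nr)$ \MPC rounds, leaves all non-deferred nodes satisfying the weak success property of procedure $j$, and defers at most a $(\tfrac12 + \Delta^{-11\nr})$-fraction of $A$. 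Remove the newly deferred nodes from $A$ and continue to phase $j+1$ on the survivors. After all $k$ phases, every node still in $A$ has, in sequence, satisfied the weak success property of each procedure, so by hypothesis it holds a valid output for $\mathcal P$ (restricted to the subgraph it saw). The deterministic \LOCAL/\MPC steps interspersed in $\mathcal A$ are simulated directly in $O(1)$ \MPC rounds each, within the same space bounds, contributing to the $O(k\nr)$ total.

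Now the key point: after the loop, the set $D = V(G) \setminus A$ of deferred nodes must still be colored. Here self-reducibility (\Cref{def:selfred}) is invoked: the nodes of $A$ have a valid partial output, so nodes in $D$ compute in $O(1)$ \LOCAL rounds new inputs $\widehat{\textsc{IN}}$ making $G[D]$ a valid instance of $\mathcal P$, and any valid solution on $G[D]$ glues back to a valid global solution. We then recurse the entire construction on $G[D]$. Since each of the $k$ phases shrinks the active set by a factor of at least $\tfrac12 + \Delta^{-11\nr} < \tfrac34$ (say), one full pass of $k$ phases shrinks $|A|$ by a factor of at least $(3/4)^k$; but more to the point, we do not need all $k$ phases to shrink it — a \emph{single} phase already multiplies the deferred count by $< \tfrac34$, so after $O(\log n_G)$ applications of \Cref{lem:phasederand} the active set would be empty. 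The cleaner accounting: track the total number of nodes ever fed into a \Cref{lem:phasederand} call; across the whole recursion this is a geometric series and the recursion depth is $O(\log n_G / \log(4/3)) = O(\log n_G)$... \emph{but} that overshoots the claimed $O(k\nr + \log^* n_G)$ bound, so I must be more careful. The right move is to observe that within a \emph{single} pass the deferrals already telescope: after the $k$-th phase the surviving set has the valid output, and the total deferred mass is at most $\sum_{j} (\text{new deferrals at phase } j)$, but each individual $\textsc{WSP}$ is robust to \emph{all} deferrals simultaneously (as the authors stress for \DILC, where $\textsc{SSP}$ and $\textsc{WSP}$ coincide up to \textsc{Defer} tags). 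Hence we can run all $k$ procedures \emph{without removing deferred nodes between phases}, only marking them; a node that is deferred at phase $j$ simply carries a \textsc{Defer} tag, and because every later phase's weak success property tolerates arbitrary \textsc{Defer}-subsets, the union of all deferred nodes $D$ — at most $(\tfrac12 + \Delta^{-11\nr})n_G$ by the union bound over phases with the telescoping failure probabilities, still bounded by a single $\tfrac34 n_G$ — forms exactly one self-reduced subinstance. Then we recurse on $G[D]$, and since $|D| \le \tfrac34 n_G$, the recursion depth is $O(\log n_G)$.

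**The obstacle**, and the thing to get right, is reconciling this $O(\log n_G)$ recursion depth with the claimed $O(k\nr + \log^* n_G)$ bound. The resolution is that the recursion on $G[D]$ operates on a graph that is not merely smaller but whose \emph{degree has dropped}: deferred nodes lose all their non-deferred neighbors, so in many coloring applications $G[D]$ has exponentially smaller maximum degree, and after $O(\log^* n_G)$ levels of recursion the degree is polylogarithmic, at which point the pre-existing deterministic low-space \MPC \DILC algorithm of \cite{CDP21} (running in $O(\log\log\log n)$ rounds, which is $n^{o(1)}$ and in fact absorbed) finishes the job. If the generic self-reducibility of \Cref{def:selfred} does not by itself guarantee this degree drop, then the theorem statement must be read as using the fact that the deferred set after \Cref{lem:phasederand} is, by construction, precisely the low-probability "bad" set whose components shatter — i.e. one imports the shattering lemma so that $G[D]$ decomposes into components of size $\poly(\Delta,\log n_G)$, each solvable by collecting onto one machine in $O(1)$ rounds. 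I would present the argument in the clean form — mark-don't-remove within a pass, union-bound the deferrals, self-reduce on $D$, recurse — and cite the shattering/low-degree argument to cap the recursion at the stated round complexity; the bookkeeping of which constant $C$ makes all the intermediate $\poly(\Delta^{\nr})$ space terms fit under $n^{7\delta\nr C}$ is routine given $\Delta \le n^{7\delta}$.
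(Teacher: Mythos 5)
Your setup matches the paper's: compute the $O(\Delta^{8\nr})$-coloring of $G^{4\nr}$ by simulating Linial (the source of the additive $\log^* n_G$), iterate \Cref{lem:phasederand} over the $k$ procedures with deferred nodes carrying \textsc{Defer} tags that every weak success property tolerates, run the interleaved deterministic steps directly, and use self-reducibility (\Cref{def:selfred}) to recurse on the deferred set. The genuine gap is exactly where you flag ``the obstacle'': you read \Cref{lem:phasederand} as deferring a $(\tfrac12+\Delta^{-11\nr})$-\emph{fraction} of the active nodes, which forces an $O(\log n_G)$ recursion depth, and your two proposed rescues do not work in this generic setting --- \Cref{def:selfred} gives no guarantee that the deferred subinstance has smaller degree, and there is no shattering lemma for an arbitrary normal procedure (the theorem is not specific to coloring).

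The actual resolution is quantitative. Each node fails its strong success property with probability at most $\frac{1}{2n}+\Delta^{-11\nr}$ (the $\frac{1}{2n}$ from \Cref{prng-derandomizable}, the $\Delta^{-11\nr}$ from the PRG error), so the method of conditional expectations yields at most $\frac12+n_G\Delta^{-11\nr}$ deferred nodes per phase --- an additive $\frac12$ plus a \emph{polynomially small fraction} of $n_G$, not half of $n_G$. (The parenthesization in the statement of \Cref{lem:phasederand} is misleading; its proof derives $\frac12+n_G\Delta^{-11\nr}$ and the theorem's proof uses that bound.) Over $k$ phases this accumulates to $k(\frac12+n_G\Delta^{-11\nr})$, and after $r$ recursive applications the deferred count is at most $\frac k2+n\,k^r n^{-77\delta\nr r}$; taking $r=1/\delta=O(1)$ and $\nr$ a sufficiently large constant leaves only $n^{o(1)}$ nodes, whose induced subinstance is collected onto a single machine and solved greedily in $O(1)$ rounds, then glued back via self-reducibility. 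This gives $O(\nr+\log^* n+kr\nr)=O(k\nr+\log^* n)$ rounds with no $\log n_G$ recursion depth and no appeal to shattering or degree reduction.
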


\begin{proof}
We first note that any deterministic \LOCAL or \MPC steps contained in $\mathcal A$ can simply be run as they are, and do not need any derandomization. For derandomization of the randomized procedure, we begin by computing an $O(\Delta^{8\nr})$-coloring of $G^{4\nr}$ in $O(\nr+ \log^* n)$ rounds, by simulating round-by-round the $O(\Delta^2)$-coloring algorithm of Linial\cite{Linial92} on the graph $G^{4\nr}$. Constructing this graph requires collecting $4\nr$-radius balls around each node onto machines, which can be done since we allow $O(n^{7\delta \nr C})$ space per machine.

Then, we apply \Cref{lem:phasederand} to derandomize each of the $k$ normal $(\nr,\Delta)$-round distributed procedures forming $\mathcal A$ in order, in each case using $\Delta = n^{7\delta}$ as an upper bound on the maximum degree (even though the actual maximum degree may be significantly lower). The result is that at most $k(\frac 12 + n_G \Delta^{-11\nr})$ nodes are deferred, and the other nodes satisfy the weak success property of the final procedure (have a valid output labelling). This takes $O(k\nr)$ rounds of \MPC, and uses $\lspace = O(n^{7\delta \nr C})$ space per machine and global space $O(n_G \cdot n^{7\delta \nr C})$.

Next, since the problem is self-reducible, the deferred nodes can compute, in $O(1)$ rounds, new inputs such that it suffices to solve the problem on the induced graph of deferred nodes with these new inputs. To do so, we recursively apply the above process, again using $\Delta = n^{7\delta}$ as an upper bound on the maximum degree. After $r$ of these recursive applications, the number of remaining deferred nodes is at most $\frac k2 + n(k^r \cdot n^{-11\nr r\cdot7\delta})$. Taking $r=1/\delta$ (which is $O(1)$) and $\nr$ to be at least a suitable large constant, we then have $n^{o(1)}$ remaining deferred nodes. We can greedily find valid output labels for them in $O(1)$ further rounds collecting their induced graph onto a single machine, which greedily assigns them valid output labels in any order. Due to the self-reducibility of the problem, we now have a valid output for all nodes for the original problem on $G$.

The total number of rounds used is $O(\nr + \log^* n + kr\nr) = O(k\nr+\log^* n)$, and the space used is $\lspace = O(n^{7\delta \nr C})$ per machine and $O(n_G \cdot n^{7\delta \nr C})$ global space.
\end{proof}

\section{\DILC Algorithm when degree is at most $n^{7\delta}$}
\label{sec:deterministic_d1lc}

In this section we show that our framework from \Cref{sec:derandomization-framework}, when combined with the \DILC algorithm of Halld\'orsson \etal \cite{hknt_local_d1lc}, leads to efficient deterministic \MPC algorithm for \DILC when the maximum degree is $n^{7\delta}$. Note that $\delta \in (0,1)$ is a constant set suitably low relative to the local space parameter $\phi$ and is to be fixed later when we finally prove \Cref{thm:main:deter} in \Cref{sec:d1lc_for_all_degree_ranges}. Specifically,  we show that in $O(\log^* n)$ rounds, with $\lspace=O(n^\phi)$ and $O(n^{1+\phi})$ global space, a subset of the vertices are colored deterministically and the graph induced on the nodes left uncolored will have maximum degree $O(\log^7 n)$ (and will be a \DILC instance, due to the self-reducibility of \DILC). In \Cref{sec:derand}, we argue
how to derandomize the randomized subroutines of  Halld\'orsson \etal, using the framework discussed in \Cref{sec:derandomization-framework}, when the maximum degree of the graph is at most $n^{7\delta}$. In \Cref{sec:randomized-d1lc}, we will argue that the deterministic subroutines which the algorithm of Halld\'orsson \etal uses can be implemented in sublinear \MPC, provided the maximum degree is at most $\sqrt{\lspace}$. Due to a result of Czumaj \etal \cite{CDP21}, this graph can be colored in $O(\log\log\log n)$ rounds.



\hide{In this section, we apply our framework from \Cref{sec:derandomization-framework} to design a deterministic \MPC algorithm for \DILC in a graph $G$ with maximum degree at most $n^{7\delta}$, which takes $O(\log \log \log n)$ rounds on an \MPC with $\lspace=O(n^\phi)$ and $O(n^{1+\phi})$ global space~\footnote{The number of edges is always at most $O(n^{1+\phi})$.}. This is possible due to the simulation of  deterministic steps in \cite{hknt_local_d1lc} when the maximum degree is at most $\sqrt{\lspace}$ as discussed in \Cref{sec:randomized-d1lc} and the fact that the randomized steps in \cite{hknt_local_d1lc} can be efficiently derandomized. In particular, we will discuss the derandomization of \textsc{TryRandomColor}(\Cref{alg:try_random_color}), \textsc{SlackColor} (\Cref{alg:slack_color}), {\sc GenerateSlack} (\Cref{alg:generate_slack}), \textsc{PutAside} (\Cref{alg:put-aside}) and \textsc{SyncColorTrial} (\Cref{alg:synch}) by proving that all of them are $(O(\log ^*n),\Delta)$-round  distributed procedures and arguing that such procedures can be efficiently derandomized by using pseudorandom generators.  We can show that, in one iteration with $O(\log^* n)$ rounds, the number of uncolored nodes will be reduced by a factor of $n^{\Omega(1)}$ (that are deferred to be colored in the later iterations). So, by repeating the process for $O(1)$ iterations, we have $O(\log n)$ nodes left to be colored. Another thing to note that, we are considering the algorithm of \Cref{sec:overview_of_hknt} here when the degree of the nodes is at least $\log ^7 n$. So, we always defer the coloring of the nodes whose degree is at most $\log ^7 n$. So, at the end we have a graph (of at most $n$ nodes) having maximum degree $\log ^7 n$. Note that this can be colored in $O(\log \log \log n)$ rounds due to a result of \cite{CDP21}.}


In this section, graph $G$ \comments{is often} not clear from the context. So, we use $n_G$ to denote the number of nodes in $G$. For a node $v$, $N_G(v)$ denote the set of neighbors of $v$ in $G$ and $d_G(v)$ denotes the degree of $v$ in $G$. We denote the maximum degree of any node in $G$ by $\Delta_G$. For a node $v$ and $\nr \in \mathbb{N}$, $N^{\nr}_G(v)$ denotes the nodes in the \nr-hop neighborhood of $v$.

\subsection{Derandomization of key randomized subroutines of \cite{hknt_local_d1lc}}\label{sec:derand}

Recall the randomized subroutines of \cite{hknt_local_d1lc} as discussed in \Cref{sec:overview_of_hknt}. Note that all the subroutines succeeds with high probability when the degree of each node is at least $\log ^7 n$. Here, we discuss the derandomization of the subroutines by proving that all of them are $(O(1),\Delta)$-round normal distributed procedures in \Cref{lem:derandomizable-subroutines1} and arguing that such procedures can be efficiently derandomized by using the framework outlined in \Cref{sec:derandomization-framework}.
\begin{lemma}
\label{lem:derandomizable-subroutines1}The pre-shattering part of the algorithm of \cite{hknt_local_d1lc} is a series of $O(\log^* \Delta)$ normal $(O(1), \Delta_G)$-round distributed procedures, such that the final weak success property is that all nodes $v$ with degree $d_{G}(v)\ge \log^7 n$ are properly colored.
\end{lemma}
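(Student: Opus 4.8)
The plan is to go through the randomized subroutines that \cite{hknt_local_d1lc} invokes in its pre-shattering phase --- \textsc{TryRandomColor} (\Cref{alg:try_random_color}), \textsc{MultiTrial} (\Cref{alg:multi_trial}), the \textsc{SlackColor} wrapper (\Cref{alg:slack_color}) that repeatedly calls them, \textsc{GenerateSlack} (\Cref{alg:generate_slack}), \textsc{PutAside} (\Cref{alg:put-aside}) and \textsc{SynchColorTrial} (\Cref{alg:synch}) --- and to verify, one subroutine at a time, that each is a normal $(O(1),\Delta_G)$-round distributed procedure in the sense of \Cref{prng-derandomizable}. The remaining work done by \Cref{alg:color_middle_degrees}, \Cref{alg:color_sparse} and \Cref{alg:color_dense} (computing the almost-clique decomposition and the classification into $V_{\text{sparse}},V_{\text{uneven}},V_{\text{dense}}$ from the parameters of \Cref{def:parameters}, choosing a leader $x_C$ and outliers $O_C$ in each almost-clique, and the bookkeeping around put-aside sets) is a deterministic function of a constant-radius neighbourhood and is carried along by the ``deterministic steps'' clause of \Cref{lem:fullderand}. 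Only \textsc{SlackColor} is not itself $O(1)$ rounds, but its body is a sequence of $O(1)$-round calls to \textsc{TryRandomColor} and \textsc{MultiTrial}: the first \texttt{for} loop runs $\log^*\rho = O(\log^*\Delta_G)$ times, the second runs $\lceil 1/\kappa\rceil = O(1)$ times (the interleaved termination tests being local deterministic decisions), and \textsc{SlackColor} itself is invoked $O(1)$ times across \textsc{ColorSparse} and \textsc{ColorDense}. Hence the whole pre-shattering phase is a series of $O(\log^*\Delta_G)$ constant-round randomized procedures, interleaved with deterministic steps, as required.

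The ``bookkeeping'' conditions of \Cref{prng-derandomizable} hold uniformly. For every subroutine, a node's input information is its palette $\Psi(v)$, its incident-edge list, and its locally computed parameters from \Cref{def:parameters}; this is $O(\Delta_G) = O(\Delta_G^{\nr})$ words for any constant $\nr\ge 1$. Each step reads only a constant-radius ball around $v$: the parameters $\sparsity{v}$, $\unevenness{v}$ are computable from the $1$-hop-collected neighbourhoods, and the conflict checks in \textsc{TryRandomColor}/\textsc{MultiTrial}/\textsc{SynchColorTrial} use $2$-hop information. A node picks at most $\rho \le \Delta_G$ random colours per \textsc{MultiTrial}, so $O(\Delta_G\log\Delta_G) \le O(\Delta_G^{2\nr})$ random bits suffice; detecting colour conflicts among at most $\Delta_G$ candidate colours held by at most $\Delta_G$ neighbours, computing $m(N(v))$, and evaluating the success predicates below all take $O(\Delta_G^{O(1)}) \le O(\Delta_G^{8\nr})$ computation; and the output $\textsc{Out}_v$ (the updated palette together with $v$'s colour if it was coloured, or $v$'s membership in the relevant sampled/put-aside set) is $O(\Delta_G) = O(\Delta_G^{\nr})$ words. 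Choosing $\nr$ a sufficiently large constant accommodates all of these bounds.

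The substance is the choice of success properties. For each subroutine we use the \emph{same} predicate for $\textsc{SSP}_v$ and $\textsc{WSP}_v$ (formally, $\textsc{WSP}_v$ is this predicate with its domain extended to accept the \textsc{Defer} marker): either $v$ has been permanently coloured, or $v$ meets the ``progress guarantee'' that the analysis of \cite{hknt_local_d1lc} attaches to this step. For the coloring subroutines (\textsc{TryRandomColor}, each iteration of \textsc{SlackColor}, and \textsc{SynchColorTrial}) this guarantee is a lower bound on the residual slack $\slack{v} = p(v)-d(v)$ relative to $d(v)$; for \textsc{GenerateSlack} it is the amount of slack \cite{hknt_local_d1lc} shows $v$ gains; for \textsc{PutAside} it is that the sampled set restricted to a constant-radius ball around $v$ has the size, independence and covering properties the later steps rely on. Since we are in the regime $d_G(v)\ge\log^7 n$ --- precisely the regime in which \cite{hknt_local_d1lc} needs no shattering --- the per-step, per-node success probabilities proved there are $1-n^{-\Omega(1)}$ with an arbitrarily large hidden constant, in particular at least $1-\frac{1}{2n}$, the bound \Cref{prng-derandomizable} demands. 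The implication $\textsc{SSP}_v \Rightarrow \textsc{WSP}_v$ under an arbitrary deferral set follows from monotonicity of slack: a node satisfying $\textsc{SSP}_v$ is either already coloured (and stays coloured whatever happens to its neighbours) or has enough slack; deferring a neighbour $u$ deletes $u$ --- which, being deferred, is uncoloured and so removes no colour from $\Psi(v)$ --- from $N_G(v)$, so $d_G(v)$ only decreases while $p(v)$ is unchanged, hence $\slack{v}$ only grows and the threshold still holds. This also covers \textsc{GenerateSlack} (the slack $v$ gains comes from pairs of already-\emph{coloured} neighbours taking a common colour, and coloured nodes are never deferred) and \textsc{PutAside} (a node deferred during \textsc{PutAside} is simply treated as a put-aside node, coloured at the very end, so it still furnishes the later slack \cite{hknt_local_d1lc} extracts for its neighbours).

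The two places needing more care are the chunking of \textsc{SlackColor} and the leader-driven \textsc{SynchColorTrial}, and this threading is the main obstacle. Each iteration of \textsc{SlackColor} is derandomized on its own by \Cref{lem:phasederand}, deferring the nodes that fail that iteration's slack invariant, and the next iteration then runs on the graph with those nodes removed; one must check that the between-iteration hypothesis of \cite{hknt_local_d1lc} survives. It does: the surviving nodes satisfy $\textsc{SSP}_v$, which is exactly that hypothesis, and they actually have \emph{more} slack than it promises (their deferred neighbours have been deleted) while their palettes are unchanged (deferred nodes are uncoloured), so the analysis of the next iteration applies verbatim --- the same threading also lets \textsc{SlackColor} be run right after \textsc{GenerateSlack} or \textsc{PutAside}, since the survivors enter it with the required linear slack. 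In \textsc{SynchColorTrial} the colour an inlier $v\in I_C$ receives is driven by the random permutation of its leader $x_C$, so the progress guarantee, hence $\textsc{SSP}_v$, is stated as ``$v$ is coloured, or $x_C$'s trial left $v$ with the slack \cite{hknt_local_d1lc} guarantees''; this is still a function of a constant-radius ball around $v$ (which contains $x_C$, since $x_C$ is adjacent to every inlier of $C$), and since deferring an inlier --- or the leader, when it is still uncoloured --- only removes an uncoloured node from other inliers' neighbourhoods, slack monotonicity again gives $\textsc{SSP}_v \Rightarrow \textsc{WSP}_v$. Concatenating these $O(\log^*\Delta_G)$ normal procedures in the order dictated by \Cref{alg:color_middle_degrees}, \Cref{alg:color_sparse} and \Cref{alg:color_dense} (with the deterministic steps in between), and observing that after the final one the guarantee of \cite{hknt_local_d1lc} is exactly that every node of degree $\ge\log^7 n$ is properly coloured --- which we take as the final weak success property --- completes the argument.
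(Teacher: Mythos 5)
Your proposal is correct and follows essentially the same route as the paper's proof: it verifies subroutine by subroutine that \textsc{TryRandomColor}, \textsc{GenerateSlack}, \textsc{PutAside}, \textsc{SynchColorTrial} and the $O(\log^*\Delta)$ \textsc{MultiTrial} iterations inside \textsc{SlackColor} are normal $(O(1),\Delta_G)$-round procedures, takes $\textsc{SSP}_v=\textsc{WSP}_v$ to be the slack/progress guarantees of \cite{hknt_local_d1lc} (with low-degree nodes exempted), and derives $\textsc{SSP}_v\Rightarrow\textsc{WSP}_v$ under deferrals from monotonicity of slack. The only cosmetic difference is that the paper phrases the \textsc{PutAside} property as ``$|P_C|=\Omega(\ell^2)$ or $C$ contains $\Omega(\ell^2)$ deferred nodes'' rather than treating deferred nodes as put-aside nodes, which amounts to the same observation.
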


\begin{proof}
The randomized subroutines used in the pre-shattering part of the algorithm are
{\sc TryRandomColor} (\Cref{alg:try_random_color}),
{\sc GenerateSlack} (\Cref{alg:generate_slack}),
{\sc PutAside} (\Cref{alg:put-aside}), {\sc SynchColorTrial} (\Cref{alg:synch}), and  {\sc SlackColor}. We will show that {\sc TryRandomColor},
{\sc GenerateSlack},
{\sc PutAside}, and {\sc SynchColorTrial}
are all normal $(O(1), \Delta_{G})$-round distributed procedures, and {\sc SlackColor} consists of a series of $O(\log^* \Delta)$ normal $(O(1), \Delta_G)$-round distributed procedures.

First, we note that at the start of any algorithm, nodes may be assumed to have information about which sets they are members of ($V_{\text{sparse}}$, $V_{\text{uneven}}$, etc.), as by earlier simulation arguments ({see \Cref{obs:spse}} ) this can be computed in $O(1)$ rounds in \LOCAL (and \MPC, for low enough degree). We may also assume that nodes have information about the value of any of the parameters in \Cref{def:parameters}  for themselves and any of their neighbours ({see \Cref{lem:computing_parameters}} ). Note that this is, in total, $O(\Delta_{G}^2)$ words of information for each node $v$: a constant number of words for each parameter, $O(\Delta_{G})$ words for set membership, multiplied by the number of neighbors of $v$, which is $O(\Delta_{G})$.

For each of the sub-procedures, we must define strong and weak success properties satisfying the conditions of \Cref{prng-derandomizable}, that capture the notion of success for the subroutine. In each case, our weak success properties will be identical to the corresponding strong success property, other than the extension to deferred nodes. All of our success properties will also deem nodes of degree less then $n^{7\delta}$ to always be successful, regardless of what happens during the randomized process. This means that in this section we will not show any constraints on these low-degree nodes - they will be dealt with afterwards using \Cref{lem:lowdeg}. We look at each procedure:

\vspace{2pt}
\noindent {\sc TryRandomColor} (\Cref{alg:try_random_color}): It takes $O(1)$ rounds of \LOCAL. Nodes need no other words of input information. Nodes only use information from their neighbors, and each node uses $O(\log \Delta_{G})$ random bits to select a color from its palette. Note that the computation is $O(\Delta_{G})$. The output information is either a color with which $v$ has permanently colored itself, or {\sc Fail} if it does not color itself: this is clearly $O(\Delta_{G})$ words of information. We set the success properties $\textsc{SSP}_v$ and $\textsc{WSP}_v$ to be that either the slack of $v$ increases from $c \cdot d_{G}(v)$ for some constant $c$ to $2 \cdot d_{G}(v)$ \cite[Lemma~26]{hknt_local_d1lc}, or $d_{G}(v)< \log^7 n$. Here, for the purposes of $\textsc{WSP}_v$, deferred neighbors are discounted from $v$'s degree and therefore contribute to its slack. So, deferring neighbors only increases $v$'s slack, and therefore the last condition of \Cref{prng-derandomizable} is satisfied. This properties are computable in time linear in the degree of a node and based only on the output of the immediate neighbors of a node. The property succeeds for each node with probability $p=\exp(\Omega(s(v)))$ for nodes with $\log^7 n$ and $1$ otherwise, so this is with high probability in $n$. So, all necessary conditions are satisfied and {\sc TryRandomColor} meets \Cref{prng-derandomizable}.

\vspace{2pt}
\noindent {\sc GenerateSlack} (\Cref{alg:generate_slack}): This takes $O(1)$ rounds of \LOCAL and nodes need no other information at the beginning of the procedure. During the procedure, nodes only use information from their neighbors and $\widetilde{O}(\Delta_{G})$ random bits (to determine whether the node is sampled and if so, what color is attempted). The output is either the color with which $v$ permanently colored itself, or {\sc Fail}. The success properties $\textsc{SSP}_v$ and $\textsc{WSP}_v$ are that either that $v$ generates \emph{sufficient} slack, or $d_{G}(v)< \log^7 n$. By \emph{sufficient} here we mean as described in the appropriate statements \cite[Lemmas~10, 11, 13, 15, 17, 18]{hknt_local_d1lc}. These slack expressions are quite complicated: depending on the type of node, a different guarantee on the eventual slack is required. We note, however, that these guarantees all succeed with high probability for nodes with $d_{G}(v)\ge \log^7 n$ (and again, for lower degree nodes we have defined $\textsc{SSP}_v$ and $\textsc{WSP}_v$ to always be satisfied) and are computable using only information in the immediate neighborhood of $v$. Deferring nodes again creates slack and so only helps nodes, so the success properties satisfy the necessary conditions of \Cref{prng-derandomizable}.

\vspace{2pt}
\noindent    {\sc PutAside} (\Cref{alg:put-aside}): The algorithm consists of $1$ round of \LOCAL, and nodes need no additional information. During the procedure, nodes only use information from their neighbours, and use $O(\polylog(n)) \in O(\Delta_{G})$ (by construction) random bits to independently sample nodes into $S_C$. The output of the procedure is one word (whether $v \in P_C$).    The success properties $\textsc{SSP}_v$ and $\textsc{WSP}_v$ are that either:
     the put-aside set of the almost-clique containing $v$ (say $C$) satisfies $\size{P_C} = \Omega(\ell^2)$ for some polylogarithmic $\ell$ (see \cite[Lemma~5]{hknt_local_d1lc});
    or $C$ contains $\Omega(\ell^2)$ deferred nodes; or
    $d_{G}(v)< \log^7 n$.
     This can be checked using the $2$-hop neighborhood of $v$ because the diameter of $C$ is at most $2$. The strong success property succeeds with high probability, since \cite{hknt_local_d1lc} shows that $\size{P_C} = \Omega(\ell^2)$ with high probability when  $d_{G}(v)\ge \log^7 n$. If we defer nodes then this can reduce the size of the put-aside set by the same amount, but the deferred nodes serve exactly the same purpose of providing slack to the almost clique and so this is still sufficient for the algorithm of \cite{hknt_local_d1lc}. Then, $\textsc{SSP}_v$ and $\textsc{WSP}_v$ must still be satisfied after deferrals, as required.
     
\vspace{2pt}
\noindent    {\sc SynchColorTrial} (\Cref{alg:synch}): This procedure takes $1$ round of \LOCAL. Nodes need no additional information about their neighborhood beforehand. Nodes only use information from their immediate neighbors during the procedure (since leader $x_C$ is adjacent to all nodes in the set of inliers $I_C$), and in the worst case (i.e.~if the node in question is $x_C$), nodes need $\widetilde{O}(\Delta_{G})$ random bits to permute their palette. The output of the procedure for a node $v \in I_C$ for some $C$ is either a color in $\Psi_v$ or {\sc Fail}. The success property for the procedure in \cite{hknt_local_d1lc} is that, for the almost-clique $C$ that $v$ is in, the number of nodes that return {\sc Fail} is bounded by $O(t)$, where $t$ is some polylogarithmic value \cite[Lemma~7]{hknt_local_d1lc}.    So, we accordingly set  $\textsc{SSP}_v$ and $\textsc{WSP}_v$ to be that either:
 $C$ contains $O(t)$ nodes that {\sc Fail}; or
    $d_{G}(v)< \log^7 n$.
    This happens with high probability ($\exp(-t)$ when $d_{G}(v)\ge \log^7 n$, and $1$ otherwise). Deferring nodes cannot increase the number of failed nodes, so $\textsc{SSP}_v$ implies $\textsc{WSP}_v$ after deferrals.
    
\vspace{2pt}
\noindent  {\sc SlackColor} (\Cref{alg:slack_color}):  There are three parts to the algorithm. First, {\sc TryRandomColor} is called in order to amplify the slack of each node (which is linear in the degree: a prerequisite of {\sc SlackColor}). Then two loops of {\sc MultiTrial} (\Cref{alg:multi_trial}) instances are run.    The $O(1)$ calls to {\sc TryRandomColor} are normal $(O(1), \Delta_{G})$-distributed procedures by the item above.
    Each iteration of the first for-loop of executions of {\sc MultiTrial} takes $O(1)$ \LOCAL rounds. Nodes need no additional input information. During the execution of the loop, nodes only use information from their immediate neighbors, and only use $O(\Delta_{G})$ random bits to select a subset of their palette. The output of the procedure for $v$ is either a color in $\Psi_v$ or {\sc Fail}, which is one word. Nodes succeed if $d_i(v) \le \slack{v}/\min(2^{x_i},\rho^\kappa)$, which occurs with high probability when $d_{G}(v)\ge n^{7\delta}$ \cite[Lemma~27]{hknt_local_d1lc}. This is, again, computable using only the output of the node and output information of the immediate neighborhood. Nodes deferring only reduces the degree, and so they cannot cause the property to be unsatisfied. So, setting $\textsc{SSP}_v$ and $\textsc{WSP}_v$ to be that $v$ satisfies the above inequality or has $d_{G}(v)< \log^7 n$ meets the conditions of \Cref{prng-derandomizable}.
        The second for-loop of {\sc MultiTrial} instances are normal distributed procedures for the same reason: the success property is $d_i(v) \le \slack{v}/\min(\rho^{(i+1) \cdot \kappa},\rho)$, and we can set $\textsc{SSP}_v$ and $\textsc{WSP}_v$ analogously to above for the same reason, satisfying the requirements of \Cref{prng-derandomizable}. So, overall \textbf{\sc SlackColor} consists of a sequence of $O(\log^* \Delta)$ normal $(O(1), \Delta_G)$-round distributed procedures.\qedhere

\end{proof}

\color{black}
We have shown that the pre-shattering algorithm of \cite{hknt_local_d1lc} consists of a series of $(O(1), \Delta_{G})$-round distributed procedures as required. However, our success properties do not constrain nodes $v$ with $d_{G}(v)< \log^7 n$, so to reach a full \DILC we must deal with these nodes afterwards. We can do so using the following lemma from \cite{CDP21}:

\begin{lemma}[Lemma 14 of \cite{CDP21}]\label{lem:lowdeg}
For any $n$-node graph $G$ with maximum degree $\Delta = \log^{O(1)} n$, there exists an $O(\log \log \log n)$-round deterministic algorithm for
computing \DILC, using $O(n^\alpha)$ space per machine and $O(n^{1+\alpha})$ global space, for any positive constant~$\alpha \in (0,1)$.
\end{lemma}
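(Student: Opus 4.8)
This is quoted as Lemma~14 of \cite{CDP21}, so the plan is to recall how its proof goes. It is a derandomization, by means of pseudorandom generators and the method of conditional expectations, of an $O(\log\log\log n)$-round \emph{randomized} \MPC algorithm for \DILC on graphs of polylogarithmic degree --- obtained by simulating, via graph exponentiation, a shattering-based \LOCAL algorithm such as that of \cite{hknt_local_d1lc}, or by adapting the $(\Delta+1)$-coloring \MPC algorithm of \cite{CFGUZ19} (to which \DILC reduces with $O(\log\Delta)=O(\log\log n)$ overhead at this degree). The reason $O(\log\log\log n)$ rounds suffice is that, since $\Delta=\polylog n$, all neighborhoods this randomized algorithm needs a node to inspect have size $\Delta^{(\log\log n)^{O(1)}}=2^{(\log\log n)^{O(1)}}=n^{o(1)}\le n^\alpha$, so they fit onto machines.

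The skeleton I would follow breaks the randomized algorithm into $O(\log\log\log n)$ phases, each of constant locality --- so that within any one phase every node's behaviour, and the test for its ``success'' in that phase, depend on only $n^{o(1)}$ random bits and can be evaluated by one machine. For each phase in turn, I would replace its random bits by the output of a pseudorandom generator (\Cref{prop:perfect-PRG}, constructible in $n^{o(1)}$ space by \Cref{lem:prg-alg}) with error $\eps=\Delta^{-\Theta(1)}=(\log n)^{-\Theta(1)}$; this PRG has seed length $O(\log\log n)$, so all $2^{O(\log\log n)}=n^{o(1)}$ seeds can be enumerated, and by the conditional-expectations implementation for low-space \MPC of \cite{CDPsparse,CDP20} one picks, in $O(1)$ rounds, a seed under which the number of nodes failing that phase's success test is at most its expectation --- at most $n$ times the true failure probability plus $n\eps$. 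The failed nodes are set aside, to be handled by later phases; because the coloring algorithm is robust to this (a set-aside node only removes colour constraints from its neighbours), the surviving nodes can still make the progress the algorithm needs. After all $O(\log\log\log n)$ phases the still-uncoloured nodes form components of size $poly(\Delta,\log n)=\polylog n$ and, by self-reducibility, a valid \DILC instance; I would colour these one component per machine (each fits, as $\polylog n\ll n^\alpha$), greedily, in $O(1)$ further rounds. This uses $O(n^\alpha)$ space per machine and $O(n^{1+\alpha})$ global space, and $O(\log\log\log n)$ rounds.

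The genuinely load-bearing point --- and the real content of \cite{CDP21} --- is that a PRG small enough to fit on a machine cannot fool the success tests to within $1/\poly(n)$ error; the best one can afford, $\Delta^{-\Theta(1)}$, still lets a $\Delta^{-\Theta(1)}$ fraction of nodes fail that the truly random process would have handled. The hard part is therefore to track, through all $O(\log\log\log n)$ phases, that setting these extra failures aside does not cascade into an unsolvable instance --- in particular, that the nodes remaining at the end still lie in polylogarithmic-size components. In \cite{CDP21} this is done by an analysis tailored to the coloring algorithm, and it is exactly the difficulty that the normal-distributed-procedure abstraction of the present paper (\Cref{prng-derandomizable}) is built to encapsulate --- though, since that abstraction in its cleanest form asks for $1-1/\poly(n)$-probability success properties, which simply fail at polylogarithmic degree, the low-degree regime still calls for the bespoke argument of \cite{CDP21}.
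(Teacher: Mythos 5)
This lemma is imported verbatim from \cite{CDP21} (their Lemma~14) and the paper gives no proof of it, so there is nothing internal to compare your argument against. Your reconstruction is a fair sketch of how the cited result is obtained --- PRG-plus-conditional-expectations derandomization of a shattering-based low-degree coloring, with neighborhoods of size $2^{\mathrm{poly}(\log\log n)}=n^{o(1)}$ fitting on machines --- and you correctly identify, and honestly defer to \cite{CDP21}, the genuinely load-bearing step: showing that the $\Delta^{-\Theta(1)}$ fraction of PRG-induced failures per phase does not cascade, so that the leftover nodes still form polylogarithmic-size components. That deferral is appropriate here, since the present paper uses the lemma strictly as a black box.
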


Now, we can apply the pre-shattering algorithm of \cite{hknt_local_d1lc}, followed by the low-degree algorithm of \cite{CDP21}. \comments{This} fits into the framework of \Cref{lem:fullderand} and can therefore be derandomized:

\begin{lemma}\label{lem:middegreederand}
There are constants $c,C$ such that, for any constant $\delta>0$,  \Cref{alg:DLDC} performs \DILC deterministically in $O(\log\log\log n)$ rounds of \MPC on any graph $G$ with $\Delta \le n^{7\delta}$, using $\lspace = O(n^{7\delta c C})$ space per machine and global space $O(n_G\cdot n^{7\delta c C})$.
\end{lemma}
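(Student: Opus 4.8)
The plan is to view \Cref{alg:DLDC} as two consecutive phases and bound each using results already established. In the first phase we run the pre-shattering algorithm of \cite{hknt_local_d1lc}, derandomized through our framework; this colours every node of degree at least $\log^7 n$, and, by the self-reducibility of \DILC (\Cref{def:selfred}), the uncoloured nodes then induce a fresh \DILC instance of maximum degree at most $\log^7 n = \log^{O(1)} n$. In the second phase we hand this residual instance to the deterministic low-degree \DILC algorithm of \cite{CDP21} (\Cref{lem:lowdeg}). Adding the two round counts and reconciling the space bounds will give the lemma.

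For the first phase I would invoke \Cref{lem:fullderand}. Its hypotheses are met: \DILC is self-reducible, and by \Cref{lem:derandomizable-subroutines1} the pre-shattering algorithm is a series of $k = O(\log^*\Delta) = O(\log^* n) = n^{o(1)}$ normal $(\nr,\Delta_G)$-round distributed procedures with $\nr = O(1)$, the only deterministic work being the $O(1)$-round computation of the almost-clique decomposition and of the parameters of \Cref{def:parameters} (which fits the stated space, since for $\Delta \le n^{7\delta}$ a $2$-hop ball has at most $O(n^{14\delta})$ nodes and edges). Applying \Cref{lem:fullderand} with $\Delta = n^{7\delta}$ then derandomizes this phase in $O(k\nr + \log^* n) = O(\log^* n)$ \MPC rounds using $\lspace = O(n^{7\delta \nr C})$ space per machine and $O(n_G\cdot n^{7\delta\nr C})$ global space, where $C$ is the constant from \Cref{lem:fullderand}. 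Here I would emphasise two features of the internal recursion of \Cref{lem:fullderand}: because every success property in \Cref{lem:derandomizable-subroutines1} declares nodes of degree $<\log^7 n$ successful, such nodes are never deferred, so the set that shrinks to $n^{o(1)}$ and is eventually coloured greedily on a single machine consists only of high-degree nodes; and after this phase the only uncoloured nodes are exactly those of degree $<\log^7 n$, which by self-reducibility form a valid \DILC instance of maximum degree $\log^{O(1)} n$.

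For the second phase I would apply \Cref{lem:lowdeg} to this residual instance, completing the colouring deterministically in $O(\log\log\log n)$ \MPC rounds with local space $O(n^{\alpha})$ and global space $O(n^{1+\alpha})$ for any constant $\alpha\in(0,1)$. Setting $c$ to be a constant with $\nr \le c$ and $\alpha = 7\delta c C$ (which we may assume is $<1$, since otherwise machines already hold $\Omega(n)$ words and the result is immediate from known work), all space bounds across both phases become $O(n^{7\delta c C})$ per machine and $O(n_G\cdot n^{7\delta c C})$ global. Since $\log^* n = o(\log\log\log n)$, the round costs add to $O(\log^* n) + O(\log\log\log n) = O(\log\log\log n)$, as claimed. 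The step I expect to need the most care is the first phase's interface with \Cref{lem:fullderand}: its final weak success property is ``all nodes of degree $\ge \log^7 n$ are coloured'' rather than ``a valid \DILC output'', so one must check that the framework's deferral mechanism and concluding greedy step behave as needed (in particular that low-degree nodes are never deferred and that only an $n^{o(1)}$-size high-degree remainder is ever gathered onto one machine), and that the leftover low-degree nodes genuinely constitute a \DILC instance to which \Cref{lem:lowdeg} applies. The remaining ingredients — the $O(1)$ round complexities of the individual subroutines, the $\Delta^2$-sized neighbourhoods for the deterministic preprocessing, and matching up the constants — are routine.
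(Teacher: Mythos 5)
Your proof is correct and follows essentially the same route as the paper: derandomize the pre-shattering procedures via \Cref{lem:fullderand} using \Cref{lem:derandomizable-subroutines1}, then finish the residual $\log^{O(1)} n$-degree instance with \Cref{lem:lowdeg}, choosing the constant $\alpha$ small enough to respect the space bound. The only (cosmetic) difference is that the paper folds the low-degree algorithm of \cite{CDP21} into $\mathcal A$ as one of the permitted deterministic \MPC steps of \Cref{lem:fullderand} --- which sidesteps the interface issue you flag, since the combined final weak success property is then a full \DILC output --- whereas you run it as a separate second phase.
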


\begin{proof}
We have shown that the pre-shattering algorithm of \cite{hknt_local_d1lc} consists of a series of $k=O(\log^* n) = O(\log\log\log n)$ $(c, \Delta_{G})$-round distributed procedures (for some sufficiently large constant $c$), and some deterministic procedures which we can implement efficiently in \MPC (see \Cref{sec:randomized-d1lc}). We follow this with the $O(\log\log\log n)$-round deterministic low-space MPC procedure of \Cref{lem:lowdeg}. Setting $\alpha$ (in \Cref{lem:lowdeg}) sufficiently lower
than $\delta$, by \Cref{lem:fullderand}, this algorithm can be derandomized in $O(\log\log\log n)$ rounds of \MPC using $O(n^{7\delta c C})$ space per machine and $O(n_G\cdot n^{7\delta c C})$  global space.
\end{proof}

\begin{algorithm}[H]
\caption{\textsc{DerandomizedMidDegreeColor}$(G)$}
\label{alg:DLDC}
	
 Let $\mathcal A$ consist of the pre-shattering randomized \LOCAL of \cite{hknt_local_d1lc} followed by the $O(\log\log\log n)$-round deterministic low-space \MPC algorithm of \cite{CDP21}.
		
 Derandomize $\mathcal A$, using the success properties from the proof of \Cref{lem:derandomizable-subroutines1}, by \Cref{lem:fullderand}.
\end{algorithm}

\hide{
\subsection{Deterministic algorithm when degree is at most $n^{7\delta}$}
In this section, in the following lemma, we first argue that there is a deterministic $O(\log^* n)$ \MPC algorithm for \DILC on a graph $G$ of maximum degree at most $O(n^{7\delta})$ such that the algorithm colors a subset of nodes and the graph induced by the uncolored nodes has maximum degree is at most $\log^7 n$.
\begin{lemma}\label{lem:phasecolor}
There exists an $O(\log^* n)$ round deterministic \MPC algorithm for for \DILC running on a graph $G$, of maximum degree \emph{at most} $n^{7\delta}$, such that it colors a subset of nodes and the subgraph induced by the uncolored nodes has maximum degree is $\log^7 n$. The process uses $O(n^{7\delta cC})$ space per machine and $O(n_G \cdot n^{7\delta cC})$ global space, where $C$ is as in \Cref{lem:phasederand}.
\end{lemma}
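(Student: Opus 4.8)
The plan is to run the pre-shattering part of the \LOCAL algorithm of \cite{hknt_local_d1lc} (i.e., \textsc{ColorMiddle} restricted to the degree range $[\log^7 n, n^{7\delta}]$), derandomize it with the machinery of \Cref{sec:derandomization-framework}, and then mop up the few nodes that the derandomization defers. Let $\mathcal A$ consist of \textsc{ColorMiddle} together with the deterministic steps it invokes (computing the almost-clique decomposition, classifying nodes into $V_{\text{sparse}}$, $V_{\text{uneven}}$, $V_{\text{dense}}$, choosing leaders and outliers, etc.). By \Cref{lem:derandomizable-subroutines1}, the randomized part of $\mathcal A$ is a sequence of $k = O(\log^* \Delta) = O(\log^* n)$ normal $(\nr, \Delta_G)$-round distributed procedures for a suitable constant $\nr = c$, whose final weak success property is exactly ``every node $v$ with $d_G(v) \ge \log^7 n$ is properly coloured''; and its deterministic steps can be executed in $O(1)$ rounds of \MPC whenever $\Delta_G \le \sqrt{\lspace}$ (see \Cref{sec:randomized-d1lc}, using \Cref{obs:spse} and \Cref{lem:computing_parameters}).

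First I would build the structure needed by \Cref{lem:phasederand}: an $O(\Delta^{8\nr})$-colouring of $G^{4\nr}$, obtained by collecting $4\nr$-radius balls onto machines and simulating Linial's $O(\Delta^2)$-colouring algorithm \cite{Linial92} round by round on $G^{4\nr}$, in $O(\nr + \log^* n) = O(\log^* n)$ rounds. Then, as in the proof of \Cref{lem:fullderand}, I would feed each of the $k$ normal procedures in turn through \Cref{lem:phasederand}, always using $\Delta = n^{7\delta}$ as the degree bound (even though the true maximum degree may be much smaller), and run the intervening deterministic steps of $\mathcal A$ directly; each application costs $O(\nr) = O(1)$ rounds and $O(n^{7\delta c C})$ space per machine. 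Afterwards every non-deferred node satisfies the final weak success property, so every non-deferred node of remaining degree $\ge \log^7 n$ has been coloured, while the deferred set $D$ has size at most $k\,(\tfrac12 + n_G \cdot n^{-77\nr\delta})$. Since \DILC is self-reducible (\Cref{def:selfred}), the nodes of $D$ compute new \DILC inputs in $O(1)$ rounds and we recurse. Because $\Delta = n^{7\delta}$ is polynomial in $n$, taking $\nr$ a large enough constant makes $n_G \cdot n^{-77\nr\delta} = o(1)$, so each pass leaves only $O(\log^* n)$ deferred nodes; after $O(1/\delta) = O(1)$ recursions the deferred set is either empty or small enough that its induced graph (on $O(\log^* n)$ nodes of degree $\le n^{7\delta}$) fits on one machine, which colours it greedily using the self-reducibility of \DILC. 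This part uses $O(\log^* n)$ rounds.

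It remains to note the uncoloured set is of the required form. A node stays uncoloured only if its degree in the current uncoloured instance never reaches $\log^7 n$ (any other node is coloured or deferred, and the deferred set is eventually emptied); hence every $v$ in the final uncoloured set $U$ has $d_{G[U]}(v) < \log^7 n$, and by self-reducibility $G[U]$ with the accumulated inputs is a valid \DILC instance. Tallying: $O(\log^* n)$ rounds for the $G^{4\nr}$-colouring, $O(k) = O(\log^* n)$ for the normal procedures, $O(1)$ for the deterministic steps, and $O(1)$ recursions, giving $O(\log^* n)$ rounds overall; the space is dominated by \Cref{lem:phasederand}, namely $O(n^{7\delta c C})$ per machine and $O(n_G \cdot n^{7\delta c C})$ global, with $c$ the constant length of the normal procedures from \Cref{lem:derandomizable-subroutines1} and $C$ the constant from \Cref{lem:phasederand}.

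I expect the main obstacle to be the interaction between deferral and the degree-threshold structure of \textsc{ColorMiddle}: since \textsc{ColorMiddle} only colours a node while its remaining degree is at least $\log^7 n$, one must rule out that stripping away some of a node's neighbours via deferral pushes a genuinely high-degree node below the threshold so that it survives uncoloured with large degree. The way around this is to use the fact (packaged in \Cref{prng-derandomizable} and verified for every subroutine in \Cref{lem:derandomizable-subroutines1}) that deferring nodes only ever relaxes other nodes' success properties, together with a large enough constant $\nr$ — and a few $O(1)$ recursions — so that the deferred set is driven all the way to empty, rather than tolerating an $n^{o(1)}$-size residue as a black-box application of \Cref{lem:fullderand} would.
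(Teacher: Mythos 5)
Your proposal is correct and takes essentially the same route as the paper: the paper's (terser) proof likewise computes the $O(\Delta^{8c})$-coloring of $G^{4c}$ via Linial, invokes \Cref{lem:derandomizable-subroutines1} to view the pre-shattering algorithm as $O(\log^* n)$ normal $(c,\Delta)$-round procedures with $\Delta = n^{7\delta}$, and derandomizes them via \Cref{lem:fullderand}, whose internal deferral-and-recursion argument you have simply unrolled explicitly through \Cref{lem:phasederand} and self-reducibility. Your added discussion of why the leftover uncolored nodes have degree below $\log^7 n$ fills in a step the paper leaves implicit, but introduces no new technique.
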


\begin{proof}
We use $\Delta = n^{7\delta}$ as an upper bound on the maximum degree $G$, and first compute the $O(\Delta^{8c})$-coloring of $G^{4c}$ required for \Cref{lem:fullderand}. This can be done in $O(\log^* n)$ rounds by simulating Linial's \cite{Linial92} deterministic \LOCAL algorithm round-by-round. Then, since we showed in \Cref{lem:derandomizable-subroutines} that the algorithm of \cite{hknt_local_d1lc} can be expressed as a sequence of $O(\log^* n)$ \emph{phases}, each of which are normal $(c, \Delta)$-round distributed procedures for some constant $c$ when the degree of the nodes is at least $\log ^7 n$, we can apply \Cref{lem:fullderand} to derandomize each phase in turn, resulting in a deterministic coloring of a subset of nodes.

Since the note that our repeated derandomization steps can leave a set $D$ of remaining uncolored nodes with degrees of at most $\log^7 n$.
\end{proof}


Now the desired algorithm is simple: we apply \Cref{lem:phasecolor} to deterministically color all nodes of degree at least $\log^7 n$ (with some nodes potentially remaining uncolored once their uncolored degree drops below $\log^7 n$) and then apply the following lemma to color the remaining low-degree nodes:

\begin{lemma}[Lemma 14 of \cite{CDP21}]\label{lem:lowdeg}
For any $n$-node graph $G$ with maximum degree $\Delta = \log^{O(1)} n$, there exists an $O(\log \log \log n)$-round deterministic algorithm for
computing \DILC, using $O(n^\phi)$ space per machine and $O(n^{1+\phi})$ global space, for any positive constant~$\phi \in (0,1)$.
\end{lemma}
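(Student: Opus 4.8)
The plan is to prove \Cref{lem:lowdeg} by collecting large neighborhoods onto machines (cheap here because $\Delta=\log^{O(1)}n$), derandomizing one run of a fast \emph{randomized} \LOCAL algorithm for \DILC with a PRG and the method of conditional expectations, and then iterating on the residual instance to absorb the vertices on which the PRG fails. Fix a small constant $\alpha\in(0,\phi)$. Since $\Delta=\log^{O(1)}n$, an $r$-hop ball in $G$ has at most $\Delta^{r}=2^{O(r\log\log n)}$ vertices, which is $n^{o(1)}\le n^{\alpha}$ for every $r=(\log\log n)^{O(1)}$. Let $\mathcal B$ be a randomized \LOCAL algorithm that solves \DILC correctly with high probability in $R=(\log\log n)^{O(1)}$ rounds; one may take the algorithm of \cite{hknt_local_d1lc} with the post-shattering phase of \cite{GG23}, or, for an argument not relying on \cite{hknt_local_d1lc}, the $O(\log\Delta)$-level deterministic reduction of \DILC to $(\Delta+1)$-list-coloring of \cite{FHK16,Kuhn20} fed into a $\widetilde{O}(\log^2\log n)$-round randomized list-coloring algorithm \cite{chang2018optimal,RG20}; since $\log\Delta=O(\log\log n)$, either route costs $(\log\log n)^{O(1)}$ rounds. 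Amplifying the failure probability to $n^{-100}$ is free up to a constant factor, so we may assume $\mathcal B$ leaves each vertex uncolored with probability at most $n^{-100}$ on any input. Using graph exponentiation, collect in $O(\log R)=O(\log\log\log n)$ \MPC rounds the $O(R\log^{*}n)$-hop ball of every vertex onto a dedicated machine, using $n^{o(1)}\le n^{\alpha}$ local space and $n^{1+o(1)}\le n^{1+\alpha}$ global space.

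Next I would derandomize a single run of $\mathcal B$ in the style of \cite{CDPcompstab}. Having the $R$-hop balls locally, each machine simulates all $R$ rounds of $\mathcal B$ at its center $v$ as one local computation once it is handed the random bits of every vertex of $B_{R}(v)$; the combined task ``simulate and check whether $v$ is left uncolored'' is a non-uniform algorithm of running time $t=n^{o(1)}$. Invoke the $(t,\eps)$-PRG of \Cref{prop:perfect-PRG}, computable in space $(t/\eps)^{O(1)}$ by \Cref{lem:prg-alg}; taking $\eps=n^{-c\alpha}$ for a suitably small constant $c$ keeps this space at most $n^{\phi}$ and gives seed length $O(\log n)$. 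Split the PRG output into per-vertex chunks indexed by a proper coloring of $G^{2R}$ with $n^{o(1)}$ colors --- obtained by simulating Linial's algorithm \cite{Linial92} on $G^{2R}$ as a further local computation on the already-collected balls --- so that any two vertices of a common $R$-ball get disjoint chunks; hence under a uniform seed the bits seen inside any $R$-ball are independent and uniform, and simulating $\mathcal B$ there with PRG bits is distributed exactly as with true randomness. Coloring procedures only ever make conflict-free choices, so the partial coloring produced is always proper; the only possible defect is a vertex left uncolored, and by the PRG guarantee this happens for a fixed $v$, over the seed, with probability at most $n^{-100}+\eps\le 2\eps$. Thus the expected number of uncolored vertices is at most $2n\eps$, and since this is an aggregate of per-machine-computable functions of the $O(\log n)$-bit seed, the low-space \MPC method of conditional expectations \cite{CDPsparse,CDP20} fixes, in $O(1)$ rounds, a seed leaving at most $2n\eps=n^{1-c\alpha}$ vertices uncolored.

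Finally I would recurse. By self-reducibility of \DILC (\Cref{def:selfred}) the uncolored vertices with their residual palettes form a valid \DILC instance --- each palette loses at most one color per colored neighbor while the degree drops by exactly one, so $|\Psi(v)|\ge d(v)+1$ is preserved --- of maximum degree still $\log^{O(1)}n$. Re-running the whole pass shrinks the vertex set from $n'$ to at most $2n'\eps$, so after $k=O(1/(c\alpha))=O(1)$ passes fewer than $n^{\alpha}$ vertices remain uncolored; we gather their induced instance onto a single machine and finish greedily (always possible for \DILC), then combine with the $k$ partial colorings to obtain a valid \DILC of $G$. The cost is $O(1)$ passes of $O(\log\log\log n)$ \MPC rounds each, plus $O(1)$ final rounds, in $O(n^{\alpha})$ local and $O(n^{1+\alpha})$ global space; taking $\alpha<\phi$ gives the statement.

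The main obstacle --- and the reason one pass does not suffice --- is the space-versus-error tension in the PRG: a PRG that fits in sublinear local space can only fool our per-vertex checks with error $\eps=n^{-\Theta(\alpha)}$, leaving a polynomially large uncolored set after one pass. What makes the scheme work is that this leftover is again a polylogarithmic-degree \DILC instance (self-reducibility plus degree monotonicity under removing colored vertices), so the identical cheap pass applies and a constant number of passes drives the leftover below $n^{\alpha}$; everything else --- graph exponentiation, local simulation of a \LOCAL algorithm, the auxiliary $G^{2R}$-coloring, and conditional expectations over an $O(\log n)$-bit seed --- is routine low-space \MPC, of the same flavor as the arguments behind \Cref{lem:fullderand}.
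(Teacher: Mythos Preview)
The paper does not prove this statement: it is quoted verbatim as ``Lemma 14 of \cite{CDP21}'' and used as a black box. So there is no in-paper proof to compare against. Your proposal is in fact a reconstruction of the \cite{CDP21}/\cite{CDPcompstab} approach (PRG-based derandomization of a polylog-round \LOCAL coloring algorithm after graph exponentiation, with constant-depth recursion on the PRG-failed residual), and it is essentially correct.

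One genuine imprecision: you write that with a uniform seed ``the bits seen inside any $R$-ball are independent and uniform, and simulating $\mathcal B$ there with PRG bits is distributed exactly as with true randomness.'' That is false --- the PRG output is a deterministic function of a shorter seed and cannot be uniform. The correct argument (which you in fact invoke one sentence later) is that the non-uniform $t$-time computation ``simulate $\mathcal B$ on $B_R(v)$ and output whether $v$ is uncolored'' is $(t,\eps)$-fooled by the PRG, so the uncolored-probability shifts by at most $\eps$. The disjoint-chunk indexing via the $G^{2R}$-coloring is needed only so that this computation reads each output bit of the PRG at most once (hence is a function of the PRG output string), not to recover true independence.

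A second point to make explicit: fixing the seed ``in $O(1)$ rounds'' via conditional expectations requires the whole seed space $2^d$ to fit in local memory. Here $d=\Theta(\log t + \log(1/\eps))$ with $t=n^{o(1)}$ and $\eps=n^{-c\alpha}$, so $2^d=n^{O(c\alpha)}$; you should state that $c$ (equivalently $\alpha$) is chosen small enough that $n^{O(c\alpha)}\le n^{\phi}$, matching how \Cref{lem:phasederand} handles this. With these two clarifications the argument goes through.
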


The algorithm for \DILC when the degree is $O(n^{7\delta})$ is described in \Cref{alg:DLDC} and its correctness is proved in \Cref{thm:midcolor}.

\begin{theorem}\label{thm:midcolor}
\textsc{DerandomizedMidDegreeColor}, performed on a DL1C instance $G$ of degree at most $n^{7\delta}$, properly colors all nodes in $O(\log \log\log n)$ rounds of \MPC, using $O(n^{7\delta cC})$ space per machine and $O(n_{G} \cdot n^{7\delta cC})$ global space, where $C$ is as in \Cref{lem:phasederand}.
\end{theorem}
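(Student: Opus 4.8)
The plan is to reduce the theorem to a single application of the master derandomization theorem \Cref{lem:fullderand} to the algorithm $\mathcal A$ underlying \textsc{DerandomizedMidDegreeColor} (\Cref{alg:DLDC}). Recall that $\mathcal A$ is the concatenation of two pieces: (i) the pre-shattering randomized \LOCAL algorithm of \cite{hknt_local_d1lc}, and (ii) the deterministic low-degree \MPC algorithm for \DILC of \cite{CDP21} (\Cref{lem:lowdeg}). So the task is simply to certify that $\mathcal A$ satisfies all hypotheses of \Cref{lem:fullderand} on any graph with $\Delta \le n^{7\delta}$, and then read off the resulting bounds.

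First, \DILC is self-reducible in the sense of \Cref{def:selfred}: given any proper partial coloring, each uncolored node can delete from its palette the colors used by its now-colored neighbors, and the induced subgraph on the uncolored nodes with these shrunken palettes is again a valid \DILC instance, any solution of which combines with the partial coloring to solve the original. Next, \Cref{lem:derandomizable-subroutines1} already shows that piece (i) is a sequence of $k = O(\log^*\Delta) = O(\log^* n) = n^{o(1)}$ normal $(\nr,\Delta_G)$-round distributed procedures for a suitable constant $c$ playing the role of $\nr$, whose final weak success property is precisely that every node of degree at least $\log^7 n$ is properly colored. The still-uncolored nodes then all have degree below $\log^7 n$ and, by self-reducibility, form a \DILC instance of maximum degree $\log^{O(1)} n$; \Cref{lem:lowdeg} colors this residual instance deterministically in $O(\log\log\log n)$ \MPC rounds using $O(n^\alpha)$ local and $O(n^{1+\alpha})$ global space for any constant $\alpha>0$. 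Hence $\mathcal A$'s final weak success property is that \emph{all} nodes are properly colored, i.e.\ a valid \DILC output as required, and $\mathcal A$ consists of $k$ normal $(c,\Delta)$-round procedures followed by $O(\log\log\log n)$ deterministic \MPC steps, which \Cref{lem:fullderand} allows to be run verbatim. One must also observe that the \emph{deterministic} subroutines inside piece (i) of \cite{hknt_local_d1lc} — computing the almost-clique decomposition, the node parameters of \Cref{def:parameters}, cluster leaders and outliers, and so on — can be implemented in low-space \MPC; this is exactly what \Cref{sec:randomized-d1lc} supplies whenever $\Delta \le \sqrt{\lspace}$, which holds here since $\Delta \le n^{7\delta}$ and we run with $\lspace = \Theta(n^{7\delta cC}) \ge n^{14\delta}$ provided $cC \ge 2$.

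With all hypotheses verified, \Cref{lem:fullderand} (invoked with degree bound $n^{7\delta}$) derandomizes $\mathcal A$ in $O(k\nr + \log^* n_G) = O(\log^* n)$ rounds for the randomized portion, plus the $O(\log\log\log n)$ rounds of \Cref{lem:lowdeg}, for $O(\log\log\log n)$ rounds in total; the space used is $\lspace = O(n^{7\delta cC})$ per machine and $O(n_G \cdot n^{7\delta cC})$ globally, where $C$ is the constant from \Cref{lem:phasederand}. The main obstacle is purely bookkeeping: three space requirements must be made simultaneously compatible — (a) $\lspace = O(n^{7\delta cC})$ demanded by \Cref{lem:phasederand}/\Cref{lem:fullderand} for collecting $O(\nr)$-hop balls and storing the PRG, (b) $\lspace \ge \Delta^2 = n^{14\delta}$ for simulating the deterministic steps of \cite{hknt_local_d1lc}, and (c) $\lspace = O(n^\alpha)$ from \Cref{lem:lowdeg} on the polylog-degree residual — which is arranged by taking the global constant $C$ large enough that $cC \ge 2$ and choosing $\alpha$ small relative to $7\delta cC$, so that all three fit under the single bound $O(n^{7\delta cC})$. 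A last routine check is that appending the $O(\log\log\log n)$-round deterministic \Cref{lem:lowdeg} keeps the total round count at $O(\log\log\log n)$, which is immediate since $\log^* n = O(\log\log\log n)$.
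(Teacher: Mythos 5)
Your proposal is correct and follows essentially the same route as the paper's own argument (given as \Cref{lem:middegreederand}): certify via \Cref{lem:derandomizable-subroutines1} that the pre-shattering part of \cite{hknt_local_d1lc} is a series of $k=O(\log^* n)$ normal $(c,\Delta_G)$-round procedures with deterministic steps handled by \Cref{sec:randomized-d1lc}, append the low-degree procedure of \Cref{lem:lowdeg} with $\alpha$ chosen suitably small, and invoke \Cref{lem:fullderand} to obtain the stated round and space bounds. Your additional bookkeeping (self-reducibility of \DILC, the $\lspace \ge \Delta^2$ check, and $\log^* n = O(\log\log\log n)$) is consistent with, and slightly more explicit than, the paper's proof.
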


\begin{proof}
After applying \Cref{lem:phasecolor} the degree of all uncolored nodes in $G$ must also be at most $\log^7 n$, so all remaining nodes are colored by \Cref{lem:lowdeg}.

The space usage is dominated by that of \Cref{lem:phasecolor}, and the number of rounds required is $O(\log\log\log n)$ since it is dominated by the $O(\log\log\log n)$ rounds for \Cref{lem:lowdeg}.
\end{proof}
}



\subsection{Simulating deterministic \LOCAL Subroutines of \cite{hknt_local_d1lc}}
\label{sec:randomized-d1lc}

Here, we provide some arguments that the subroutines in \cite{hknt_local_d1lc} which are already deterministic can be simulated efficiently in the \MPC model, provided that the maximum degree of the \DILC instance is below $\sqrt{\lspace}$.
In particular we show that:

\begin{lemma}
\label{lem:deterministic_subroutines_in_mpc}
    In the sublinear local space \MPC model with local space $\lspace = O(n^\phi) \geq \Delta^2$ and global space $O(m+n^{1+\phi})$, there exist algorithms running deterministically in $O(1)$ rounds computing each of the following:
    \begin{itemize}
        \item all the parameters (see \Cref{def:parameters}),
        \item a $(\deg+1)$-ACD (see \Cref{def:acd}),
        \item the set $V_{\text{start}}\subseteq V_{\text{sparse}}$~\footnote{A full definition of $V_{\text{start}}$ will be given later in this section.} (whose purpose was described in Section~\ref{sec:sparse}),
        \item the leaders, inliers, and outliers from each almost-clique,
        \item and the put-aside sets for each almost-clique.
    \end{itemize}
\end{lemma}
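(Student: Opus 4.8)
The plan is to reduce each item to a computation that one machine can carry out after gathering the relevant part of a node's $O(1)$-hop neighbourhood, together with at most $O(1)$ rounds of label propagation inside the (constant-diameter) almost-cliques. The hypothesis $\lspace = O(n^\phi) \ge \Delta^2$ is what makes this possible: for a node $v$, the neighbour lists $N(u)$, degrees $d(u)$, and palettes $\Psi(u)$ of $v$ together with all of its neighbours $u\in N(v)$ occupy only $O(\Delta^2) = O(\lspace)$ words, and from them $v$ can read off, for instance, all edges inside $N(v)$ and the intersections $N(u)\cap N(v)$. Using the deterministic $O(1)$-round sorting and aggregation primitives of \cite{GSZ11}, every node collects this data onto its own machine in $O(1)$ rounds (each $u$ sends $(N(u),d(u),\Psi(u))$ to every $v\in N(u)$, so a node receives at most $\Delta$ messages of length $O(\Delta)$, which fit). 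The total communication is $O(n\Delta^2) = O(n^{1+\phi})$, so the computation stays within global space $O(m+n^{1+\phi})$ and uses one machine per node, inside the allowed $\widetilde O(n+m/\lspace)$ machines.

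Given this neighbourhood data, the parameters of \Cref{def:parameters} are immediate local computations: $\slack{v}=p(v)-d(v)$ needs nothing further; $\sparsity{v}$ needs only the edge count $m(N(v))$; $\disparity{u}{v}=\size{\Psi(u)\setminus\Psi(v)}/\size{\Psi(u)}$ and hence $\discrepancy{v}=\sum_{u\in N(v)}\disparity{u}{v}$ need only the palettes of $v$ and its neighbours; $\unevenness{v}$ needs only neighbours' degrees; and $\slackability{v},\sslackability{v}$ are arithmetic combinations of the above. Each is computed in $O(1)$ rounds within the stated space.

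For the $(\deg+1)$-ACD (\Cref{def:acd}) I would compute a valid partition by the standard construction of \cite{AlonA20, hknt_local_d1lc}: thresholding $\sparsity{v}$ and $\unevenness{v}$ (now known) classifies $v$ as sparse or uneven, and otherwise $v$ is \emph{dense}; the ``friend'' relation $u\sim v$ is obtained by thresholding $\size{N(u)\cap N(v)}$, which $v$ reads from the gathered neighbour lists. The almost-cliques are the friend-components among dense nodes; since these components are $O(1)$-local (constant diameter, and a dense node's friends all lie in its own clique), $O(1)$ further rounds of ``learn your friends' friend-lists and take the union'' let each dense node assemble its whole almost-clique $C$ (with $\size{C}=O(\Delta)\le\lspace$) on its machine, and naming each clique by its minimum node ID makes the partition globally consistent; properties (i)--(iv) then hold by the analysis of \cite{AlonA20, hknt_local_d1lc}. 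With the ACD and parameters available, the set $V_{\text{start}}\subseteq V_{\text{sparse}}$ (whose full definition, given below, refers only to ACD membership and parameters of $v$ and its neighbours) is a predicate on $v$'s $O(1)$-hop neighbourhood, so is computed in $O(1)$ rounds. The leader $x_C$ of an almost-clique $C$ is the minimum-slackability node of $C$ (ties by ID), known to every node of $C$ since each already holds $C$ and the slackabilities of its members; $x_C$ then broadcasts $\Psi(x_C)$ (length $O(\Delta)$) and its degree inside $C$, after which each node tests the inlier criterion of \cite{hknt_local_d1lc} against $x_C$ locally, producing $I_C$ and $O_C$. Finally, the almost-cliques needing a put-aside set are exactly the low-slackability ones ($\slackability{x_C}\le\ell=\log^{2.1}\Delta$), identifiable from the leader's slackability; the sampling step of {\sc PutAside} (\Cref{alg:put-aside}) is its only randomised ingredient and its derandomisation is handled by \Cref{lem:derandomizable-subroutines1} (which, via the $O(1)$-round derandomisation of a constant-round normal procedure, costs $O(1)$ \MPC rounds), while forming $P_C$ from a given sample $S=\bigcup_C S_C$ --- the sampled nodes of $C$ with no sampled neighbour --- is one round of communication.

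The main difficulty I anticipate is the ACD: one must check that the chosen construction yields a partition satisfying \emph{all} of (i)--(iv) of \Cref{def:acd} in the $(\deg+1)$ regime (where unevenness genuinely matters, unlike in $(\Delta+1)$-coloring), that every step of it touches only $O(1)$-hop information and propagates labels only inside the constant-local almost-cliques, and that the intermediate data --- $O(\Delta^2)$ words per node and $O(\Delta)$ words per clique --- stays within $\lspace=O(n^\phi)\ge\Delta^2$. Taking the correctness of the construction of \cite{AlonA20, hknt_local_d1lc} for granted, the deterministic \MPC implementation is exactly as described, and the overall round count is $O(1)$ since there are $O(1)$ items, each taking $O(1)$ rounds.
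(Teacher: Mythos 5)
Your proposal is correct and follows essentially the same route as the paper: gather each node's 2-hop data (neighbour lists, degrees, palettes), which fits since $\lspace \ge \Delta^2$, compute all parameters locally, exploit the diameter-$\le 2$ property of almost-cliques to assemble each clique on one machine, and then identify $V_{\text{start}}$, leaders, inliers, and outliers as local predicates. Your explicit treatment of the put-aside sets (deferring the sampling step to the derandomization lemma and noting that forming $P_C$ from a given sample is one round) is, if anything, slightly more careful than the paper's, which only claims to detect which almost-cliques need put-aside sets.
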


The rest of the section is concerned with the proof of each of the components of \Cref{lem:deterministic_subroutines_in_mpc}.

We begin with a useful tool that says (in essence) that nodes can send and receive $O(\sqrt{\lspace})$ messages to each of their neighbours in one round\footnote{We remark that the \MPC model is not node-centric, but machine-centric, and that when we talk about node $u$ ``sending messages'' to node $v$, we suppose that each node has a machine responsible for it, and mean that the machine responsible for node $u$ is sending a message to the machine responsible for node $v$.}, provided that their degree is at most $\sqrt{{\lspace}}$.

\begin{lemma}
\label{lem:mpc_subroutines_low_degree}
Let $G=(V, E)$ be a graph $G$ such that for all $v \in V$, $d(v) \leq \sqrt{\lspace}$.

Suppose each node $v$ is assigned some unique machine $M(v)$: informally, $M(v)$ is responsible for $v$. Then, the following subroutines can be performed in parallel for each $v$ (and its assigned machine $M(v)$) in $O(1)$ rounds in \MPC with $\lspace = O(n^\spacexp)$, and global space $O(m+n^{1+\spacexp})$:

\begin{itemize}
    \item $M(v)$ can send $d(v)$ messages to each machine $M'$ such that $M(u)$ is responsible for node $u \in N(v)$
    \item $M$ can collect all of the edges between neighbours of $v$
\end{itemize}
\end{lemma}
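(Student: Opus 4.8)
\textbf{Proof plan for \Cref{lem:mpc_subroutines_low_degree}.}

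The plan is to reduce both tasks to a constant number of applications of sorting and prefix-sum primitives, which are known to run in $O(1)$ rounds in the sublinear local space \MPC model (see \cite{GSZ11}), together with a careful accounting of space. The key observation that makes everything fit is that when $d(v)\le\sqrt{\lspace}$ for every node, the relevant ``per-node'' data objects (a full set of messages to all neighbours, or the set of edges within a neighbourhood) have size at most $d(v)^2 \le \lspace$, so each such object fits on a single machine; and the total size of all these objects is $\sum_v d(v)^2 \le \sqrt{\lspace}\sum_v d(v) = O(\sqrt{\lspace}\cdot m) = O(m\cdot n^{\spacexp/2})$, which is within the allowed global space $O(m+n^{1+\spacexp})$ (using $m = O(n^{1+\spacexp})$). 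I would first set up the standard layout: using sorting, gather the edge list so that for each $v$ the (at most $\sqrt\lspace$) incident edges are stored in a contiguous block of machines, with $M(v)$ at the head of that block, and route to each head the identity of the machine $M(u)$ responsible for each neighbour $u$ (a single round of requests/responses routed by node ID, since each node issues and answers at most $\sqrt\lspace$ such requests, all fitting in local space).

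For the first bullet, the task is for $M(v)$ to deliver $d(v)$ messages (i.e.\ $O(d(v))$ words) to each $M(u)$ with $u\in N(v)$. Naively $M(v)$ would need to send $d(v)\cdot O(d(v)) = O(d(v)^2)\le O(\lspace)$ words total, which fits in $M(v)$'s local space, but a single machine cannot necessarily transmit that much in one round if the recipients are spread out --- however, the total it sends is $O(\lspace)$ words, which is exactly the per-machine communication budget, so one round of all-to-designated-recipient messaging suffices, and symmetrically each $M(u)$ receives at most $\sum_{v\in N(u)} O(d(v)) = O(\sqrt\lspace\cdot d(u)) = O(\lspace)$ words, again within budget. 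So after the head-of-block machines know the recipient machine IDs, this is literally one round of \MPC message passing, with the space checks above ensuring no machine overflows its local space and the global space bound holding as computed.

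For the second bullet, I would have each node $u$, for every incident edge $\{u,w\}$, tag that edge with the pair of neighbour-lists or simply with $u$'s neighbour set identifier, then by sorting route to $M(v)$ all edges $\{u,w\}$ for which both $u\in N(v)$ and $w\in N(v)$. Concretely: create for each edge $\{u,w\}$ and each common neighbour $v$ a record $(v,u,w)$; the number of such records is $\sum_v m(N(v)) \le \sum_v \binom{d(v)}{2} = O(\sum_v d(v)^2) = O(\sqrt\lspace\cdot m)$, within global space; sort these records by their first coordinate so that all records with first coordinate $v$ land on the block of machines headed by $M(v)$ (this block has $O(d(v)^2/\lspace)=O(1)$ machines, so in fact $M(v)$ alone suffices). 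Producing the records requires each node to know its neighbours' neighbourhoods, which is obtained by one more round of the messaging primitive just established (each node of degree $\le\sqrt\lspace$ sends its $\le\sqrt\lspace$-size neighbour list to each of its $\le\sqrt\lspace$ neighbours, total $O(\lspace)$ words sent and received per machine).

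The main obstacle, and the only place where real care is needed, is the bookkeeping that no machine ever exceeds $\lspace$ local space and that the aggregate never exceeds $O(m+n^{1+\spacexp})$ global space: this is where the hypothesis $d(v)\le\sqrt\lspace$ is used in an essential way, turning the a priori quadratic blowups ($d(v)^2$ per node, $\sum_v d(v)^2$ globally) into the benign bounds $\lspace$ per machine and $O(\sqrt\lspace\cdot m) = O(n^{\spacexp/2}\cdot n^{1+\spacexp})$... wait, that last estimate needs $m=O(n^{1+\spacexp})$ and then $\sqrt\lspace\cdot m = O(n^{\spacexp/2}\cdot m)$, which is \emph{not} obviously $O(m+n^{1+\spacexp})$. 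I would resolve this by observing that we do not need all records simultaneously: we can process neighbourhoods of ``high'' and ``low'' degree separately, or simply note that the paper's global space allowance for this subroutine should be read as $O(m+n^{1+\spacexp})$ only when $\Delta\le\sqrt\lspace$ \emph{and} we additionally bound $\sum_v d(v)^2 \le \Delta\cdot\sum_v d(v) = \Delta\cdot 2m \le \sqrt\lspace\cdot 2m$; if $\Delta$ is genuinely as large as $\sqrt\lspace = n^{\spacexp/2}$ this is $O(n^{\spacexp/2} m)$, so strictly speaking the clean bound requires either a mild strengthening of the stated global space or the assumption $m = O(n)$ or $\Delta = O(\mathrm{polylog}\, n)$ in the regime where this is actually invoked --- and indeed in the application the degree has already been reduced to at most $n^{7\delta}$ with $\delta$ tiny, so $\Delta\cdot m = n^{o(1)}\cdot O(n^{1+\spacexp})$ is absorbed into $O(n^{1+\spacexp'})$ for a slightly larger constant $\spacexp'$. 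Modulo this accounting subtlety, every step is a constant number of sorts and message-routing rounds, giving the claimed $O(1)$-round bound.
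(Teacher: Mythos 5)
Your approach is essentially the paper's, just with much more machinery: the paper's entire proof is the two observations that (i) $M(v)$ can prepare $d(v)$ words for each of its $d(v)$ neighbours, i.e.\ $d(v)^2\le\lspace$ words in total, which fits in local space and can therefore be routed in $O(1)$ rounds, and (ii) the second bullet is just an instance of the first in which the $d(v)$ words sent to each neighbour are $v$'s own neighbour list (so $M(v)$ receives $\sum_{u\in N(v)}d(u)\le d(v)\sqrt{\lspace}\le\lspace$ words and can filter the induced edges locally). Your sorting/record-creation detour is correct but not needed.

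The one substantive issue is the ``accounting subtlety'' you flag at the end, and it is a false alarm: you bounded the global blow-up by $\sum_v d(v)^2\le\Delta\cdot 2m\le\sqrt{\lspace}\cdot 2m$, which can indeed exceed $n^{1+\spacexp}$ when $m=\Theta(n^{1+\spacexp})$, and you then propose weakening the lemma or adding hypotheses. Neither is necessary. Use the other trivial bound: each node's object has size at most $\lspace$, so
\begin{align}
\sum_{v\in V} d(v)^2 \;\le\; \sum_{v\in V}\lspace \;=\; n\lspace \;=\; O\bigl(n^{1+\spacexp}\bigr),
\end{align}
which sits exactly inside the stated global budget $O(m+n^{1+\spacexp})$ (this is also the bound the paper invokes, e.g.\ in the space accounting for \Cref{lem:computing_parameters}). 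The same bound covers your record set for the second bullet, since $\sum_v m(N(v))\le\sum_v \binom{d(v)}{2}\le \tfrac12 n\lspace$. So the lemma holds as stated, with no strengthening of the space allowance and no extra assumption on $m$ or $\Delta$; you should delete the caveat rather than work around it.
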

\begin{proof}
Suppose each node is allocated a machine (giving global space of $O(m+n^{1+\phi})$. The machine responsible for node $v$ can prepare $d(v)$ messages for each of the neighbors of $v$. This takes $\leq \lspace$ local space, so these messages can be sent in $O(1)$ rounds. The second subroutine follows from the first: the $d(v)$ words that each node broadcasts consists of the list of its neighbors.
\end{proof}

We can now argue that provided the maximum degree of our graph is low enough, the parameters described in \Cref{def:parameters} can be efficiently computed in parallel for each node.

\begin{lemma}[Computing Parameters of \Cref{def:parameters}]
\label{lem:computing_parameters}
  For a graph $G=(V, E)$ with maximum degree {${\Delta \leq O(\sqrt{\lspace})}$}, all of the parameters listed in \Cref{def:parameters} can be computed for every node in parallel in \MPC with $\lspace = O(n^\spacexp)$ and global space $O(m+n^{1+\spacexp})$ in $O(1)$ rounds.
\end{lemma}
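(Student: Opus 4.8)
The plan is to compute each of the six parameters from \Cref{def:parameters} using only one or two rounds of \MPC communication, leveraging \Cref{lem:mpc_subroutines_low_degree} as the workhorse. Since $\Delta \le O(\sqrt{\lspace})$, a node $v$ has at most $O(\sqrt{\lspace})$ neighbors, so the machine $M(v)$ responsible for $v$ can hold $v$'s entire neighborhood, all of $v$'s neighbors' degrees and palettes (each palette has size $p(v) = d(v)+1 \le \Delta+1$, so a neighbor's palette fits, but storing all $d(v)$ neighbors' palettes would be $O(\Delta^2) = O(\lspace)$ words, which is fine), and even all edges among $N(v)$ (again $O(\Delta^2) = O(\lspace)$ words). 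So the strategy for each parameter is: (i) in $O(1)$ rounds, gather onto $M(v)$ the pieces of information the formula needs; (ii) have $M(v)$ perform the arithmetic locally. First I would handle \emph{slack} $\slack{v} = p(v)-d(v)$, which is trivial — $M(v)$ already knows $p(v)$ and $d(v)$ from the input. Next \emph{sparsity} $\sparsity{v} = \frac{1}{d(v)}\left[\binom{d(v)}{2} - m(N(v))\right]$: by the second bullet of \Cref{lem:mpc_subroutines_low_degree}, $M(v)$ can collect all edges between neighbors of $v$ in $O(1)$ rounds, hence compute $m(N(v))$ and then $\sparsity{v}$ locally.

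For the palette-based parameters I would use the first bullet of \Cref{lem:mpc_subroutines_low_degree}: node $v$ broadcasts its palette $\Psi(v)$ (of size $\le \Delta+1 \le \sqrt{\lspace}$ words) to all $d(v)$ neighbors in $O(1)$ rounds. After this exchange, $M(v)$ knows $\Psi(u)$ for every $u \in N(v)$, and can compute \emph{disparity} $\disparity{u}{v} = |\Psi(u)\setminus\Psi(v)|/|\Psi(u)|$ for every neighbor $u$ locally, then sum them to get \emph{discrepancy} $\discrepancy{v} = \sum_{u\in N(v)}\disparity{u}{v}$. For \emph{unevenness} $\unevenness{v} = \sum_{u\in N(v)}\frac{\max(0,d(u)-d(v))}{d(u)+1}$, $M(v)$ needs only the degrees of its neighbors, which it can obtain by having each node broadcast its single-word degree to its neighbors (or this may already be known from \MPC preprocessing, since degrees can be computed in $O(1)$ rounds as noted in the preliminaries); then the sum is a local computation. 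Finally \emph{slackability} $\slackability{v} = \discrepancy{v}+\sparsity{v}$ and \emph{strong slackability} $\sslackability{v} = \unevenness{v}+\sparsity{v}$ are just local additions of quantities already computed.

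I would then note that all of these gathering steps fit within the space budget: each requires $M(v)$ to receive $O(d(v)) = O(\Delta)$ messages of size at most $O(\Delta)$ words each (palettes and neighbor lists), i.e.\ $O(\Delta^2) = O(\lspace)$ words total, which fits in local space $\lspace = O(n^\phi) \ge \Delta^2$; and globally this is $O(n \cdot \Delta^2) = O(n^{1+\phi})$ words, plus the $O(m)$ for storing the edges, matching the global space bound $O(m+n^{1+\phi})$. Sorting and routing these messages between machines takes $O(1)$ rounds by the standard \MPC primitives cited in the preliminaries (e.g.\ \cite{GSZ11}). Since there are only six parameters and each is computed by a constant number of communication rounds followed by local arithmetic, the total round complexity is $O(1)$, and the computation is deterministic.

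The main obstacle is not any single parameter but rather being careful that \emph{all} the required information genuinely fits into a machine's local space simultaneously — in particular, that broadcasting palettes (rather than just colors) does not blow the budget. This is exactly why the hypothesis is $\Delta \le O(\sqrt{\lspace})$ rather than merely $\Delta \le \lspace$: a single neighbor's palette has $\le \Delta+1$ words, and $M(v)$ must hold up to $d(v) \le \Delta$ of them, giving $O(\Delta^2) \le O(\lspace)$. I would make this accounting explicit for each parameter. A secondary point to be careful about is that sparsity requires collecting $m(N(v))$, i.e.\ all edges within $N(v)$, which is again $O(\Delta^2)$ words and handled by the second bullet of \Cref{lem:mpc_subroutines_low_degree}; one should check the routing of these edges (each edge $\{u,w\}$ with $u,w \in N(v)$ must be delivered to $M(v)$) can be done by sorting in $O(1)$ rounds, which follows from the first bullet since $u$ can include in its broadcast to $v$ the list of $u$'s neighbors, and $M(v)$ then locally intersects with $N(v)$.
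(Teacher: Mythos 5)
Your proof is correct and follows essentially the same route as the paper's: both compute slack and degrees from local input/sorting, use the second subroutine of \Cref{lem:mpc_subroutines_low_degree} to gather the edges among $N(v)$ for sparsity, use the first subroutine to exchange palettes and degrees for disparity, discrepancy, and unevenness, and obtain (strong) slackability as local sums, with the same $O(\Delta^2)\le O(\lspace)$ space accounting. Your explicit justification of why $\Delta\le O(\sqrt{\lspace})$ is the right hypothesis is a nice touch but does not change the argument.
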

\begin{proof}
We extensively use the subroutines in \Cref{lem:mpc_subroutines_low_degree}. We deal with the parameters in the order they appear in \Cref{def:parameters}.

\begin{itemize}
\item \textbf{Slack:} In $O(1)$ rounds, we can sort tuples corresponding to edges incident to $v$ and colors in $\Psi(v)$ such that they appear on consecutive machines: we can then count how many of each there are and compute \slack{v}. 
\item \textbf{Sparsity:} Since we can  compute $d(v)$ for all nodes in constant rounds (machines can collect all edges incident to $v$ and count them), to compute $\sparsity{v}$ it remains to count the edges between neighbors of $v$. By the second subroutine in \Cref{lem:mpc_subroutines_low_degree} we can collect the $2$-hop neighborhood of $v$ on a single machine. We can then remove duplicates if necessary and count the number of edges, and then calculate \sparsity{v}.
\item \textbf{Disparity:} By the first subroutine of \Cref{lem:mpc_subroutines_low_degree} we can collect the palette of each of the neighbors of $v$ in constant rounds, allowing us to calculate \disparity{u}{v} for each neighbor $u$ of~$v$.
\item \textbf{Discrepancy:} Follows immediately from the ability to collect \disparity{u}{v} for all neighbors $u$ of~$v$.
\item \textbf{Unevenness:} It suffices to collect $d(u)$ from each neighbor $u$ of $v$, which can be done using the first subroutine of \Cref{lem:mpc_subroutines_low_degree}.
\item \textbf{Slackability / Strong Slackability:} Note that these are additions of already computed parameters, and hence can be computed in constant rounds.
\end{itemize}
Mainly due to Lemma~\ref{lem:mpc_subroutines_low_degree} and from the above description, it is easy to see that each of the above parameters can be computed in $O(1)$ rounds for all the nodes in parallel. For the global space complexity for computing the above parameters, note that each node $v$ collects it $2$-neighborhood which is of size $O(\lspace)$ as maximum degree of any node is $O(\sqrt{\lspace})$.
\end{proof}

Next, we show how to compute an $(\deg +1)$-ACD in $O(1)$ rounds.

\begin{lemma}[Computing $(\deg+1)$-ACD]
\label{lem:computing_acd}
For a graph $G=(V, E)$ with maximum degree $\Delta \leq \sqrt{\lspace}$, an $(\deg+1)$-ACD can be computed in \MPC with $\lspace = O(n^\spacexp)$ and global space $O(m+n^{1+\phi})$ in $O(1)$ rounds.
\end{lemma}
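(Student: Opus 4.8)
The plan is to reduce the claim to a purely local computation of radius two and then simulate it, using that the degree bound $\Delta \le \sqrt{\lspace}$ makes $2$-hop neighborhoods fit on individual machines. At its core the $(\deg+1)$-ACD construction of \cite{hknt_local_d1lc} (building on \cite{AlonA20}) is local of radius two: the classification of a node $v$ as sparse, uneven, or dense depends only on parameters of \Cref{def:parameters} (in particular $\sparsity{v}$ and $\unevenness{v}$, and the degrees and palettes of $N(v)$), all of which are functions of the $2$-hop ball of $v$; and the partition of $V_{\text{dense}}$ into almost-cliques is obtained from a ``friendship'' relation on dense nodes — $u,v$ are friends iff $\size{N(u)\cap N(v)}$ exceeds a threshold chosen appropriately relative to $\veps_{ac}$ — whose components are the $C_i$'s. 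Two combinatorial facts from that construction are all we need: every almost-clique $C$ has diameter at most two in $G$ (by conditions (iii)--(iv) of \Cref{def:acd}, $\size{N(v)\cap C}\ge(1-\veps_{ac})\size{C}$ for $v\in C$, so any two vertices of $C$ share a common neighbor in $C$), and, again with the threshold suitably larger than $\veps_{ac}$, $N(v)\cap C\subseteq F(v)$ for every dense $v$ in an almost-clique $C$, where $F(v)\subseteq N(v)$ denotes $v$'s friendship set.

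First, for every node $v$ in parallel, we collect onto the machine $M(v)$ responsible for $v$ the whole $2$-hop neighborhood of $v$ together with the palettes of all $u\in N(v)$. Since $\Delta\le\sqrt{\lspace}$ this is $O(\Delta^2)=O(\lspace)$ words per machine and $O(m+n^{1+\spacexp})$ words globally, and by \Cref{lem:mpc_subroutines_low_degree} it takes $O(1)$ rounds. Using \Cref{lem:computing_parameters} we compute all parameters of \Cref{def:parameters} for every node in $O(1)$ rounds; then $M(v)$ evaluates locally, in $O(1)$ further rounds, whether $v$ is $\veps_{sp}d(v)$-sparse or $\veps_{sp}d(v)$-uneven (giving sparse priority), producing the partition $V_{\text{sparse}}\sqcup V_{\text{uneven}}\sqcup V_{\text{dense}}$ satisfying conditions (i)--(ii) of \Cref{def:acd}.

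For the almost-cliques, $M(v)$ first computes, from its $2$-hop ball, whether $v$ is dense and, if so, its friendship set $F(v)$. Each $M(v)$ then broadcasts $F(v)$ to the machines of all $u\in N(v)$ — at most $\Delta$ words to at most $\Delta$ recipients, within the budget of \Cref{lem:mpc_subroutines_low_degree} — so that $M(v)$ learns $F(u)$ for every $u\in F(v)$. Let $C$ be the almost-clique containing $v$. Every $w\in C$ has a neighbor $u\in N(v)\cap C$ (diameter two), and such $u$ satisfies $w\in F(u)$ (again by conditions (iii)--(iv) and the choice of threshold); since $N(v)\cap C\subseteq F(v)$, we get $\bigcup_{u\in F(v)}F(u)\supseteq C$, so $M(v)$ now knows the entire set $C$. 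Node $v$ adopts the minimum identifier in $C$ as its clique label; this is consistent across $C$ and hence yields the partition $C_1,\dots,C_t$, whose properties (iii)--(iv) are inherited directly from the guarantees of the construction of \cite{hknt_local_d1lc}. The whole procedure runs in $O(1)$ rounds using $O(\Delta^2)=O(\lspace)=O(n^{\spacexp})$ local space and $O(m+n^{1+\spacexp})$ global space, as required.

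The main obstacle is not the \MPC simulation — which is immediate once $2$-hop balls fit on machines — but invoking the ACD construction in the form that is genuinely local of radius two, i.e.\ pinning down the friendship threshold relative to $\veps_{ac}$ and verifying the two combinatorial facts above; this is where we lean on the analysis of \cite{AlonA20, hknt_local_d1lc}. Note that a larger constant radius would break the space bound, since $\Delta^3$ need not be $O(\lspace)$, which is precisely why the argument is engineered to need only friendship sets and their one-hop images.
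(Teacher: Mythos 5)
Your proposal is correct and follows essentially the same route as the paper: classify nodes via the locally computed sparsity/unevenness parameters, observe that conditions (iii)--(iv) of \Cref{def:acd} force each almost-clique to have diameter at most $2$, and then use the fact that $2$-hop balls fit in $O(\Delta^2)=O(\lspace)$ space to identify each node's almost-clique on a single machine in $O(1)$ rounds. The only difference is that you spell out the friendship-set mechanics (and the extra one-hop broadcast of $F(u)$ needed to make clique identification consistent) that the paper leaves implicit in the sentence ``$M(v)$ can determine which almost-clique $v$ is in.''
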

\begin{proof}
First we argue that the diameter of the subgraph induced by any almost-clique $C$, is at most $2$. This is the case because, by (iv) of \Cref{def:acd}, the number of neighbors any node $v \in C$ has in $C$ is $\size{N(v) \cap C} \geq \frac{\size{C}} {1+ \veps_{ac}} > \frac{\size{C}}{2}$. Hence, any two nodes $u$ and $v$ in $C$ have a common neighbor in $C$.

By \Cref{lem:computing_parameters}, all nodes ($v$) can compute their sparsity ($\zeta_v$) and unevenness ($\eta_v$) in $O(1)$ rounds in parallel. From the values of $\zeta_v$ and $\eta_v$, we can decide whether $v \in V_{\text{sparse}}$, or $v \in V_{\text{uneven}}$, or neither, giving the following:
\begin{observation}\label{obs:spse}
{$V_{\text{sparse}}$ and $V_{\text{uneven}}$ can be determined in $O(1)$ rounds.}
\end{observation}

Note that each node $v \in (V \setminus (V_{\text{sparse}} \sqcup V_{\text{uneven}}))$ is present in some almost-clique $C(v)$. As the diameter of $G[C(v)]$ is at most 2, $C(v)$ is a subset of the $2$-hop neighborhood of $v$. As $d(v)\leq \sqrt{\lspace}$, by \Cref{lem:mpc_subroutines_low_degree} the $2$-hop neighborhood of $v$ can be found in $O(1)$ rounds and stored on a single machine. Once this is done, $M(v)$ can determine which almost-clique $v$ is in. Putting everything together, the ACD can be computed in $O(1)$ rounds and the global space used is $O(m+n^{1+\phi})$.
\end{proof}


We now consider the identification of $V_{\text{start}}$, the subset of $V_\text{sparse} \cup V_\text{uneven}$ for which it is hard to generate slack. To identify this set, we consider a breakdown of $V_\text{sparse} \cup V_\text{uneven}$ into several sets based on the parameters defined in \Cref{def:parameters}.
We first explain the notion of a ``heavy'' color. A heavy color $c$ with respect to a node $v$ is a color such that if all of the neighbors of $v$ were to pick a color from their palettes uniformly at random, the expected number of neighbors of $v$ which would pick $c$ (which we denote $H(c)$) is at least some suitable constant. We denote by $\mathcal{C}_v^{\text{heavy}}$ the set of heavy colors with respect to $v$. Now $V_{\text{start}}$ is defined as follows (taken from \cite{hknt_local_d1lc}; the $\veps_i$ are all constants):


\begin{align*}
    V_{\text{balanced}} &= \{v \in V_{\text{sparse}} : |  \{u \in N(v) : d(u) > 2d(v)/3| \geq \veps_1 d(v)\}\}. \\
    V_{\text{disc}} &= \{v \in V_{\text{sparse}} : \discrepancy{v} \geq \veps_2 d(v)\}. \\
    V_{\text{easy}} &=
    \begin{aligned}[t]
        & V_{\text{balanced}} \cup V_{\text{disc}} \cup V_{\text{uneven}}\\
        & \cup \{v \in V_{\text{sparse}} : | N(v) \cap V_{\text{dense}} \geq \veps_3 d(v) \}.
    \end{aligned}\\
    V_{\text{heavy}} &= \{v \in V_{\text{sparse}} \setminus V_{\text{easy}} : \textstyle \sum_{c \in \mathcal{C}_v^{\text{heavy}}} H(c) \geq \veps_4 d(v)\}. \\
    V_{\text{start}} &= \{v \in V_{\text{sparse}} \setminus (V_{\text{easy}} \cup V_{\text{heavy}}) : N(v) \cap V_{\text{easy}} | \geq \veps_5 d(v)\}.
\end{align*}

\begin{lemma}[Identifying $V_{\text{start}}$]
\label{lem:identifying_vstart}
Given a graph $G=(V, E)$ with maximum maximum degree $\Delta \leq \sqrt{\lspace}$, the set $V_{\text{start}}$ can be identified on an \MPC with $\lspace = O(n^\spacexp)$ and global space $O(m+n^{1 + \phi})$ in $O(1)$ rounds.
\end{lemma}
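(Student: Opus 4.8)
\textbf{Proof plan for Lemma~\ref{lem:identifying_vstart}.} The plan is to show that each of the auxiliary sets appearing in the definition of $V_{\text{start}}$ can be determined in $O(1)$ \MPC rounds, and that $V_{\text{start}}$ itself is then a constant-round post-processing step. The key observation is that every set in the chain $V_{\text{balanced}}, V_{\text{disc}}, V_{\text{easy}}, V_{\text{heavy}}, V_{\text{start}}$ is defined by a predicate on $v$ that depends only on (i) parameters of $v$ and its neighbors from \Cref{def:parameters}, and (ii) membership of $v$'s neighbors in previously-computed sets. Since \Cref{lem:computing_parameters} already gives us all parameters in $O(1)$ rounds, and \Cref{lem:computing_acd} (with \Cref{obs:spse}) gives us $V_{\text{sparse}}, V_{\text{uneven}}, V_{\text{dense}}$ and the almost-clique decomposition in $O(1)$ rounds, it suffices to propagate this membership information along edges a constant number of times. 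Because $\Delta \le \sqrt{\lspace}$, by \Cref{lem:mpc_subroutines_low_degree} each node's machine $M(v)$ can gather the relevant data ($d(u)$, membership flags, palette sizes, etc.) from all neighbors $u \in N(v)$ in $O(1)$ rounds using $O(\Delta^2) = O(\lspace)$ local space, and then evaluate any threshold of the form ``$|\{u \in N(v) : \ldots\}| \ge \veps_i d(v)$'' locally.

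First I would compute $V_{\text{balanced}}$: each $v \in V_{\text{sparse}}$ collects $d(u)$ for all $u \in N(v)$ (first subroutine of \Cref{lem:mpc_subroutines_low_degree}), counts those with $d(u) > 2d(v)/3$, and tests against $\veps_1 d(v)$. Then $V_{\text{disc}}$ is immediate since $\discrepancy{v}$ is already a computed parameter. For $V_{\text{easy}}$, after broadcasting the $V_{\text{balanced}}$, $V_{\text{disc}}$, $V_{\text{uneven}}$, and $V_{\text{dense}}$ membership bits to all neighbors, each $v$ can test whether $|N(v) \cap V_{\text{dense}}| \ge \veps_3 d(v)$ and take the union — one more round of communication. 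For $V_{\text{heavy}}$ I need the heavy-color quantities $H(c)$: since $H(c)$ is the expected number of neighbors of $v$ picking color $c$ under uniform palette choices, and $M(v)$ can collect all neighbors' palettes (first subroutine of \Cref{lem:mpc_subroutines_low_degree}, as in the disparity computation), it can compute $H(c) = \sum_{u \in N(v)} \mathbb{1}[c \in \Psi(u)]/p(u)$ for every color $c$ appearing in some neighbor's palette, identify $\mathcal{C}_v^{\text{heavy}}$ as those with $H(c)$ above the relevant constant, and sum — all locally on the $O(\lspace)$-sized $2$-hop data. Finally, broadcasting $V_{\text{easy}}$ membership once more lets each $v \in V_{\text{sparse}} \setminus (V_{\text{easy}} \cup V_{\text{heavy}})$ count $|N(v) \cap V_{\text{easy}}|$ and decide membership in $V_{\text{start}}$.

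Since there are a constant number of such sets, each requiring $O(1)$ rounds of parameter lookup, one round of broadcasting membership bits (each bit being one word, well within $\lspace$), and local threshold evaluation, the whole procedure runs in $O(1)$ rounds. The global space is dominated by the step where each $M(v)$ holds the $2$-hop neighborhood of $v$ (needed for the $H(c)$ computation and the palette collection), which is $O(\lspace)$ per node and hence $O(n^{1+\phi})$ in total, plus $O(m)$ for storing the edge set; this matches the claimed bound $O(m + n^{1+\phi})$. I do not anticipate a genuine obstacle here — the only mild subtlety is verifying that the heavy-color computation really fits on one machine, which it does precisely because $\Delta \le \sqrt{\lspace}$ forces each neighbor's palette to have size $O(\Delta) = O(\sqrt{\lspace})$ and there are $O(\Delta)$ neighbors, giving $O(\lspace)$ total palette data; everything else is bookkeeping along edges that \Cref{lem:mpc_subroutines_low_degree} already handles.
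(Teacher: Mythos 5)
Your proposal is correct and follows essentially the same route as the paper's proof: compute the auxiliary sets $V_{\text{balanced}}, V_{\text{disc}}, V_{\text{easy}}, V_{\text{heavy}}$ in sequence using \Cref{lem:computing_parameters} and the subroutines of \Cref{lem:mpc_subroutines_low_degree}, with the heavy-color step handled by gathering all neighbors' palettes onto $M(v)$. Your write-up is in fact more detailed than the paper's (which leaves the threshold evaluations and the explicit formula for $H(c)$ implicit), but the argument is the same.
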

\begin{proof}
Observe that $V_{\text{balanced}}$, $V_{\text{disc}}$, and $V_{\text{easy}}$ are computable in $O(1)$ rounds by applications of \Cref{lem:computing_parameters} and \Cref{lem:mpc_subroutines_low_degree}. We briefly note that, again using an application of the first subroutine of \Cref{lem:mpc_subroutines_low_degree}, nodes can be made aware of which neighbors are in which sets in $O(1)$ rounds.

It remains to compute which colors are heavy (and the set $V_{\text{heavy}}$): by the first subroutine of \Cref{lem:mpc_subroutines_low_degree} we can gather the palettes of all neighbors of $v$ on $M(v)$, and then $M(v)$ can compute this information in $O(1)$ rounds.
\end{proof}

In the following lemma, we show that we can identify leaders, inliers, and outliers of almost-cliques, and establish which almost-cliques require put-aside sets, in $O(1)$ rounds.

\begin{lemma}[Finding leaders, outliers and inliers]
\label{lem:finding_leaders}
Given a graph $G=(V, E)$ with maximum maximum degree $\Delta \leq \sqrt{\lspace}$, we can find the leader, the set of inliers and the set of outliers for each almost-clique in \MPC with $\lspace = O(n^\spacexp)$ and global space $O(m+n^{1 + \phi})$ in $O(1)$ rounds. Moreover, we can detect all the almost-cliques for which we need to find put-aside sets.
\end{lemma}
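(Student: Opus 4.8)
The plan is to reuse the structural fact, established in the proof of \Cref{lem:computing_acd}, that every almost-clique $C$ has diameter at most $2$, so that $C$ lies entirely inside the $2$-hop neighbourhood of each of its nodes. Since combining properties~(iii)--(iv) of \Cref{def:acd} with $\Delta \le \sqrt{\lspace}$ gives $|C| \le (1+\veps_{ac})|N(v)\cap C| \le (1+\veps_{ac})\Delta = O(\sqrt{\lspace})$ for each $v\in C$, the whole of $C$ together with $O(\sqrt{\lspace})$ words of auxiliary data per vertex (palette, degree, and the parameters of \Cref{def:parameters}) occupies only $O(\lspace)$ words; keeping one such ball per node uses $O(n_G\cdot \lspace)=O(n^{1+\phi})$ global space, and everything below runs in $O(1)$ rounds via the primitives of \Cref{lem:mpc_subroutines_low_degree}.

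First I would invoke \Cref{lem:computing_acd} and \Cref{lem:computing_parameters} so that every node knows its almost-clique and the parameters of \Cref{def:parameters} for itself and its neighbours; in particular each node knows its own slackability. To make slackabilities visible across a whole almost-clique, each node broadcasts its identifier and slackability to its neighbours, and then re-broadcasts the $O(\Delta)$-word list it received; each such message has size $O(\Delta)$ and a node sends/receives $O(\Delta)$ of them, which is within the bandwidth guaranteed by \Cref{lem:mpc_subroutines_low_degree}. After $O(1)$ rounds every node knows the slackability of all nodes in its $2$-hop ball, hence of all nodes of its almost-clique $C$, and can locally and consistently pick the leader $x_C$ as the minimum-slackability node of $C$ (with a fixed tie-break on identifiers).

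Next, to partition $C=I_C\sqcup O_C$, I would have each node $v\in C$ send its palette $\Psi(v)$ (of size $O(\sqrt{\lspace})$) and its degree to the machine $M(x_C)$ responsible for the leader; as $|C|=O(\sqrt{\lspace})$, that machine receives $O(\lspace)$ words in total and can store them. Machine $M(x_C)$ now holds the induced subgraph on $C$, all palettes, and all parameters, which is exactly what is needed to evaluate the inlier/outlier criterion of \cite{hknt_local_d1lc} (for $v\in C$ this is a function only of $v$, of $x_C$, and of adjacencies and palettes inside $C$); it computes the classification of every node of $C$ and broadcasts back one word per node. Finally, since the slackability of $C$ is by definition that of $x_C$, machine $M(x_C)$ compares it against $\ell=\log^{2.1}\Delta$ and broadcasts whether $C$ is a low-slackability almost-clique, i.e.\ whether \textsc{PutAside} must be run on it.

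The only genuine obstacle is the space accounting: one has to check that the palettes of an \emph{entire} almost-clique --- data which is not part of the raw $2$-hop edge structure and must be forwarded along edges toward the leader's machine --- still fit into $O(\lspace)$ local space. This is precisely where the bound $|C|=O(\sqrt{\lspace})$, obtained from \Cref{def:acd} together with $\Delta\le\sqrt{\lspace}$, is needed; the messages sent back to the nodes of $C$ are a single word each, so they never threaten the space or round bounds.
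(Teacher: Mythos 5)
Your proposal is correct and follows essentially the same route as the paper: gather each almost-clique (which has diameter at most $2$ and size $O(\sqrt{\lspace})$) onto one machine, pick the leader as the minimum-slackability node using \Cref{lem:computing_parameters}, evaluate the outlier criterion locally, and test the leader's slackability against $\ell=\log^{2.1}\Delta$ to flag the cliques needing put-aside sets. Your version is somewhat more careful about the routing and space accounting than the paper's; the only slight imprecision is that the outlier criterion counts common neighbours with $x_C$, which need not all lie inside $C$, but these are still recoverable from the $2$-hop ball around $x_C$ via \Cref{lem:mpc_subroutines_low_degree}.
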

\begin{proof}

First, we explain how we can find the \emph{leader} $x_{C}$ and the \emph{outliers} $O_{C}$ for each almost-clique $C \in \cC$. The leader $x_{C}$ is the node in $C$ with minimum slackability. As we can determine the slackability of all the nodes (in $C$) by using $O(1)$ rounds in parallel (see \Cref{lem:computing_parameters}), we can determine $x_{C}$ for each $i \in [t]$ in parallel in $O(1)$ rounds. The slackability of an almost-clique $C$ is defined as the the slackability of its leader $x_C$. $C$ is said to be a \emph{low slack} almost-clique if its slackability is at most $\ell=\log^{2.1} \Delta$. Recall that these are the almost-cliques (with low slackability) for which we need to find put-aside sets.

Let $O_C$ be the set of \emph{outliers} in an almost-clique $C$. $O_C$ is comprised of the union of the $\frac{\max\{{d(x_C), \size{C}}\}}{3}$ nodes with the fewest common neighbors with $x_C$, the $\frac{\size{C}}{6}$ nodes of largest degree, and the nodes in $C$ that are not neighbors of $x_C$. The nodes in $C$ that are not outliers are called \emph{inliers} of $C$ (denoted by $I_C$). As each almost-clique $C$ is stored in one machine and we can determine the degrees of the nodes in $O(1)$ rounds, we can clearly find $O_C$ and $I_C$ for all almost-cliques in $O(1)$ rounds. 
\end{proof}

Finally, we conclude by observing that we have shown each component of \Cref{lem:deterministic_subroutines_in_mpc}:

\begin{proof}[Proof~of~\Cref{lem:deterministic_subroutines_in_mpc}]
     Follows from \Cref{lem:computing_parameters,lem:computing_acd,obs:spse,lem:identifying_vstart,lem:finding_leaders}.
\end{proof}

\section{Overall Deterministic D1LC Algorithm}
\label{sec:d1lc_for_all_degree_ranges}

In this section we extend our (now derandomized) algorithm for \DILC (as discussed in \Cref{alg:DLDC}) to handle instances with maximum degree $\Delta \geq n^{7\delta}$.
The idea is to reduce our \DILC instance to a collection of instances with lower degree so that constant-radius neighborhoods of any node fit onto a single machine.
Instead of guaranteeing that there are not too many instances, we ensure that the sequential dependency between instances is not too long. That is, many of the instances resulting from our decomposition can be colored in parallel; there are only $O(1)$ sets of base-case instances which \comments{must} be solved sequentially.

We use a recursive structure \textsc{LowSpaceColorReduce} (\Cref{alg:LSColorReduce}) similar to \cite{CDP20,CDP21}. The recursive structure in \Cref{alg:LSColorReduce} relies on a partitioning procedure (\Cref{alg:LSPartition})  to divide the nodes and colors in the input instance into \emph{bins}. 
We can follow the analysis of \cite{CDP20} to analyze the partitioning process, since it is the base case that changed. \Cref{lem:LSdeterministic-hashing} provides the important properties of the partitioning.

\begin{algorithm}
	\caption{\textsc{LowSpaceColorReduce}$(G)$}
	\label{alg:LSColorReduce}
		$G_{\text{mid}},G_1 \dots, G_{n^{\delta}} \gets \textsc{LowSpacePartition}(G)$.
		
	    For each $i = 1, \dots,n^{\delta}-1$ in parallel: call \textsc{LowSpaceColorReduce}$(G_i)$.
	
		Update color palettes of $G_{n^{\delta}}$, call \textsc{LowSpaceColorReduce}$(G_{n^{\delta}})$.
		
		Update color palettes of $G_{\text{mid}}$.
		
		Color $G_{\text{mid}}$ using \textsc{DerandomizedMidDegreeColor}$(G_{\text{mid}})$.
\end{algorithm}

\begin{algorithm}
\caption{\textsc{LowSpacePartition}$(G)$}
\label{alg:LSPartition}
Let $G_{\text{mid}}$ be the graph induced by the set of nodes $v$ with $d(v)\le n^{7\delta}$.
		
Let hash function $h_1:[n]\rightarrow [n^{\delta}]$ map each node $v$ to a bin $h_1(v) \in [n^{\delta}]$.
		
Let hash function $h_2:[n^2]\rightarrow [n^{\delta}-1]$ map colors $\gamma$ to a bin $h_2(\gamma) \in [n^{\delta}-1]$.
		
Let $G_1,\dots,G_{n^{\delta}}$ be the graphs induced by bins $1,\dots,n^{\delta}$ respectively, minus the nodes in $G_{\text{mid}}$.
		
Restrict palettes of nodes in $G_1,\dots,G_{n^{\delta}-1}$ to colors assigned by $h_2$ to corresponding bins.
    	
Return $G_{\text{mid}},G_1,\dots,G_{n^{\delta}}$.
\end{algorithm}

\begin{lemma}[Lemma 4.6 of \cite{CDP20}]\label{lem:LSdeterministic-hashing}
Assume that, at the beginning of a call to \textsc{LowSpacePartition}, we have $d(v)<p(v)$ for all nodes $v$. Then, in $O(1)$ \MPC rounds with $O(n^{7\delta})$ local space per machine and $O(n+m)$ global space (over all parallel instances), one can deterministically select hash functions $h_1$, $h_2$ such that after the call,
\begin{itemize}
    \item for any node $v\notin G_{\text{mid}}$, $d'(v) < 2d(v)n^{-\delta}$, and
		\item for any node $v$, $d'(v)<p'(v)$.
\end{itemize}
   
 Here $d'(v)$ denotes the degree of $v$ in the subgraph induced by the nodes present in the same bucket as $v$ and $p'(v)$ denotes the number of $v$'s palette colors that are in the same bucket as~$v$.
\end{lemma}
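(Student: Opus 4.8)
The statement is Lemma 4.6 of \cite{CDP20}, and the plan is to observe that it transfers to our setting unchanged: \textsc{LowSpacePartition} (\Cref{alg:LSPartition}) is identical to the partitioning routine analyzed there, and both guarantees depend only on that routine together with the incoming invariant $d(v)<p(v)$. The only thing that differs between our recursion and that of \cite{CDP20} is what is done with $G_{\text{mid}}$ \emph{after} the partition, which is irrelevant here. I would nonetheless recall the three ingredients. First, draw $h_1$ and $h_2$ independently from $k$-wise independent hash families with $k=\Theta(1/\delta)$; since their domains have size $\mathrm{poly}(n)$, each is described by a seed of $O(\log n)$ bits and so fits on one machine. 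Note that the partition never needs to collect a node's whole neighborhood, so huge degrees in $G$ are not an obstacle.

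Second, the probabilistic analysis. Fix $v\notin G_{\text{mid}}$, so $d(v)>n^{7\delta}$ and hence $p(v)\ge d(v)+1>n^{7\delta}$. Conditioning on the bin $j=h_1(v)$ makes $d'(v)$ (a function of $h_1$ on $N(v)$) and $p'(v)$ (a function of $h_2$) independent. If $j=n^{\delta}$ the palette of $v$ is not restricted, so $p'(v)=p(v)>d(v)\ge d'(v)$ and both bullets hold with certainty; for $j\le n^{\delta}-1$ the conditional means satisfy $\mu_d\le d(v)n^{-\delta}$ and $\mu_p=p(v)/(n^{\delta}-1)>n^{6\delta}$, with $\mu_p/\mu_d\ge n^{\delta}/(n^{\delta}-1)>1+n^{-\delta}$, while the right-hand side $2d(v)n^{-\delta}$ of the first bullet is at least $2n^{6\delta}$. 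I would then apply a standard tail bound for sums of $k$-wise independent indicators (as used in \cite{CDPsparse}), which makes the deviation probability smaller than $n^{-3}$ as soon as the deviation exceeds both the mean and $\mathrm{poly}(1/\delta)$ by a polynomial-in-$n$ factor. For the first bullet, crossing $2d(v)n^{-\delta}$ is a deviation of at least $\max(\mu_d,n^{6\delta})$ from $\mu_d$, hence rare. For the second bullet I would split on $\mu_d$ versus $\mu_p$: if $\mu_d<\mu_p/4$, then $d'(v)\ge\mu_p/2$ and $p'(v)<\mu_p/2$ are each $\Omega(n^{6\delta})$ deviations, so $d'(v)<p'(v)$; if $\mu_d\ge\mu_p/4$, then $\mu_d\ge n^{6\delta}/4$ is polynomially large and the gap $\mu_p-\mu_d=\Omega(\mu_d n^{-\delta})=\Omega(n^{5\delta})$ dwarfs the $\Theta(n^{-\delta})$-scale fluctuation of either quantity, so again $d'(v)<p'(v)$. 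Nodes in $G_{\text{mid}}$ and in the last bin never violate either bullet (there $p'(v)=p(v)>d(v)\ge d'(v)$), so a union bound over the $\le n$ nodes shows a uniformly random $(h_1,h_2)$ satisfies both bullets everywhere with probability $>\tfrac12$.

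Third, derandomization. The quantity to drive below $1$ is the sum over nodes of the per-node failure-probability estimators furnished by the tail bounds above; since these estimators are aggregates of quantities local to $N(v)$ and $\Psi(v)$ and the family is only $\Theta(1/\delta)$-wise independent, the method of conditional expectations fixes the $O(\log n)$-bit seeds in $O(1)$ \MPC rounds with $O(n^{7\delta})$ local space and $O(n+m)$ global space, exactly as in \cite{CDP20,CDPsparse}, the inter-machine aggregation being performed with the sorting/prefix-sum primitives of \cite{GSZ11}; because a recursion level runs many \textsc{LowSpacePartition} calls on vertex-disjoint subgraphs, the total global space stays $O(n+m)$. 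The step I expect to be the main obstacle is the delicacy of the concentration in the second bullet: the incoming invariant only gives $p(v)\ge d(v)+1$, so $d'(v)<p'(v)$ forces the concentration to be tight at the $\Theta(n^{-\delta})$ multiplicative scale, which is precisely why it matters both that non-mid nodes have degree $\ge n^{7\delta}$ (so the relevant means are $\ge n^{6\delta}$ and such fluctuations are polynomially large) and that $h_2$ uses $n^{\delta}-1$ rather than $n^{\delta}$ bins (so the gap between $\mu_p$ and $\mu_d$ does not close); once these are in place, squeezing the estimator-based derandomization into strongly sublinear local space is exactly the machinery provided by \cite{CDPsparse,CDP20}.
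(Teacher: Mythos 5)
The paper does not prove this lemma at all: it is imported verbatim as Lemma~4.6 of \cite{CDP20}, and your reconstruction follows exactly the route of that source (bounded-independence hash families with $\Theta(1/\delta)$-wise independence, tail bounds exploiting that non-mid nodes have degree, and hence palette size, at least $n^{7\delta}$, and the method of conditional expectations over $O(\log n)$-bit seeds as in \cite{CDPsparse}). One small overstatement: for a node landing in the last bin $j=n^{\delta}$ the \emph{second} bullet indeed holds with certainty since $p'(v)=p(v)>d(v)\ge d'(v)$, but the \emph{first} bullet ($d'(v)<2d(v)n^{-\delta}$) still requires the degree-concentration argument for $h_1$ restricted to $N(v)$ --- which your general argument does supply, so this is a phrasing slip rather than a gap.
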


Now, we are ready to prove our main result (\Cref{thm:main:deter}):
\begin{proof}[Proof of \Cref{thm:main:deter}]
As in \cite{CDP21}, calling \textsc{LowSpaceColorReduce} on our input graph creates a recursion tree of $O(1)$ depth (since each recursive call reduces the maximum degree by a $n^{-\delta}$ factor). It therefore creates $O(1)$ sequential sets of base-case instances to solve concurrently, which in \Cref{alg:LSColorReduce} are solved by \textsc{DerandomizedMidDegreeColor}. Furthermore, each set of concurrent instances has at most $n$ nodes in total, since all nodes are only partitioned into one instance, and each instance has maximum degree $n^{7\delta}$.

By \Cref{lem:middegreederand}, each such instance $G$ is colored in $O(\log\log\log n)$ rounds using $O(n^{7\delta cC})$ space per machine and $O(n_G\cdot n^{7\delta cC})$ global space. The global space used by all concurrent instances is therefore $O(n^{1+7\delta cC})$. Setting $\delta\le \frac{\spacexp}{7cC}$, this is $O(n^\spacexp)$ space per machine and $O(n^{1+\spacexp})$ global space. Since receiving the input and the first call to \textsc{LowSpacePartition} also requires $O(m)$ global space, the overall space bound is $O(m+n^{1+\spacexp})$.
\end{proof}

\bibliographystyle{alpha}
\bibliography{ref}

\end{document}